\newcolumntype{a}{>{\centering\arraybackslash}p{4.9em}}
\newcolumntype{q}{>{\centering\arraybackslash}p{7em}}
\newcolumntype{w}{>{\centering\arraybackslash}p{10em}}
\newcolumntype{C}{>{\centering\arraybackslash}p{6em}}
\renewcommand{\O}{\widetilde O}
\def\l1{\mathbf{L_1}}
\long\def\comment #1\commentend{}
\def\inline#1:{\par\vskip 7pt\noindent{\bf #1:}\hskip 10pt}
\title{Diameter Spanners,  Eccentricity Spanners, and Approximating Extremal Distances}
\author{Keerti Choudhary}
{Weizmann Institute of Science, Israel. E-mail: keerti.choudhary@weizmann.ac.il}{}{}{}
\author{Omer Gold}
{Tel Aviv University, Israel. E-mail: omergold@post.tau.ac.il}{}{}{}
\authorrunning{K. Choudhary and O. Gold}
\keywords{Diameter, Eccentricity, Spanner, Dynamic-graph-algorithms, Fault-tolerant.}
\newcommand{\inbfs}{\textsc{in-bfs}}
\newcommand{\outbfs}{\textsc{out-bfs}}
\newcommand{\nin}{N^{in}}
\newcommand{\nout}{N^{out}}
\newcommand{\A}{{\cal A}}
\newcommand{\B}{{\cal B}}
\newcommand{\dia}{\mathrm{diam}}
\newcommand{\rad}{\mathrm{rad}}
\newcommand{\inecc}{\textsc{i}\mathrm{n}\textsc{e}\mathrm{cc}}
\newcommand{\outecc}{\textsc{o}\mathrm{ut}\textsc{e}\mathrm{cc}}
\newcommand{\depth}{\textsc{depth}}
\newcommand{\inv}{N^{in}}
\newcommand{\maxD}{D_{{max}}}
\newcommand{\thD}{D_{0}}
\renewcommand{\O}{\widetilde O}
\renewcommand{\P}{{\cal P}}
\begin{document}
\maketitle

\newcommand\blfootnote[1]{%
  \begingroup
  \renewcommand\thefootnote{}\footnote{#1}%
  \addtocounter{footnote}{-1}%
  \endgroup
}
\blfootnote{A graph in this paper always refer to directed graph. 
Notations $n,m$ are used to denote the size of the vertex-set and the edge-set of a graph respectively.}
\blfootnote{$\O(\cdot)$ hides poly-logarithmic factors.}

\begin{abstract}
The diameter of a graph is one if its most important parameters, being used in many real-word applications.
In particular, the diameter dictates how fast information can spread throughout data and communication networks.  
Thus, it is a natural question to ask how much can we sparsify a graph and still guarantee that its diameter remains preserved within an approximation $t$. This property is captured by the notion of extremal-distance spanners. Given a graph $G=(V,E)$, a subgraph $H=(V,E_H)$ is defined to be a {\em $t$-diameter spanner} if the diameter of $H$ is at most $t$ times the diameter of $G$.

We show that for any $n$-vertex and $m$-edges directed graph $G$, we can compute a sparse subgraph $H$ that is a $(1.5)$-diameter spanner of $G$, such that $H$ contains at most $\O(n^{1.5})$ edges. We also show that the stretch factor cannot be improved to $(1.5-\epsilon)$. 
For a graph whose diameter is bounded by some constant, we show the existence of $\frac{5}{3}$-diameter spanner that contains at most $\O(n^\frac{4}{3})$ edges.
We also show that this bound is tight.

Additionally, we present other types of extremal-distance spanners, such as $2$-eccentricity spanners and $2$-radius spanners, both contain only $\O(n)$ edges and are computable in $\O(m)$ time.

Finally, we study extremal-distance spanners in the dynamic and fault-tolerant settings. An interesting implication of our work is the {\em first $\O(m)$-time algorithm} for computing $2$-approximation of vertex eccentricities in general directed weighted graphs. Backurs et al. [STOC 2018] gave an $\O(m\sqrt{n})$ time algorithm for this problem, and also showed that no $O(n^{2-o(1)})$ time algorithm can achieve an approximation factor better than $2$ for graph eccentricities, unless SETH fails; this shows that our approximation factor is essentially tight.
\end{abstract}

\section{Introduction}


In this paper, we initiate the study of  {\em Extremal Distance Spanner}.
The notion of spanners (also known as distance spanner) was first introduced and studied in 
\cite{Awerbuch85,PS89,PelegU89a}. 
A {\em spanner} (also known as distance spanner) of a graph $G=(V,E)$ is a sparse subgraph 
$H=(V,E_H)$ that approximately preserves pair-wise distances of the underlying graph $G$.  
Besides being theoretically interesting, they are known to have numerous applications in different areas of computer 
science such as distributed systems, communication networks and efficient routing schemes~\cite{Cowen01, 
Cowen04, PU89, RTZ02, TZ01, GS11, ALWWF03, GZ10}, motion planning~\cite{DB14, CSMOC15}, approximating 
shortest paths~\cite{Cohen98, Cohen00, Elkin01} and distance oracles~\cite{BS06, TZ05}. 

It is known that for any  integer $k \geq 1$, there exists for an undirected graph with $n$ vertices 
a $O(n^{1+1/k})$ edges spanner with multiplicative stretch $2k-1$.
The works of~\cite{RodittyTZ05,BS03} provided efficient constructions of such spanners.
It is also widely believed that this size-stretch trade-off is tight. 
Assuming the widely believed Erd{\"o}s girth conjecture~\cite{Erdo64}, this size-stretch trade-off is tight.
Other fascinating works have studied spanners for undirected graphs with additive 
stretch~\cite{BKMP10, Chechik2013, ACIM99, EP04,ThorupZ06}, spanners for different distance 
metrics~\cite{Cowen04, RTZ02, ALWWF03}, and so on. 
It was shown in \cite{EP04} that for any $\epsilon > 0$ and any $\kappa \geq 1$, there 
exists a $(1 + \epsilon, \beta)$-spanner for $n$-vertex unweighted graphs with 
$O(\beta n^{1+1/\kappa})$ edges, where $\beta = O(\log \frac{\kappa}{\epsilon})^{\log\kappa}$. 

Unfortunately, the landscape of distance spanners in the directed setting is far less understood.
This is because we cannot have sparse spanners for general directed graphs.
Even when underlying graph is strongly-connected, there exists graphs with $\Omega(n^2)$
edges such that excluding even a single edge from the graph results in a distance-spanner with 
stretch as high as diameter. In such a scenario, for directed graphs, a natural direction to study is 
construction of sparse subgraphs that approximately preserves the graph diameter.\vspace{1mm}

This brings us to the following central question.

\begin{question}
Given a directed graph $G = (V, E)$ and a ``stretch factor'' $t$, can we construct a sparse subgraph 
$H$ of $G$ such that the distance between any two vertices in $H$ is bounded by $t$
times the maximum distance in $G$? 
\end{question}

We define such graphs as {\em $t$-diameter spanners} as they essentially preserve the diameter up to a multiplicative factor $t$. 
We also consider the following related question of {\em $t$-eccentricity spanner}.

\begin{question}
Given a directed graph $G = (V, E)$ and a ``stretch factor'' $t$, can we construct a sparse subgraph 
$H$ of $G$ such that the eccentricity of each vertex $v$ in $H$ is at most  $t$
times the eccentricity of $v$ in $G$? 
\end{question}

We study the unexplored terrain of extremal-distance spanners for directed graphs. 
First, it is not clear if there exists a $t$-diameter spanner and $t$-eccentricity spanner
for directed graphs for small values of $t$~(typically close to $1$). 
Next, suppose there exists a $t$-diameter/eccentricity spanner for 
some parameter $t$, how can we construct such spanners efficiently?
Finally, we address the questions of maintaining  these spanners in the dynamic setting and the fault-tolerant setting, 
where the underlying graph changes with time. 

We believe that extremal-distance spanners are interesting mathematical objects in their own right.
Nevertheless, such a sparsification of graphs indeed suffices for many of the original applications of the well-studied standard graph spanners,
such as in communication networks, facility location problem, routing, etc.
In particular, diameter spanners with a sparse set of edges are good candidates for backbone networks~\cite{GZ10}. 
Our study of extremal-distance spanners has many additional implications that we present in the next subsection.

\subsection{Our Contributions}

In the following subsections, we present our results in detail.

\subsubsection{Diameter Spanners}

We show sparse diameter spanner constructions with various trade-offs between the size (number of edges) 
of the spanner and its {\em stretch} factor $t$, and provide efficient algorithms to construct such spanners.

We provide efficient construction of $1.5$-diameter spanners, 
and also show that our $1.5$-stretch diameter spanner construction is 
essentially tight for graphs whose diameter is bounded by $O(n^{1/4})$.

\begin{theorem}\label{theorem:3/2-diam-up-lb}
(a) There exists a Las Vegas algorithm that for any unweighted directed graph $G$, computes
a $1.5$-diameter spanner $H$ of $G$ with at most $O(n^{3/2}\sqrt{\log n})$ edges.
The computation time of $H$ is $\O(m\sqrt{n})$ with high probability.
If $G$ is edge-weighted, then $H$ satisfies the condition that $\dia(H)\leq 1.5 ~\dia(G)+W$,
where $W$ is an upper bound on the weight of edges in $G$.

(b) For every $n$ and every $D\leq (n^{1/4})$, there exists an unweighted directed graph $G$ 
with $\Theta(n)$ vertices and diameter $D$, such that any subgraph $H$ of $G$ that satisfies
$\dia(H)$ is strictly less than $1.5\hspace{1mm}\dia(G)-1$ contains at least $\Omega(n^{1.5})$ edges.
\end{theorem}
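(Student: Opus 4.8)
For part~(a), the plan is to combine a random hitting set with shortest-path trees. Fix a hop-threshold $h=\Theta(\sqrt{n\log n})$ and sample $R\subseteq V$ of size $\Theta((n/h)\log n)=\Theta(\sqrt{n\log n})$, so that with high probability $R$ meets a shortest path between every ordered pair at distance at least $h$ (a standard union bound over the $O(n^2)$ pairs). Put into $H$ the edges of a BFS in-tree and a BFS out-tree rooted at each $r\in R$; this is $O(|R|n)=O(n^{3/2}\sqrt{\log n})$ edges, computable in $\O(|R|m)=\O(m\sqrt n)$ time. Then every ordered pair $(u,v)$ with $\mathrm{dist}_G(u,v)\ge h$ is handled for free: a shortest $u\to v$ path meets some $r\in R$, and since $H$ holds the full in-tree and out-tree of $r$, we get $\mathrm{dist}_H(u,v)=\mathrm{dist}_G(u,r)+\mathrm{dist}_G(r,v)=\mathrm{dist}_G(u,v)\le\dia(G)$ --- already better than the target.

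It remains to route the pairs $(u,v)$ with $\mathrm{dist}_G(u,v)<h$, for which I would add a further $O(n^{3/2}\sqrt{\log n})$-edge ``local'' layer to $H$: truncated BFS trees around each vertex together with short paths from each vertex to the hitting-set vertices closest to it. Showing that this layer, combined with the BFS trees of $R$, lets one travel from $u$ to $v$ within $1.5\,\dia(G)$ is the technical heart of part~(a): the natural route from $u$ to $v$ through a nearby $r\in R$ must be analysed with care, because in a \emph{directed} graph the triangle inequality is one-sided and $r$ need not lie ``between'' $u$ and $v$, so one has to pit $r$ against a vertex that (nearly) realises $\dia(G)$ in order to extract the $\tfrac12\dia(G)$ of slack that turns the bound into the factor $1.5$. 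I expect this case analysis to be the main obstacle of part~(a). For edge-weighted graphs the identical scheme works with Dijkstra in place of BFS (the threshold $h$ is still measured in hops for the hitting step), and the single extra edge one may overshoot when snapping to the nearest $R$-vertex is precisely the additive $+W$ in the claim.

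For part~(b), the plan is an explicit ``spoke-and-core'' construction. Take $m=\Theta(n/D)$ sources $s_1,\dots,s_m$, each carrying a directed out-path (a \emph{source spoke}) of length $\Theta(D)$ ending in a vertex $a_i$, and $m$ sinks $t_1,\dots,t_m$, each carrying a directed in-path (a \emph{sink spoke}) of length $\Theta(D)$ starting at a vertex $b_j$. Join them by the complete bipartite \emph{core} $\{\,a_i\to b_j : i,j\in[m]\,\}$, and add one hub vertex $r$ together with a sparse set of shortcut edges: from the midpoint of each sink spoke to $r$, and from $r$ to every source and to the midpoint of each source spoke. Choosing the spoke lengths as suitable fractions of $D$ makes the graph strongly connected with $\dia(G)=D$; it has $\Theta(mD)=\Theta(n)$ vertices, and it has $\Theta(m^2)=\Theta((n/D)^2)$ edges, which is $\Omega(n^{3/2})$ precisely when $D\le n^{1/4}$ --- this balance is exactly where the stated range for $D$ comes from.

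The heart of the lower bound is then a single combinatorial claim: if a subgraph $H$ is missing a core edge $a_i\to b_j$, then $\mathrm{dist}_H(s_i,t_j)\ge 1.5\,\dia(G)-1$. With that edge gone, the only route from $s_i$ to $t_j$ is forced to descend $s_i$'s spoke to $a_i$, cross the core to some $b_k$ with $k\ne j$, descend to that sink spoke's midpoint, pass through $r$, re-enter a fresh source spoke at its midpoint, cross the core into $b_j$, and finally descend $b_j$'s spoke to $t_j$; with the spoke lengths calibrated, this detour has length $1.5\,\dia(G)+O(1)$, and the shortcut edges were chosen sparingly enough that they neither shorten this particular route nor push any other ordered pair past distance $\dia(G)$. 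Hence any $H$ with $\dia(H)<1.5\,\dia(G)-1$ must retain all $\Theta((n/D)^2)=\Omega(n^{3/2})$ core edges. The obstacle here is purely in the calibration --- picking the spoke lengths and shortcut endpoints so that $\dia(G)$ equals $D$ on the nose, the forced detour is still at least $1.5D-1$, and no shortcut accidentally helps a critical pair --- and verifying all the relevant pairwise distances, which is routine but has to be carried out carefully.
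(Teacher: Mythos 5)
Part~(a): the step you defer as ``the technical heart'' is precisely the paper's one key idea, and without it your scheme does not go through. Hitting shortest paths of length at least $h=\Theta(\sqrt{n\log n})$ buys nothing in the hard regime $\dia(G)=o(\sqrt n)$ (exactly the regime where part~(b) shows the bound is tight): there are simply no pairs at distance $\ge h$, so your first layer handles only vacuous cases and everything rests on the unspecified ``local layer,'' for which you give neither a construction within the $O(n^{3/2}\sqrt{\log n})$ budget nor an argument. The paper needs no distance threshold at all. It samples $S_1$ of size $\sqrt{n\log n}$, takes $a$ to be the deepest vertex of $\outbfs(S_1)$, and sets $S_2=\nin(a,\sqrt{n\log n})$, which w.h.p.\ contains a vertex of $S_1$. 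Then a dichotomy holds: either $\depth(\outbfs(S_1))\le\lfloor D/2\rfloor$, so every $y$ is within $\lfloor D/2\rfloor$ of $S_1$ and the route $x\to s\to y$ with $s\in S_1$ costs at most $1.5D$; or $a$ lies at distance more than $\lfloor D/2\rfloor$ from $S_1$, whence the in-ball $\inbfs(a,\lfloor D/2\rfloor)$ avoids $S_1$ and is therefore contained in $S_2$, so every $x$ (which reaches $a$ within $D$) reaches $S_2$ within $\lceil D/2\rceil$ and routing through $S_2$ costs at most $\lceil 1.5D\rceil$ (at most $1.5D+W$ when weighted). The spanner is just the in/out BFS (or Dijkstra) trees rooted at $S_1\cup S_2$, giving the stated size and $\O(m\sqrt n)$ time. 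Your vague remark about ``pitting $r$ against a vertex that nearly realises $\dia(G)$'' points in this direction, but the fallback set $S_2$ and its in-domination argument are exactly what is missing.

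Part~(b): your core idea (a complete bipartite core between in-path and out-path gadgets, where a missing core edge forces a long detour) is the same as the paper's, but the sparse hub-based return structure fails as described. First, vertices beyond the midpoint of a sink spoke, including every $t_j$, have no outgoing edge toward the rest of the graph, so $G$ is not even strongly connected. Second, even after adding exits such as $t_j\to r$, funnelling all returns through a single hub with midpoint-only entries inflates the diameter: with spoke lengths $\ell_s,\ell_t$, a pair such as $(b_k,t_{j'})$ costs about $1.5\ell_t+0.5\ell_s$, while $(s_i,\,$a vertex just before the midpoint of another source spoke$)$ costs about $1.5\ell_s+0.5\ell_t$; keeping both at most $\ell_s+\ell_t+1$ forces $\ell_s\geq\ell_t+4$ and simultaneously $\ell_t\geq\ell_s+2$ (up to constants), which is contradictory. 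So $\dia(G)$ is not the value you intend, and whether the forced detour still exceeds $1.5\,\dia(G)-1$ becomes a delicate constant chase you have not carried out. The paper sidesteps all of this by making the return structure dense rather than sparse: it adds an edge from \emph{every} vertex of $B\cup C\cup D$ to the first vertex of \emph{every} source path (dense return edges cost nothing in a lower bound). Then every pair is within $2(t+1)$, deleting any single $B\times C$ edge forces distance $3t+2$ between the corresponding endpoints, and the $N^2=\Omega(n^2/D^2)$ core edges give $\Omega(n^{1.5})$ for $D\le n^{1/4}$.
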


For the scenario when $D=o(\sqrt{n})$, we provide a construction of 
$5/3$-diameter spanners that are sparser than the $1.5$-diameter spanners.
We also show that our $5/3$-stretch spanner construction is tight, for graphs whose
diameter is bounded by $o(n^{1/11})$.

\begin{theorem}\label{theorem:5/3-diam-up-lb}
(a) There exists a Las Vegas algorithm that for any directed graph $G$ having diameter~$D~\leq \sqrt n$, computes 
a $5/3$-diameter spanner $H$ of $G$ that contains at most $O(n^\frac{4}{3}D^{\frac{1}{3}}\log^{\frac{2}{3}} n)$ edges.
The computation time of $H$ is $\widetilde O(mn^{2/3}D^{2/3})$ with high probability%
\footnote{Though the computation time of $H$ is a function of $D$, the algorithm does not 
need to apriori know the value $D$.}.

(b) For every $n$ and every $D\leq (n^{1/11})$, there exists a unweighted directed graph $G$ 
with $\Theta(n)$ vertices and diameter $D$, such that any subgraph $H$ of $G$ for which
$\dia(H)$ is strictly less than $(5/3\hspace{1mm}\dia(G)-1)$ contains at least $\Omega(n^{4/3}D^{1/3})$ edges.
\end{theorem}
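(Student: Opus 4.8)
Both parts lift the ideas behind Theorem~\ref{theorem:3/2-diam-up-lb} by inserting one extra ``layer'' that exploits the promise $D\le\sqrt n$.

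\smallskip
\noindent\emph{Part (a), the construction.} I would fix two nested scale parameters $r_1=\Theta\big((nD)^{1/3}\log^{2/3}n\big)$ and $r_2=\Theta\big(n^{2/3}\log^{1/3}n/D^{1/3}\big)$; the inequality $r_1\le r_2$ holds precisely when $D\le\sqrt n$ (up to logarithmic factors), which is exactly the regime in which the two scales are genuinely distinct. By the standard ``hitting set by random sampling'' argument one gets a set $S_1$ of $\O(n/r_1)$ vertices that, w.h.p., meets the $r_1$ vertices nearest to $v$ in both the out- and the in-sense for every $v$, and a set $S_2$ of $\O(n/r_2)$ vertices meeting the analogous $r_2$-nearest sets; the Las Vegas guarantee comes from verifying these hitting properties (using the truncated BFS computations below) and resampling on the rare failure. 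The spanner $H$ is the union of: (i) for every $v\in V$, a partial in-BFS tree and a partial out-BFS tree of $v$ spanning its $r_1$ nearest in-, resp. out-, vertices; (ii) for every $s\in S_1$, partial in-/out-BFS trees spanning its $r_2$ nearest vertices; (iii) for every $s\in S_2$, full in-/out-BFS trees. The edge count is $\O\big(nr_1+(n/r_1)\,r_2+(n/r_2)\,n\big)$; with these $r_1,r_2$ the first and third terms are each $\Theta(n^{4/3}D^{1/3}\log^{2/3}n)$ and the middle one is smaller by a factor $\Theta(D)$, giving the claimed bound. For the running time, a partial BFS to the $k$ nearest vertices can be halted after scanning $O(\min(k^2,m))$ edges (stop once $k$ vertices are discovered; a vertex of out-degree $\ge k$ already supplies them), a full BFS costs $O(m)$ and never needs to explore past depth $D=\dia(G)$, and summing the three families yields $\O(m\,n^{2/3}D^{2/3})$; a doubling guess on $D$ removes the need to know it a priori.

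\smallskip
\noindent\emph{Part (a), the stretch.} Take an ordered pair $(u,w)$ with $\ell:=d_G(u,w)\le D$ and a shortest path $P=(u=x_0,\dots,x_\ell=w)$. The partial out-BFS of $u$ and partial in-BFS of $w$ at scale $r_1$ span an initial segment $x_0,\dots,x_p$ and a final segment $x_q,\dots,x_\ell$ of $P$. If $p\ge q$ these segments share a path vertex, and stitching the two BFS paths through it gives $d_H(u,w)\le\ell\le D$. Otherwise $P$ has an ``$r_1$-uncovered'' middle; the failure to span $x_{p+1}$ (resp. $x_{q-1}$) certifies that $u$ has $\ge r_1$ out-vertices within distance $p{+}1$ (resp. $w$ has $\ge r_1$ in-vertices within distance $\ell{-}q{+}1$), so $S_1$ hands us a vertex that is both near the relevant cut point of $P$ and reachable in $H$ at its exact $G$-distance; from there one repeats the same dichotomy one scale up, where a second ``uncovered'' certificate produces $\ge r_2$ nearby vertices and hence a vertex of $S_2$ — which carries a \emph{full} BFS tree in $H$ — that closes the route. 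Bookkeeping the three possible ``capture depths'' of $P$ gives $d_H(u,w)\le\tfrac53\dia(G)$, as required, with the lower bound of part~(b) showing the ``$-1$'' slack is essentially all there is. The main obstacle is exactly that $G$ is directed: a sampled vertex close to $u$ need not lie ``on the way'' to $w$, so one must ensure the vertex $S_1$ (or $S_2$) supplies is close to an actual vertex of $P$ — which is why we keep both in- and out-versions of every structure and always cut $P$ at a vertex rather than at an edge. Pushing the constant down to $5/3$ (and not $2$) is where the real work lies.

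\smallskip
\noindent\emph{Part (b).} I would build a strongly connected digraph whose edges are ``diameter-critical in bulk'': dropping any constant fraction of them drives the diameter from $D$ to $(5/3-o(1))D$. The dense core is a bipartite graph of girth $\ge 8$ (equivalently $\{C_4,C_6\}$-free), for which the known extremal constructions have $\Theta(N^{4/3})$ edges; girth $8$ forces that a length-$2$ connection between two same-side vertices is unique and that the shortest alternative has length $\ge 6$, so each core edge is individually irreplaceable without a $+\Omega(1)$ detour. Wrapping the core in a layered directed ``amplifier'' of length $\Theta(D)$ — so that the short routes between extreme vertices are forced through one designated core edge per layer — turns ``a $\Omega(1)$-detour is forced in a constant fraction of the $\Theta(D)$ layers'' into a $\Theta(D)$-additive, i.e. $5/3$-multiplicative, blow-up of the longest shortest path, while a pigeonhole argument over the core edges shows that every $H$ with $o(n^{4/3}D^{1/3})$ edges does force such detours. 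Sizing the pieces so that the core alone contributes $\Omega(n^{4/3}D^{1/3})$ edges while the total vertex budget (core, amplifier, and return paths) stays $\Theta(n)$ is exactly what pins the admissible range to $D\le n^{1/11}$. The technical crux is the amplifier: it must be strongly connected, have diameter exactly $D$, route shortest paths through the critical core edges, and cost only $\Theta(n)$ vertices, all simultaneously — the directed analogue, with a girth-$8$ core, of the $n^{1/4}$-regime construction behind Theorem~\ref{theorem:3/2-diam-up-lb}(b).
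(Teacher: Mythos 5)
Your part (a) has a genuine gap exactly where you admit ``the real work lies.'' In a directed graph, the vertex $s\in S_1$ hitting $N^{out}(u,r_1)$ is close \emph{from} $u$, but nothing bounds its distance \emph{to} the cut point $x_{p+1}$ of $P$: the triangle inequality you would need, $d(s,x_{p+1})\le d(s,u)+d(u,x_{p+1})$, uses the reverse distance $d(s,u)$, which can be $\Theta(D)$. So after the two hops $u\to s\in S_1\to s'\in S_2$ all you can conclude is $d_H(u,w)\le d(u,s)+d(s,s')+D$, and your case analysis has no mechanism forcing $d(u,s)+d(s,s')\le 2D/3$; this two-scale ball/hitting-set hierarchy is precisely the Cairo--Grossi--Rizzi scheme that the paper remarks is provably unachievable for directed graphs beyond the first level ($k>1$). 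The paper's construction (Algorithm~\ref{Algorithm:diam-5/3}) contains an ingredient your spanner lacks: it builds, via Lemma~\ref{lemma:dom-set-rand}, two dominating set-pairs $(A_1,A_2)$ of type $\langle\lceil 2D/3\rceil,D/3\rangle$ and $(B_1,B_2)$ of type $\langle D/3,\lceil 2D/3\rceil\rangle$ with sizes $\langle\alpha\log n, n/\alpha\rangle$ and $\langle n/\alpha,\alpha\log n\rangle$, keeps full trees only from the two \emph{small} sets, and --- crucially --- stores the exact shortest paths for every pair in $A_2\times B_1$ ($O(n^2D/\alpha^2)$ edges, which is where the $D^{1/3}$ in the size genuinely comes from after optimizing $\alpha=\Theta((nD/\log n)^{1/3})$). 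In the bad--bad case the route is $x\to A_2$ (cost $\le D/3$), a stored exact path (cost $\le D$), then $B_1\to y$ (cost $\le D/3$). That your $r_1,r_2$ carry an artificial $D$-dependence, leaving the middle term unbalanced by a factor $D$, is a symptom that your edge set does not natively produce the $n^{4/3}D^{1/3}$ trade-off.

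Part (b) also fails quantitatively: a $\{C_4,C_6\}$-free bipartite core on at most $n$ vertices has only $O(n^{4/3})$ edges, so no wrapping of it can certify $\Omega(n^{4/3}D^{1/3})$ once $D=\omega(1)$, let alone the $\Omega(n^{1.5}/D^{1.5})$ bound the paper actually proves --- and it is the comparison of $n^{1.5}/D^{1.5}$ with the target $n^{4/3}D^{1/3}$ that yields the threshold $D\le n^{1/11}$, not a vertex-budget calculation for a girth-$8$ core. Moreover, a single missing core edge forces only a $+O(1)$ detour; to reach multiplicative stretch $5/3$ you need one source--sink pair forced through $\Theta(D)$ missing edges of an \emph{adversarially chosen} subgraph, which a pigeonhole over core edges does not deliver. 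The paper's construction is different in kind: three blocks of $N^2(t+1)$ vertices indexed by $(k,i,j)$ with grid-matching cross edges ($a_{t,i,j}\to b_{0,i',j}$, $b_{t,i,j}\to c_{0,i,j'}$, plus return edges); for each of the $\sim N^4$ quadruples $(i_x,j_x,i_y,j_y)$, omitting a designated triple of edges forces distance $5t+4$, and since each edge serves only $O(N)$ quadruples, any subgraph of diameter at most $5t+3$ keeps $\Omega(N^3)=\Omega(n^{1.5}/t^{1.5})$ edges. (Note also that the paper's $1.5$ lower bound, Theorem~\ref{theorem:3/2-diam-up-lb}(b), uses a complete bipartite core, not a high-girth one, so the analogy you invoke is off as well.)
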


We also show that for any directed graph $G$ we can either 
(i) compute a diameter spanner with arbitrarily low stretch, or
(ii) compute a diameter spanner with arbitrarily low size.

\begin{theorem}
For any arbitrarily small fractions $\epsilon,\delta>0$, and any given directed graph $G$, 
in~$\widetilde O(mn^{1-\epsilon})$ expected time, at least one of the following subgraphs
can be computed. \\[1mm]
(i) a $(1+\delta)$-diameter spanner of $G$ containing at most $O(n^{2-\epsilon}\sqrt{\log n})$ edges.\\[1mm]
(ii) a $(2-\delta)$-diameter spanner of $G$ containing at most $O(n^{1+\epsilon}\sqrt{\log n})$ edges.
\end{theorem}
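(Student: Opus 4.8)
The plan is to combine a careful diameter-estimation step with the two diameter-spanner constructions already at our disposal. Let $D = \dia(G)$. We first spend $\widetilde O(mn^{1-\epsilon})$ time to test a threshold on $D$: run BFS from a small random sample of $r = \widetilde O(n^{1-\epsilon})$ vertices and take the largest out-eccentricity seen, call it $\widehat D$. A standard sampling argument shows that with high probability every vertex whose out-eccentricity is at least (roughly) $D/2$ is ``hit'' as a close ancestor of some sampled source, so $\widehat D$ is within a factor $2$ of $D$; more carefully, one uses the sample to certify whether $D$ is ``large'' (say $D \ge n^{\epsilon}$) or ``small'' ($D = O(n^{\epsilon})$, up to logarithmic slack). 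This dichotomy on $D$ is what drives which of the two output subgraphs we produce, and it is the one place where randomness and the $\widetilde O(mn^{1-\epsilon})$ time budget are both essential.

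\smallskip

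\emph{Case 1: $D$ is large, i.e.\ $D \ge n^{\epsilon}$ (up to log factors).} Here we build a $(1+\delta)$-diameter spanner of size $O(n^{2-\epsilon}\sqrt{\log n})$ by a standard hitting-set-plus-shortcut construction. Sample a set $S$ of $\widetilde O(n^{1-\epsilon})$ vertices; with high probability $S$ hits every shortest path of length $\ge n^{\epsilon}\ge \delta D$. Compute in-BFS and out-BFS trees from each $s \in S$ and add all their edges to $H$: this contributes $O(n \cdot n^{1-\epsilon}\sqrt{\log n}) = O(n^{2-\epsilon}\sqrt{\log n})$ edges. For any pair $u,v$, the $u$–$v$ shortest path in $G$ passes through some $s\in S$, so $d_H(u,v)\le d_H(u,s)+d_H(s,v) = d_G(u,s)+d_G(s,v)$, and by re-routing the at most $\delta D$ ``missing'' suffix/prefix through $s$ one gets $d_H(u,v)\le d_G(u,v) + 2\delta D$, hence $\dia(H)\le (1+2\delta)D$; rescaling $\delta$ gives a genuine $(1+\delta)$-diameter spanner. (In the weighted case one adds a $+W$ additive term exactly as in Theorem~\ref{theorem:3/2-diam-up-lb}(a); since the statement only claims a multiplicative bound and $D$ is large, $W$ is absorbed.)

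\smallskip

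\emph{Case 2: $D$ is small, i.e.\ $D = O(n^{\epsilon})$.} Now we invoke Theorem~\ref{theorem:5/3-diam-up-lb}(a) — whose hypothesis $D \le \sqrt n$ is satisfied — to get, in $\widetilde O(m n^{2/3}D^{2/3}) = \widetilde O(m n^{2/3 + 2\epsilon/3})$ time a $5/3$-diameter spanner with $O(n^{4/3}D^{1/3}\log^{2/3}n) = O(n^{4/3+\epsilon/3}\log^{2/3}n)$ edges. Since $\epsilon$ can be taken $<1$, both bounds are comfortably within the required $O(n^{1+\epsilon}\sqrt{\log n})$ edges and $\widetilde O(mn^{1-\epsilon})$ time once $\epsilon$ is small (for larger $\epsilon$ one instead falls into Case~1, so the threshold can be chosen so the two cases jointly cover all graphs). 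A $5/3$-diameter spanner is in particular a $(2-\delta)$-diameter spanner for every $\delta \le 1/3$; for $\delta > 1/3$ the claim is weaker still and the same construction works. Thus in Case~2 we output a $(2-\delta)$-diameter spanner of the claimed size.

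\smallskip

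The main obstacle is making the threshold clean: the sampling step only estimates $D$ up to a constant factor, so one must choose the cutoff (and the sample size $\widetilde O(n^{1-\epsilon})$) so that \emph{whichever} case the algorithm commits to, the size and time bounds of the corresponding construction are met — i.e.\ the ``large $D$'' branch must tolerate $D$ as small as $\sim n^{\epsilon}$ (which it does, since its guarantees are monotone and worst at small $D$), and the ``small $D$'' branch must tolerate $D$ as large as $\sim n^{\epsilon}$ (which forces the $n^{2/3}D^{2/3}$ and $n^{4/3}D^{1/3}$ terms to stay below $n^{1-\epsilon}\cdot\operatorname{poly}$ and $n^{1+\epsilon}$ respectively, pinning down the admissible range of $\epsilon$ and the exact location of the threshold). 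Verifying these inequalities — and the high-probability correctness of the hitting-set argument, which gives the Las-Vegas/expected-time guarantee after a standard retry loop — is routine once the threshold is fixed.
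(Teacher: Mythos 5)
There is a genuine gap, and it sits at the heart of your Case~2. You invoke the $5/3$-diameter spanner of Theorem~\ref{theorem:5/3-diam-up-lb}(a), which has $\Theta(n^{4/3}D^{1/3}\log^{2/3}n)$ edges, and claim this is ``comfortably within'' the required $O(n^{1+\epsilon}\sqrt{\log n})$ once $\epsilon$ is small. The inequality goes the other way: $n^{4/3+\epsilon/3}\leq n^{1+\epsilon}$ holds only when $\epsilon\geq 1/2$, so in the regime the theorem is actually about (arbitrarily small $\epsilon$) your Case~2 output has far too many edges, e.g.\ $n^{1.34}$ versus the required $n^{1.01}$ for $\epsilon=0.01$. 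Your escape hatch (``for larger $\epsilon$ one instead falls into Case~1'') does not parse, since the case split is on $D$, not on $\epsilon$, and $\epsilon$ is given as part of the input to the theorem. A smaller slip in the same case: for $\delta>1/3$ a $(2-\delta)$-spanner is a \emph{stronger} requirement than a $5/3$-spanner, not a weaker one. More fundamentally, the diameter-threshold dichotomy is the wrong decomposition: which of outputs (i) and (ii) is achievable is not governed by whether $D$ is large or small, so no choice of threshold can make your two branches cover all graphs with the stated size bounds.

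The paper's proof needs no estimate of $D$ at all. It applies the dominating-set-pair machinery (Lemma~\ref{lemma:dom-set-rand}, instantiated in Theorem~\ref{theorem:pq_diam}) with fractions $p=\delta$, $r=1-\delta$ and size bounds $n_p=\Theta(n^{1-\epsilon}\sqrt{\log n})$, $n_r=\Theta(n^{\epsilon}\sqrt{\log n})$ (so $n_pn_r=\Theta(n\log n)$): sample $S_1$ of size $n_p$, let $a$ be the deepest vertex in $\outbfs(S_1)$ and $S_2=\nin(a,n_r)$. Either $S_1$ is $\lfloor\delta\,\inecc(a)\rfloor$-out-dominating, in which case the union of BFS trees rooted at $S_1$ is a $(1+\delta)$-diameter spanner with $O(n^{2-\epsilon}\sqrt{\log n})$ edges, or the truncated in-ball of $a$ misses $S_1$, hence is contained in $S_2$, making $S_2$ a $\lceil(1-\delta)\inecc(a)\rceil$-in-dominating set and yielding a $(2-\delta)$-diameter spanner with $O(n^{1+\epsilon}\sqrt{\log n})$ edges; the expected time is $O(m\max\{n_p,n_r\})=\widetilde O(mn^{1-\epsilon})$. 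Your Case~1 is essentially a special case of the first branch (modulo a $\log n$ versus $\sqrt{\log n}$ factor from the sample size), but without the dominating-set-pair dichotomy there is no valid replacement for the second branch.
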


In Theorem~\ref{theorem:3/2-diam-up-lb} and Theorem~\ref{theorem:5/3-diam-up-lb},
we show a lower bound on the number of edges in diameter spanners of stretch respectively $3/2$ and $5/3$,
for graphs with low diameter. A natural question to ask here is if for graphs with large diameter it is possible to
obtain diameter spanners with low stretch~(ideally $1+o(1)$) that are also sparse~(ideally having $\O(n)$ edges)
in nature. The next theorem positively answers this question.

\begin{theorem}
For any directed graph $G=(V,E)$ satisfying $\dia(G)=\omega(n^{5/6})$, we can compute a subgraph $H=(V,E')$ 
with $O(n\log^2n)$ edges satisfying $\dia(H)\leq (1+o(1))\dia(G)$.
\end{theorem}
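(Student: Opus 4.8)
The plan is to fix a radius parameter $\rho$ (ultimately $\rho=\Theta(n^{5/6})$), sample a set $R\subseteq V$ of $\widetilde{O}(n/\rho)$ ``centers'' uniformly at random, and assemble $H$ from two pieces. (We may assume $G$ is strongly connected, so $\dia(G)=D$ is finite; otherwise restrict attention to the relevant strongly connected component.) The first piece is a pair of \emph{stub forests}: for every vertex $v$ put into $H$ a single out-edge lying on a shortest path from $v$ to its nearest center, and symmetrically a single in-edge. This is $O(n)$ edges in total, yet following the out-edges from any $v$ reaches a center in at most $d_G(v,R)\le\rho$ steps --- the out-stub edges form a functional graph on which the potential $d_G(\cdot,R^{out})$ strictly decreases along each edge, so the walk must terminate in $R$ --- and symmetrically some center reaches $v$ within $\rho$ steps. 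The second piece is a \emph{backbone}: let $\widehat G$ be the weighted graph on vertex set $R$ with an edge $rr'$ of weight $d_G(r,r')$ whenever $d_G(r,r')\le 2\rho$, and keep in $H$ a sparse subgraph $\widehat H$ of $\widehat G$, each of whose edges $rr'$ is realized by a stored shortest $G$-path of length $\le 2\rho$. A standard Chernoff--union bound shows that w.h.p.\ $R$ is simultaneously a $\rho$-dominating set (in and out) and a hitting set for all shortest paths of length $\ge\rho$ (vertices whose balls are smaller than $\rho$ are harmless, since the relevant distances are then already $\le\rho=o(D)$).

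Two facts would drive the analysis. First, $\widehat G$ is strongly connected with weighted diameter $\le D$: given $r,r'\in R$, walk along a shortest $r$--$r'$ path in $G$; the hitting-set property places centers on this path at most $2\rho$ apart, so $\widehat G$ has an edge between consecutive such centers, and because they lie on the geodesic in order the resulting $\widehat G$-path has total weight exactly $d_G(r,r')$. Second --- and this is what sidesteps the usual obstruction to sparse directed spanners --- to bound $d_H(u,v)$ for an arbitrary pair we route $u\rightsquigarrow c^{+}(u)\rightsquigarrow c^{-}(v)\rightsquigarrow v$, where $c^{+}(u)$ is the center reached from $u$ along the out-stubs, $c^{-}(v)$ is the center reaching $v$ along the in-stubs, and the middle is a path in $\widehat H$. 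We make no attempt to preserve the particular value $d_G(u,v)$; we only bound the middle by $\dia(\widehat H)$. Hence, once $\widehat H$ is built so that $\dia(\widehat H)\le(1+o(1))\dia(\widehat G)\le(1+o(1))D$, every such route has length at most $2\rho+(1+o(1))D=(1+o(1))D$ because $\rho=o(D)$, which gives $\dia(H)\le(1+o(1))D$; and expanding each $\widehat H$-edge costs $\le 2\rho$ edges of $G$, so $|E(H)|=O(n)+2\rho\cdot|E(\widehat H)|$.

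The whole difficulty is thereby reduced to producing $\widehat H$: a subgraph of $\widehat G$ with only $\widetilde{O}(|R|)=\widetilde{O}(n/\rho)$ edges whose diameter is within a $1+o(1)$ factor of $\dia(\widehat G)$, so that the blow-up above keeps $|E(H)|$ near-linear. This is essentially the original problem again, but on a graph with far fewer vertices ($\widetilde{O}(n/\rho)$) while the diameter is still $\Theta(D)$ --- so the ratio ``diameter to number of vertices'' has improved and one can afford to iterate the construction (or to invoke the sparser low-stretch constructions behind Theorem~\ref{theorem:3/2-diam-up-lb} and Theorem~\ref{theorem:5/3-diam-up-lb}) on $\widehat G$. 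I expect the main obstacle to be the bookkeeping across iterations: each level contributes its own stub radius to the additive error and each backbone edge expands to a longer $G$-path, so the radii (hence the recursion depth) must be chosen so that the accumulated additive loss stays $o(D)$ while the accumulated edge blow-up stays $\widetilde{O}(n)$; a secondary technical point is that centers with very large reachability balls produce dense backbones and must be handled separately. Balancing these constraints is exactly what pins down the threshold $\dia(G)=\omega(n^{5/6})$: enlarging $\rho$ thins out $R$ (helping the backbone) but lengthens the stubs (hurting the error), and the two costs cross at $\rho=\Theta(n^{5/6})$.
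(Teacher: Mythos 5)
Your first half matches the paper's construction: a random hitting set of $\widetilde{O}(n^{1/6})$ centers, $O(n)$ edges of in/out shortest-path forests toward the set, and routing $u\rightsquigarrow c^{+}(u)\rightsquigarrow c^{-}(v)\rightsquigarrow v$ so that only an additive $o(D)$ is lost at the two ends. But the step you defer --- building a subgraph $\widehat H$ with $\widetilde{O}(n^{1/6})$ backbone edges whose diameter is $(1+o(1))\dia(\widehat G)$ --- is the entire difficulty, and the fallbacks you suggest do not close it. Invoking the constructions behind Theorem~\ref{theorem:3/2-diam-up-lb} or Theorem~\ref{theorem:5/3-diam-up-lb} on the backbone gives multiplicative stretch $1.5$ or $5/3$ on the middle segment, whose weighted diameter is $\Theta(D)$, so the final bound degrades to $1.5D$ rather than $(1+o(1))D$. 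And iterating your own construction on the weighted backbone is not analyzed and is not obviously possible: the obstruction to sparse directed spanners does not disappear just because the weighted diameter is large relative to the vertex count --- the relevant quantity is the hop structure, and the paper's own lower-bound graphs (given uniform large edge weights) show that a directed graph on $N$ vertices can have huge weighted diameter yet admit no $\widetilde{O}(N)$-edge subgraph with diameter stretch below $1.5$. So "the ratio of diameter to number of vertices has improved, hence recurse" is precisely the claim that needs proof, and the bookkeeping you flag (stub radii versus path blow-up across levels) is not a secondary technicality but the unresolved core.

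The paper avoids this recursion entirely with one external ingredient you are missing: Bodwin's pairwise distance preservers. After computing the hitting set $S$ of size $O(d\log n)$ with $d=n^{1/6}$ (so every vertex has a center among its $n/2d$ nearest in- and out-neighbors), it adds to $H$ the trees $\inbfs(S)$ and $\outbfs(S)$ plus a subgraph preserving \emph{exactly} the distances of all pairs in $S\times S$, of size $O(n+n^{2/3}|S|^{2})=O(n\log^{2}n)$. Because the preserver's size depends only on the number of pairs and not on how long the center-to-center paths are, there is no need for your $2\rho$-truncated backbone edges or any bound on $\dia(\widehat G)$ beyond $D$; the middle leg is exact, and the only loss is the additive $n/d=n^{5/6}$ from the two stubs, which is $o(D)$ exactly when $\dia(G)=\omega(n^{5/6})$. (The paper's Lemma also covers the regime of larger $d$ by simply adding $\outbfs(s)$ for all $s\in S$, but for this theorem the preserver case with $d=n^{1/6}$ is what is used.) This is where the $n^{5/6}$ threshold really comes from: balancing the preserver size $n^{2/3}d^{2}$ against the additive error $n/d$, not from the radius/recursion trade-off you describe.
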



\subsubsection{Dynamic Maintenance of Diameter Spanners}\label{sec:dynamic-spanners}
We obtain the following dynamic algorithm for maintaining a $(1.5+\epsilon)$-diameter spanner
incrementally, as well as decrementally.

\begin{theorem}
For any $\epsilon>0$ and $n$-vertex directed graph, there exists an incremental (and decremental) algorithm that
maintains a $(1.5+\epsilon)$-diameter spanner that consists at most $O(n^{1.5}\sqrt{\log n})$ edges.
The expected amortized update time of the algorithm is $\O(\sqrt{n}/\epsilon^{2}~D_0)$
for the incremental setting and $\O(\sqrt{n}/\epsilon~D_0)$ for the decremental setting,	
where $D_0$ denotes an upper bound on the diameter of the graph throughout the run of the algorithm.
\end{theorem}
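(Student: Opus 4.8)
The plan is to dynamize the static $1.5$-diameter spanner of Theorem~\ref{theorem:3/2-diam-up-lb}(a) piece by piece. Recall that that construction draws a single random sample $S\subseteq V$ of $\O(\sqrt n)$ vertices and lets $H$ be the union of two families of trees: (i) the incoming and outgoing BFS trees rooted at every $s\in S$, truncated at depth $D_0$, and (ii) for every $v\in V$, the incoming and outgoing shortest-path trees of $v$ restricted to the $\O(\sqrt n)$ vertices nearest to $v$ (equivalently, grown until $\O(\sqrt n)$ vertices are reached or depth $D_0$ is exceeded). Together these have $O(n^{1.5}\sqrt{\log n})$ edges, and the correctness argument of Theorem~\ref{theorem:3/2-diam-up-lb} certifies $\dia(H)\le 1.5\,\dia(G)$ as soon as $S$ hits every one of the ``nearest-$\O(\sqrt n)$'' balls and the maintained trees reflect the current graph. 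Hence it suffices to maintain the two families under the updates, to re-establish the hitting-set property for the whole sequence, and to charge to the extra $\epsilon$ the error incurred by building the trees against a diameter estimate that is only within a $(1+\epsilon)$ factor of the truth.

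The sample is drawn once. Because distances are monotone --- decreasing under insertions, increasing under deletions --- each of the $O(n)$ nearest-$\O(\sqrt n)$ balls passes through only $\mathrm{poly}(n)$ distinct states over the entire run, so only $\mathrm{poly}(n)$ balls ever need to be hit and a union bound shows a sample of the stated size works throughout with high probability. Family~(i) is maintained by running one decremental (respectively incremental) Even--Shiloach tree, capped at depth $D_0$, from each of the $\O(\sqrt n)$ roots; a single such tree costs $O(mD_0)$ over the whole update sequence, so family~(i) costs $\O(\sqrt n\,m\,D_0)$ in total, i.e.\ $\O(\sqrt n\,D_0)$ amortized per update. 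In parallel we keep a $(1+\epsilon)$-approximation of the current diameter and rebuild the whole structure whenever that estimate moves by a $(1+\epsilon)$ factor; this keeps the stretch at $1.5+\epsilon$ (a tree built for a diameter estimate within $(1+\epsilon)$ of $\dia(G)$ inflates the static bound only to $1.5+\epsilon$), and it is also what introduces the $\epsilon^{-1}$ (respectively $\epsilon^{-2}$) factor into the update time, the extra $\epsilon^{-1}$ in the incremental case coming from the bookkeeping needed to \emph{shrink} rather than grow the balls and Even--Shiloach trees when an edge is inserted.

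The hard part will be maintaining family~(ii) inside the same $\O(\sqrt n\,D_0)$-per-update budget. Done naively, the truncated shortest-path tree owned by a single vertex $v$ can be driven down to depth $\Theta(D_0)$ and made to scan $\Theta(m)$ edges, so maintaining all $n$ of them would cost $\O(n\,D_0)$ per update, a factor $\sqrt n$ too much. The claim to establish is that, since each tree is only ever required to track its owner's $\O(\sqrt n)$ nearest vertices and since distances change monotonically, the repair work summed over all $n$ vertices and the whole sequence is still only $\O(\sqrt n\,m\,D_0)$: each ordered pair $(v,x)$ should contribute to this cost only when $x$ enters or leaves the nearest-$\O(\sqrt n)$ ball of $v$, and monotonicity of $d(v,x)$ caps the number of such events, which is where the truncation to $\O(\sqrt n)$ vertices is essential. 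Alongside this one has to cope with the fact that a deleted tree edge cannot in general be re-hung locally --- sometimes part of a BFS layer must be re-scanned --- and to verify that the incremental direction matches, which is where the remaining $\epsilon$-factor is spent.
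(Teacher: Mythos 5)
Your starting point misdescribes the construction you are trying to dynamize. The paper's static $1.5$-diameter spanner consists only of full in/out BFS trees rooted at the $O(\sqrt{n\log n})$ vertices of a valid set-pair $(S_1,S_2)$, where $S_1$ is a single random sample and $S_2=\nin(a,\sqrt{n\log n})$ for one vertex $a$ of maximum depth in $\outbfs(S_1)$; there is no family of per-vertex nearest-$\O(\sqrt n)$ ball trees. Your ``family (ii)'' --- a truncated shortest-path tree owned by every $v\in V$ --- is therefore an object the construction never needs, and it is exactly the part of your argument that you leave unproven: the claim that all $n$ truncated trees can be repaired within $\O(\sqrt n\,m\,D_0)$ total work is stated as ``the claim to establish'' and never argued, and the pair-counting heuristic you sketch does not bound the cost of re-scanning layers after a deletion. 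So the proposal has a genuine gap at its core, and the gap sits in machinery the actual proof does not use at all.

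Even apart from that, your rebuild trigger is the wrong one. Correctness of the spanner rests on the domination property of $(S_1,S_2)$: either $\depth(\outbfs(S_1))\le\lfloor D/2\rfloor$, or the truncated tree $\inbfs(a,\lfloor D/2\rfloor)$ is contained in $S_2$. This property can be destroyed by updates that leave the diameter (hence any $(1+\epsilon)$-estimate of it) unchanged --- for instance when the deepest vertex of $\outbfs(S_1)$ changes, or when $\depth(\inbfs(a))$ drops so that $\nin(a,\sqrt{n\log n})$ no longer contains the relevant truncated ball --- so rebuilding only when a diameter estimate drifts by $(1+\epsilon)$ does not keep the spanner valid at every time step. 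The paper instead maintains the triplet $(S_1,S_2,a)$ lazily, with triggers on $\depth(\outbfs(S_1))$, $\depth(a,\outbfs(S_1))$ and $\depth(\inbfs(a))$ drifting by $(1\pm\epsilon)$ factors; Lemma~\ref{lemma:dynamic-dom-set-1} shows the pair stays $\langle(1/2+2\epsilon)\inecc(a),(1/2+2\epsilon)\inecc(a)\rangle$-dominating under those conditions, and Theorem~\ref{theorem:dynamic-dom} bounds the number of triplet changes by $O(\epsilon^{-2}\log^4 n)$ incrementally and $O(\epsilon^{-1}\log n)$ decrementally, which is where the $\epsilon$-dependence in the update time really comes from. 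The incremental case additionally requires the random subset $A$ of far vertices to keep a deep vertex $a$ alive under insertions --- a step you dismiss as bookkeeping, but it is the main obstacle the paper has to overcome. These missing ingredients, not the maintenance of per-vertex balls, are what the theorem actually needs.
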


For graphs whose diameter remains bounded by $o(\sqrt{n})$, we provide incremental (and decremental)
algorithms for maintaining $(5/3+\epsilon)$-diameter spanners. (In the dynamic setting as well these
spanners are sparser than $1.5$-stretch diameter spanners for graphs whose diameter is at most $o(\sqrt{n})$).

\begin{theorem}
For any $\epsilon>0$ and $n$-vertex directed graph, there exists an incremental (and decremental) algorithm that
maintains a $(5/3+\epsilon)$-diameter spanner that consists at most 
$\O(n^{4/3}D_0^{1/3})$ edges, where $D_0~(\leq \sqrt{n})$ denotes an upper bound on the 
graph diameter throughout the run of the algorithm.
The expected amortized update time of the algorithm is $\O(\epsilon^{-2} n^{2/3}D_0^{2/3})$
for the incremental setting, and $\O(\epsilon^{-1} n^{2/3}D_0^{2/3})$ for the decremental setting.
\end{theorem}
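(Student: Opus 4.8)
The plan is to combine the static $5/3$-diameter spanner of Theorem~\ref{theorem:5/3-diam-up-lb}(a) with the observation that under insertions the diameter is monotone non-increasing, and under deletions monotone non-decreasing, so that only a controlled number of rebuilds is ever needed. I would organize the update sequence into \emph{phases}: a phase begins whenever $\dia(G)$ leaves the multiplicative window $[(1+\epsilon)^i,(1+\epsilon)^{i+1})$ it currently occupies. By monotonicity the index $i$ moves in only one direction, and since $1\le \dia(G)\le D_0$ there are at most $O(\epsilon^{-1}\log D_0)$ phases in total. At the start of a phase in which $\dia(G)$ lies within a $(1+\epsilon)$-factor of some value $D$, I run the construction of Theorem~\ref{theorem:5/3-diam-up-lb}(a) with parameter $D$, obtaining $H$ with $\O(n^{4/3}D^{1/3})=\O(n^{4/3}D_0^{1/3})$ edges and $\dia(H)\le \tfrac53 D$.

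The point is that inside a phase $H$ needs essentially no correctness maintenance. In the incremental case $H$ stays a subgraph of $G$ while $\dia(G)$ can only shrink, and $\dia(G)\ge D/(1+\epsilon)$ throughout the phase, so $\dia(H)\le \tfrac53 D\le \tfrac53(1+\epsilon)\dia(G)\le(\tfrac53+O(\epsilon))\dia(G)$; thus the frozen $H$ remains a valid $(5/3+\epsilon)$-diameter spanner for the whole phase. In the decremental case the stretch inequality is again preserved since $\dia(G)$ only grows, but now edges of $H$ may be deleted, so I keep $H\subseteq G$ by maintaining each in-BFS and out-BFS tree that composes $H$ with the standard decremental Even--Shiloach procedure, at total cost $O(md)$ per tree of depth $d=\O(D)$. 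To detect phase boundaries cheaply I additionally maintain one in-BFS and one out-BFS tree from a fixed vertex, giving an $O(1)$-approximation of $\dia(G)$ at negligible cost; sharpening the tracked scale to a $(1+\epsilon)$-factor and the robustness margins needed when not rebuilding within a phase are what introduce the slightly different $\epsilon$-dependence in the two settings.

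For the accounting, in the incremental setting the work is dominated by the $O(\epsilon^{-1}\log D_0)$ rebuilds, whose scales $D$ decrease geometrically, so $\sum_{\mathrm{phases}}\O(mn^{2/3}D^{2/3})$ telescopes to $\O(\epsilon^{-1}mn^{2/3}D_0^{2/3})$, which together with the finer scale detection gives $\O(\epsilon^{-2}mn^{2/3}D_0^{2/3})$ over the $\Theta(m)$ insertions, i.e.\ $\O(\epsilon^{-2}n^{2/3}D_0^{2/3})$ amortized per update. In the decremental setting the scales $D$ \emph{increase}, so both the rebuild costs and the decremental-BFS maintenance costs (the latter being $\O(n^{1/3}D^{1/3}\cdot mD)\le \O(mn^{2/3}D^{2/3})$ per phase, using $D\le\sqrt n$ and hence $D^{2/3}\le n^{1/3}$) form an increasing geometric series dominated by its last term, giving $\O(\epsilon^{-1}mn^{2/3}D_0^{2/3})$ total and $\O(\epsilon^{-1}n^{2/3}D_0^{2/3})$ amortized.

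The main obstacle is maintaining the ``local'' pieces of the static spanner — the bounded-size truncated-BFS clusters attached to the random pivots, of which there can be $\O(n^{2/3}D_0^{2/3})$ — without paying $\Theta(m)$ per cluster and thereby blowing the budget by a factor of $n$. I would handle this in two steps. First, I fix all the random hitting sets once at initialization: since a uniform random subset of size $\O(n/k)$ hits any fixed $k$-element set with high probability, a union bound over the $\le n$ clusters and the $\le m$ time steps shows the same sets remain valid pivot and hub sets for every graph in the sequence, so the randomized correctness argument of Theorem~\ref{theorem:5/3-diam-up-lb}(a) transfers verbatim to every snapshot and no resampling is needed. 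Second, each cluster is carried as a truncated Even--Shiloach tree with a heap over the currently reached vertices, so re-exploration is confined to the $k$ vertices actually in the cluster plus an $O(\log n)$ overhead per distance-label change, and the labels change monotonically within a phase; padding the truncation radius by a $(1+\epsilon)$ factor absorbs the slack from not rebuilding clusters mid-phase. Combining this with the phase accounting above yields the stated spanner size, stretch $5/3+\epsilon$, and amortized update times.
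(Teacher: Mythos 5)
Your phase-based plan has a genuine gap in both directions of the dynamic setting, and it is precisely the gap that the paper's machinery (Lemma~\ref{lemma:dynamic-dom-set-1}, Theorem~\ref{theorem:dynamic-dom}, instantiated via Lemma~\ref{lemma:static2_gen}) is built to close. In the incremental case your correctness argument needs the phase boundary to be detected at $(1+\epsilon)$ precision: the frozen $H$ has $\dia(H)\leq \frac{5}{3}D_{t_0}$, so you must rebuild as soon as $\dia(G_t)$ drops below roughly $(1-\Theta(\epsilon))D_{t_0}$. The single in/out BFS pair from a fixed vertex that you propose gives only a $2$-approximation of the diameter, and ``sharpening the tracked scale to a $(1+\epsilon)$-factor'' is exactly the hard part that you leave unexplained; dynamically maintaining a $(1+\epsilon)$-approximate diameter is itself an expensive problem (the cited incremental $(1+\epsilon)$-approximation of Ancona et al.\ costs $\O(\epsilon^{-2}nm)$ total, i.e.\ $\O(\epsilon^{-2}n)$ amortized, which already exceeds the claimed $\O(\epsilon^{-2}n^{2/3}D_0^{2/3})$ whenever $D_0=o(\sqrt n)$, and no comparable decremental routine is available). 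The paper avoids this entirely: it never estimates the diameter, but instead monitors quantities that ES-trees maintain exactly --- $\depth(\outbfs(S_1))$, $\depth(a,\outbfs(S_1))$, $\depth(\inbfs(a))$ --- and ties the domination radii to $\inecc_t(a)\leq\dia(G_t)$, re-selecting $a$ and the neighborhood set lazily a polylogarithmic number of times.

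The decremental case has a second, independent failure: freezing the dominating-set pairs for an entire phase is unsound even if the diameter never leaves its $(1+\epsilon)$ window. Under deletions the quantity $d_{G_t}(A_1,x)$ can drift from $\lceil 2D/3\rceil$ up to $\dia(G_t)$ (e.g.\ by deleting the hub edges realizing the domination) while the diameter itself does not move, at which point the case analysis of Theorem~\ref{theorem:static_diam2} only yields stretch about $2$, not $5/3+\epsilon$. Your within-phase invariant (diameter stable) simply does not imply the invariants the stretch proof uses (the domination radii of $A_1,A_2,B_1,B_2$ relative to the \emph{current} graph); this is why the paper's Lemma~\ref{lemma:dynamic-dom-set-1} conditions on the drift of BFS depths rather than of the diameter, and recomputes the paired neighborhood sets when those depths move by a $(1\pm\epsilon)$ factor. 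Relatedly, your ``fix the random hitting sets once'' step does not apply to $A_2$ and $B_1$, which are closest-in-neighborhood sets of the current deepest vertex rather than uniform samples, and you never say how the $A_2\times B_1$ pairwise shortest paths (not trees rooted at $O(\alpha\log n)$ vertices) are kept intact under deletions. Fixing these points essentially forces you back to the paper's route: dynamically maintain the two dominating-set pairs of Lemma~\ref{lemma:static2_gen} via the lazy scheme of Theorem~\ref{theorem:dynamic-dom}, and re-hang the ES-trees/preservers only when the maintained triplets change, exactly as in the proof of Theorem~\ref{theorem:dynamic-spanner-1}.
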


An interesting immediate corollary of our dynamic maintenance of diameter spanners is an incremental (and decremental) algorithm that
maintains a $(1.5+\epsilon)$-approximation of the graph's diameter,
whose expected total update time is $\O(\epsilon^{-2}m\sqrt{n}D_0)$ 
for the incremental setting, and $\O(\epsilon^{-1}m\sqrt{n}D_0)$ for the decremental setting,	
where $D_0$ denotes an upper bound on the diameter of graph throughout the run of the algorithm.
There is a very recent independent work by Ancona {\em et al.}~\cite{arxiv:diam} on dynamically maintaining the diameter value of a graph, using related techniques.
In particular, they give a $(1 + \epsilon)$-approximation algorithm with $\O(\epsilon^{-2}nm)$ total update time for the incremental setting, and a $(1.5 + \epsilon)$-approximation algorithm for incremental (resp., decremental) setting, whose total update time is $\O(\epsilon^{-2} m\sqrt{nD_0})$ with high probability.
\footnote{The authors of~\cite{arxiv:diam} informed us that they had a solution to the decremental problem since 2015 (private communication).}

\subsubsection{Eccentricity Spanners}
Given a graph $G=(V,E)$, we say that a subgraph $H=(V,E'\subseteq E)$of $G$ is a $t$-eccentricity spanner
of $G$ if the eccentricity of any vertex $x$ in $H$ is at most $t$ times its eccentricity in $G$.
Similarly, $H$ is said to be a $t$-radius spanner, if the radius of $H$ is at most $t$ times the radius of graph $G$.

We obtain a construction for $2$-eccentricity spanner with $\O(n)$ edges that are computable in just $\O(m)$ time;
also we show that there exists graphs whose $(2-\epsilon)$-eccentricity spanner, for any $\epsilon>0$,
contains $\Omega(n^2/R^2)$ number of edges, where $R$ is the radius of the graph.

\begin{theorem}\label{theorem:2-ecc-up-lb}
(a) There exists a Las Vegas algorithm that for any directed weighted graph $G=(V,E)$
computes in $O(m\log^2 n)$ expected time a $2$-eccentricity spanner (which is also 
a $2$-radius spanner) containing at most $O(n\log^2n)$ edges.

(b) For every $n$ and every $R$, there exists a unweighted directed graph $G$ with $\Theta(n)$ vertices and radius $R$, 
such that any subgraph $H$ of $G$ that is a $(2-\epsilon)$-eccentricity spanner
of $H$, for any $\epsilon>0$, contains at least $\Omega(n^2/R^2)$ edges.
%
\end{theorem}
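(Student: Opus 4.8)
The plan is to build $H$ by combining shortest-path in- and out-trees rooted at a small set of carefully chosen ``hubs,'' together with a sparse structure that handles the vertices close to a center. First I would compute, in $O(m\log n)$ time, an (approximate) center $c$ of $G$, i.e.\ a vertex whose out-eccentricity is within the true radius $R$ up to a constant; a BFS/Dijkstra from $c$ gives the ball structure. The key idea: for a vertex $x$, its eccentricity $\mathrm{ecc}(x) = \max_y d(x,y)$; every target $y$ satisfies $d(x,y) \le d(x,c) + d(c,y) \le d(x,c) + R$. So if we include in $H$ a shortest out-path from $x$ to $c$ and a shortest out-tree from $c$, then $x$ reaches every $y$ in $H$ with distance $\le d(x,c) + R \le \mathrm{ecc}_G(x) + R \le 2\,\mathrm{ecc}_G(x)$, since $\mathrm{ecc}_G(x) \ge d(x,c) \ge $ (roughly) $R$ is not automatic — actually $\mathrm{ecc}_G(x) \ge R$ only when $x$ is a center, so I need the sharper bound $\mathrm{ecc}_G(x) \ge \max(d(x,c), R')$ where $R'$ is the radius; combining $d(x,c)+R \le 2\max(d(x,c),R)$. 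To make this sparse I cannot afford a separate out-path from every $x$; instead I recursively/hierarchically route: take a random sample (or a greedy hitting set) so that the union of $O(\log n)$ in-trees hits all ``long'' shortest paths, following the standard Thorup--Zwick / Roditty--Thorup--Zwick hopset-style argument, yielding $O(n\log^2 n)$ edges total and $O(m\log^2 n)$ expected time for a Las Vegas guarantee.

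The symmetric construction (swapping in/out, handling $d(x,c)$ via an in-tree to a radius-center and noting $d(c',x)\le \mathrm{ecc}$ bounds from the other side) takes care of the radius statement simultaneously: the chosen center's out-tree has depth $\le R$ in $G$, hence also in $H$, so $\rad(H)\le R = \rad(G)$, in fact giving a $1$-radius spanner as a bonus, certainly $\le 2\rad(G)$. The main obstacle I expect is the sparsity: naively every vertex needs a path to the hub set, costing $\Omega(nR)$ edges, so the recursion depth and the hitting-set layering must be organized so that each layer contributes only $\O(n)$ edges while still guaranteeing the $+R$ additive slack collapses into a multiplicative $2$; this is where the $\log^2 n$ factors come from and where care is needed to keep the stretch at exactly $2$ rather than $2+o(1)$.

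**Part (b): The $\Omega(n^2/R^2)$ lower bound.**

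For the lower bound I would construct a layered graph forcing that almost all edges are ``essential.'' The construction: take a bidirected path (or a suitable gadget) of length $\Theta(R)$ as a ``spine'' so the radius is $\Theta(R)$, and attach to it a bipartite-like block $A \cup B$ with $|A|=|B|=\Theta(n/R)$ and a dense set of directed edges from $A$ to $B$, arranged so that the distance from any $a\in A$ to any $b\in B$ is exactly some value $\ell$, while removing any single $A$–$B$ edge $(a,b)$ forces $d_H(a,b) \ge 2\ell - o(\ell)$ via a detour through the spine — the detour being about twice as long. Placing $\Theta(R)$ parallel copies of such blocks along the spine (so each block lives at a different ``radial coordinate'') makes the total vertex count $\Theta(n)$ while the radius stays $\Theta(R)$; each block independently contributes $\Omega((n/R)^2)$ forced edges, for a total of $\Omega(R \cdot (n/R)^2) = \Omega(n^2/R)$ — if the blocks must each be fully dense. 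To land exactly at $\Omega(n^2/R^2)$ I instead use a single dense block of size $\Theta(n/R)\times\Theta(n/R)$ together with the spine padding the vertex count to $n$; then any $(2-\epsilon)$-eccentricity spanner must retain every edge of that block, giving $\Omega((n/R)^2)=\Omega(n^2/R^2)$ edges. The hard part is designing the gadget so that (i) the detour after deleting one edge is a clean factor $2$ (not just $1+o(1)$), (ii) the eccentricities of the $A$-vertices are small enough — namely $\Theta(\ell)$ with $\ell=\Theta(R)$ — that losing a factor $2$ is actually detectable, and (iii) the spine does not accidentally provide short alternate routes; I expect this balancing of distances to be the crux, and I would verify it by making all cross edges have distance exactly $R$ and every back-route through the spine have length exactly $2R-1$.
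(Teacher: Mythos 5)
Your Part~(a) has a genuine gap at its very first step. You propose to compute an ``(approximate) center $c$'' in $O(m\log n)$ time and then route everything through $c$; but no $\O(m)$-time algorithm for (even approximately) locating a center of a directed weighted graph is known -- the paper explicitly remarks (in a footnote) that the only way to find the true center is to solve all-pairs shortest paths in $\Omega(mn)$ time, and this is precisely the obstacle the construction is designed to avoid. Once you have $c$, the spanner is trivial ($\inbfs(c)\cup\outbfs(c)$ already gives a $2$-eccentricity spanner with $2n$ edges), so the entire difficulty of the theorem is in the step you black-boxed. The paper's solution replaces the single center by a set $S$ of $O(\log^2 n)$ vertices built through a $\log n$-level hierarchical random sampling (Algorithm~2): at each level it samples $A_i\subseteq B_{i+1}$, finds the vertex $a_i$ farthest from $A_i$, takes $B_i$ to be the $n^{i/k}$ closest in-neighbours of $a_i$ inside $B_{i+1}$, and shows that for every $x\in B_{i+1}$ either $\depth(\outbfs(A_i))\le\outecc(x)$ or $x$ survives into $B_i$. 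Chaining this down to $B_1$ yields $\depth(\outbfs(S))\le\rad(G)$, which is the only property of $c$ your argument actually used. Your Thorup--Zwick-style ``hierarchical/recursive'' sketch gestures in the right direction, but does not state the invariant that drives the halving, nor verify that stretch $2$ (rather than $2+o(1)$) survives the recursion; as written it is not a proof. Also, your claim of a $1$-radius spanner ``as a bonus'' is incorrect for the same reason -- without the exact center you cannot get depth exactly $R$; the paper's bound is $2$-radius, via $\rad(H)\le\outecc_H(c)\le 2\outecc_G(c)=2\rad(G)$ for the true center $c$.

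For Part~(b) your high-level plan -- one dense bipartite block of size $\Theta(n/R)\times\Theta(n/R)$, padded by a spine of length $\Theta(R)$ so that any deleted cross-edge forces a detour of length $\approx 2R$ -- is exactly the right shape and matches the paper's construction (sets $A,B,C,D$ with $B\times C$ dense, unidirectional length-$t$ paths on either side, and back-edges from $B\cup C\cup D$ to the heads of $A$). However, you explicitly defer the distance balancing (``I expect this balancing of distances to be the crux, and I would verify it\ldots''), which is the entire content of the lower bound. Two concrete dangers you name but do not resolve: a \emph{bidirected} spine, as you first suggest, creates short alternate routes and kills the bound, so the spine paths must be one-way; and the back-edges to the spine must be arranged so that after deleting $(b_i,c_j)$ the only surviving $b_i\leadsto d_{t,j}$ route is forced to re-traverse a full length-$t$ in-path, giving distance $2(t+1)$ versus radius $t+1$. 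Your proposal stops short of pinning these down, so as submitted it is an outline rather than a proof.
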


\subparagraph*{Implications: First near optimal time $2$-approximation of vertex eccentricities.}
For the problem of computing exact eccentricities in weighted directed graphs the only known solution 
is solving the all-pair-shortest-path problem and it takes $\Omega(mn)$ time. Backurs et al.~\cite{Backurs18} 
showed that for any directed weighted graph there exists an algorithm for computing 2-approximation 
of eccentricities in $\tilde O(m\sqrt{n})$ time. They also showed that for any $\delta>0$, there exists 
an $\O(m/\delta)$ time algorithm for computing a $2+\delta$ approximation of graph eccentricities.

We obtain (as a product of our $2$-eccentricity spanner) the first $\O(m)$ time algorithm for 
computing $2$-approximation of eccentricities.

\begin{theorem}
For any directed weighted graph $G=(V,E)$ with $n$ vertices and $m$ edges, we can compute
in $O(m\log^2 n)$ expected time a $2$-approximation of eccentricities of vertices in $G$.
 \end{theorem}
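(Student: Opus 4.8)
The plan is to derive this as an essentially immediate corollary of Theorem~\ref{theorem:2-ecc-up-lb}(a). First I would run the Las Vegas algorithm of Theorem~\ref{theorem:2-ecc-up-lb}(a) on $G$; it terminates in $O(m\log^2 n)$ expected time and produces a subgraph $H=(V,E')$ with $|E'|=O(n\log^2 n)$ that is a $2$-eccentricity spanner of $G$. The point I would make is that this algorithm does not merely assert the stretch bound abstractly: en route it computes, for every vertex $v$, an explicit value $\widehat e(v)$ — read off from the small family of in-/out-shortest-path trees rooted at the sampled (or derived) pivot vertices that it adds to $E'$ — with the built-in guarantees $\mathrm{ecc}_H(v)\le\widehat e(v)$ and $\widehat e(v)\le 2\,\mathrm{ecc}_G(v)$. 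So the algorithm I want is: run (the relevant part of) the spanner construction, and output the vector $(\widehat e(v))_{v\in V}$ instead of, or alongside, the edge set $E'$.

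It then remains only to check correctness and running time. For correctness, combine $\widehat e(v)\le 2\,\mathrm{ecc}_G(v)$ with the trivial inequality $\mathrm{ecc}_G(v)\le\mathrm{ecc}_H(v)\le\widehat e(v)$, which holds because $H$ is a subgraph of $G$ and removing edges only increases distances; this sandwiches $\widehat e(v)\in[\mathrm{ecc}_G(v),\,2\,\mathrm{ecc}_G(v)]$ for every $v$, exactly the desired $2$-approximation. For the running time: computing the pivots, running the associated in-/out-BFS-or-Dijkstra trees, and deriving the $\widehat e(v)$'s are all sub-steps of the $O(m\log^2 n)$-expected-time spanner construction, and since that construction is Las Vegas the randomness affects only the time, never the output. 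I would also note explicitly why the naive alternative is too slow: one cannot afford to compute eccentricities inside $H$ from scratch, as that costs $\Theta(n\cdot|E'|)=\widetilde\Theta(n^2)$ time; the whole gain comes from reading the estimates off the construction rather than recomputing them, and from the fact that the Backurs et al.\ SETH lower bound makes the factor $2$ the best one can hope for at near-linear time.

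The main obstacle — and the one place a careful writeup must be precise — is making the estimate $\widehat e(v)$ explicit and verifying that the bound $\le 2\,\mathrm{ecc}_G(v)$ is established \emph{for the estimate itself} and not only for the final graph $H$. If the proof of Theorem~\ref{theorem:2-ecc-up-lb}(a) is organised in the natural way — first fix $\widehat e(v)$, then show $\widehat e(v)\le 2\,\mathrm{ecc}_G(v)$, then add enough edges to $H$ to realise $\mathrm{ecc}_H(v)\le\widehat e(v)$ — this is automatic. Otherwise one has to revisit that argument and expose the intermediate quantity it implicitly uses. Everything else here is routine.
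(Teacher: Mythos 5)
Your plan is the right one, and it is exactly what the paper does: the eccentricity estimates are read off the set $S$ produced by the $2$-eccentricity-spanner construction, at no extra asymptotic cost. But your write-up explicitly defers the key step — you never say what $\widehat e(v)$ actually \emph{is}, and you do not establish either side of the sandwich for it; you hedge with ``if the proof of Theorem~\ref{theorem:2-ecc-up-lb}(a) is organised in the natural way \ldots\ this is automatic. Otherwise one has to revisit.'' That is precisely the gap. The spanner theorem, as stated, asserts only a property of the subgraph $H$; it does not hand you a per-vertex number, so you must define one and prove the two inequalities from scratch. The paper does this by setting
\[
\outecc'(x) \;=\; \max_{s\in S} d_G(x,s) \;+\; \depth\bigl(\outbfs(S)\bigr),
\]
which is computable from the already-built trees $\inbfs(s)$, $\outbfs(s)$ for $s\in S$, with $|S|=O(\log^2 n)$, in $O(m\log^2 n)$ time. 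The upper bound $\outecc'(x)\le 2\,\outecc_G(x)$ is immediate because $\max_{s\in S}d_G(x,s)\le\outecc_G(x)$ and $\depth(\outbfs(S))\le\rad(G)\le\outecc_G(x)$, the latter being exactly the dominating property proved during the spanner construction. The lower bound $\outecc'(x)\ge\outecc_G(x)$ is proved \emph{directly} by the triangle inequality: for any $y$, take $s_y\in S$ on a shortest tree path with $d_G(s_y,y)=\depth(y,\outbfs(S))$, so $d_G(x,y)\le d_G(x,s_y)+d_G(s_y,y)\le\outecc'(x)$.

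A second, smaller issue: your sandwich $\mathrm{ecc}_G(v)\le\mathrm{ecc}_H(v)\le\widehat e(v)\le 2\,\mathrm{ecc}_G(v)$ threads through $\mathrm{ecc}_H(v)$, which you never bound by $\widehat e(v)$ and which you (correctly) cannot afford to compute. This detour is both unjustified as written and unnecessary: the paper never invokes eccentricities in $H$ for this theorem at all — the estimate is sandwiched against $\mathrm{ecc}_G$ directly. Dropping the intermediate $\mathrm{ecc}_H(v)$ removes a claim you would otherwise have to prove.
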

 
Our result is essentially {\bf \em tight}.
The approximation factor of $2$ cannot be improved since Backurs et al. showed in their paper that
unless SETH fails no $O(n^{2-o(1)})$ time algorithm can achieve an approximation 
factor better than $2$ for graph eccentricities~\cite{Backurs18}.
Also the computation time of our algorithm is almost optimal as we need $\Omega(m)$ time to even scan the edges of the graph.

\subsubsection{Dynamic Maintenance of Eccentricity Spanners}

We obtain incremental and decremental algorithm for maintaining $(2+\epsilon)$-eccentricity spanner.

\begin{theorem}
For any $\epsilon>0$, there exists an incremental (and decremental) algorithm that
maintains for an $n$-vertex directed graph a $(2+\epsilon)$-eccentricity spanner 
(which is also a $(2+\epsilon)$-radius spanner) containing at most $O(n^{3/2}\sqrt{\log n})$ edges.
The expected amortized update time is $\O(\sqrt{n}/\epsilon^{3}~D_0 )$ 
for the incremental setting and $\O(\sqrt{n}/\epsilon~D_0)$ for the decremental setting,	
where, $D_0 $ denotes an upper bound on the diameter of the graph throughout the run of the algorithm.
\end{theorem}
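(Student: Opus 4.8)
The plan is to dynamize the static construction behind Theorem~\ref{theorem:2-ecc-up-lb}(a). Recall that its $2$-eccentricity spanner is a union of in- and out-shortest-path trees, rooted at a set $P$ of pivots that is computed in $\O(m)$ time, and whose guarantee rests on the covering property that every vertex $x$ has a pivot $p\in P$ with $d_G(x,p)+\mathrm{ecc}_G(p)\le 2\,\mathrm{ecc}_G(x)$; indeed routing $x\to p\to u$ inside the two trees of $p$ then gives $d_H(x,u)=d_G(x,p)+d_G(p,u)\le 2\,\mathrm{ecc}_G(x)$ for every $u$, and the one-tree version of the same estimate bounds $\rad(H)$. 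To maintain such a subgraph under a sequence of insertions (resp.\ deletions) I would keep each pivot's two shortest-path trees as an Even--Shiloach tree truncated at depth $D_0$, and recompute $P$ from scratch, via the $\O(m)$-time static routine, only when the relevant distance parameter of the graph has changed by a $(1+\epsilon)$ factor since the last recomputation --- the $\epsilon$ in the stretch is precisely the slack that makes this infrequent recomputation legitimate. To hit the size bound I would over-provision and maintain $\Theta(\sqrt{n\log n})$ pivots (a fresh uniformly random sample together with the $\mathrm{polylog}(n)$ pivots produced by the last static call); this gives $O(n^{3/2}\sqrt{\log n})$ maintained edges, and since a random pivot set of exactly this size is already maintained by our dynamic $(1.5+\epsilon)$-diameter-spanner algorithm, the eccentricity spanner can be built on top of that algorithm and inherits its size and base running-time bounds.

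The time accounting is then routine. An Even--Shiloach tree truncated at depth $D_0$ costs $\O(mD_0)$ over its lifetime (each of a vertex's at most $D_0$ label changes is charged to its degree), and the monotone analogue gives the same bound in the incremental case; over $\Theta(\sqrt{n\log n})$ pivots and amortized over the $\le m$ updates this is $\O(\sqrt n\,D_0)$ per update, which is the target up to $\epsilon$-factors. Each recomputation of $P$ costs $\O(m)$, and because the controlling parameter is monotone along the update sequence it crosses each threshold $(1+\epsilon)^i\le D_0$ at most once, so there are only $O(\epsilon^{-1}\log D_0)$ of them, contributing $\O(\epsilon^{-1})$ amortized. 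In the decremental setting this already yields the claimed $\O(\sqrt n\,D_0/\epsilon)$.

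The step I expect to be the real obstacle is the incremental case, where distances only decrease and the radius can collapse by far more than a $(1+\epsilon)$ factor in a single insertion: a pivot that was a good hub at the start of a phase can become useless, and a tree truncated at the now-obsolete depth $D_0$ certifies nothing. The remedy I would use is the usual geometric-scaling trick --- run each pivot's trees simultaneously at the $O(\epsilon^{-1}\log D_0)$ depth scales $(1+\epsilon)^i$, and whenever the running radius estimate drops past the next scale, re-anchor the relevant trees and re-sample the random part of $P$ --- and charging the re-anchoring work to the $(1+\epsilon)$-gaps costs one further $\epsilon^{-1}$, for a total of $\O(\sqrt n\,D_0/\epsilon^{3})$. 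Finally, since the spanner property must hold at every one of the $\le m$ intermediate graphs, a union bound over them (covered by the extra $\log$ factor in the pivot count and by drawing a fresh random sample at each recomputation) gives correctness with probability $1-1/\mathrm{poly}(n)$ throughout; a Las-Vegas-type guarantee on the spanner property follows as in the static algorithm, by periodically verifying $H$ and rebuilding from scratch in the unlikely event of failure.
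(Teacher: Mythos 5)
There is a genuine gap, and it sits exactly where the work of this theorem lies: bounding how often the pivot structure must be recomputed, and detecting when it has become stale. In your decremental analysis you argue that "the controlling parameter is monotone along the update sequence, so it crosses each threshold $(1+\epsilon)^i$ at most once." But the parameter you can actually monitor is $\depth(\outbfs(P))$ for the \emph{current} pivot set $P$, and you rebuild $P$ from scratch at every phase change; the fresh set has depth at most $\rad(G)$ again, so the monitored quantity resets downward at every rebuild and is not monotone across phases. Nothing then bounds the number of phases by $O(\epsilon^{-1}\log D_0)$ --- an adversary can repeatedly degrade the current $O(\log^2 n)$ pivots without ever increasing the radius, forcing another $\O(m\sqrt{n}\,D_0)$-cost rebuild of all $\Theta(\sqrt{n\log n})$ ES trees. (Monitoring $\rad(G)$ itself instead would be circular, since maintaining it is essentially the problem at hand.) The paper escapes this by \emph{never} recomputing the random set $S_1$: $S_1$ is fixed once (and valid for all graph versions w.h.p.), so $\depth(\outbfs(S_1))$ is genuinely monotone under deletions and crosses its $(1+\epsilon)$-thresholds $O(\epsilon^{-1}\log n)$ times; only the cheap auxiliary set $S_2=\nin_{t_0}(a,\sqrt{n\log n})$, anchored at a deepest vertex $a$ of $\outbfs(S_1)$, is recomputed at those crossings. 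Correctness then rests on a per-vertex dichotomy you do not have: for every $x$, either $\depth(\outbfs(S_1))\leq(1+\epsilon)\outecc(x)$, or the in-ball of $a$ of radius $\outecc(x)$ misses $S_1$, is therefore contained in $S_2$, and contains $x$ itself --- so $x$'s own out-tree sits in $H$. Your stated covering property (a per-source pivot $p$ with $d(x,p)+\mathrm{ecc}(p)\le 2\,\mathrm{ecc}(x)$) is also not what the static routine certifies (it gives the set-level bound $\depth(\outbfs(S))\le\rad(G)$, with the pivot chosen per target), which matters precisely when you try to argue persistence under updates.

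The incremental case has the deeper missing idea. Under insertions the maintained trees only improve, so no quantity of the data structure signals failure; what breaks the guarantee is a silent collapse of eccentricities --- a single insertion can create a new hub $c$ with tiny $\outecc_G(c)$ while all pivots stay far from some vertex, so $\outecc_H(c)\approx\depth(\outbfs(P))\gg(2+\epsilon)\outecc_G(c)$. Your remedy presupposes a maintained "running radius estimate" whose drops trigger re-anchoring, but you never say how to maintain such an estimate within the claimed time; the natural candidate $\min_{s\in P}\outecc(s)$ only upper-bounds the radius and does not drop when a non-pivot hub appears, so the spanner can be invalid with no trigger firing, and multi-scale truncated ES trees do not repair this. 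The paper's solution is structural, not a rebuild schedule: it maintains (via the \textsc{far}-set sampling machinery of its dynamic dominating-set-pair theorem) a vertex $a$ whose depth in $\outbfs(S_1)$ is within $(1+\epsilon)$ of the maximum, the threshold $q(a)$ up to which $\inbfs(a,\cdot)$ has at most $\sqrt{n\log n}$ vertices, and a monotonically growing set $S_{2,incr}(a)\supseteq\inbfs(a,(1-\epsilon)q(a))$; it then proves that every $x$ either satisfies $\depth(\outbfs(S_1))\le(1+\epsilon)^2\outecc(x)$ or lies in $S_{2,incr}(a)$, so hub-like vertices are automatically roots of their own trees. Without this detection mechanism (or an equivalent one) your incremental algorithm is not correct, and the $\epsilon^{-3}$ accounting that happens to match the theorem is not substantiated. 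Likewise, "periodically verifying $H$ and rebuilding on failure" is not a cheap Las-Vegas patch, since verifying the eccentricity-spanner property amounts to computing eccentricities.
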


\subparagraph*{Implication: Dynamic maintenance of $2$-approximate eccentricities.}
The above stated dynamic algorithm for $(2+\epsilon)$-eccentricity spanner also imply 
a same time bound algorithm for maintaining a $(2+\epsilon)$-approximation of vertex eccentricities. 

\begin{theorem}
For any $\epsilon>0$, there exists an incremental (and decremental) algorithm that
maintains for an $n$-vertex directed graph a $(2+\epsilon)$-approximation of graph eccentricities.
The expected amortized update time is $\O(\sqrt{n}/\epsilon^{3}~D_0 )$ 
for the incremental setting and $\O(\sqrt{n}/\epsilon~D_0)$ for the decremental setting,	
where, $D_0 $ denotes an upper bound on the diameter of the graph throughout the run of the algorithm.
\end{theorem}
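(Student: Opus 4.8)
The plan is to obtain this as a direct consequence of the dynamic $(2+\epsilon)$-eccentricity spanner of the preceding theorem, the point being that the per-vertex eccentricity estimates are already produced and maintained by the data structure underlying that spanner, so that reporting them costs nothing extra. First recall that the maintained subgraph $H$ is a subgraph of $G$, hence distances only grow and $\mathrm{ecc}_G(v)\le\mathrm{ecc}_H(v)$ for every vertex $v$; together with the defining property $\mathrm{ecc}_H(v)\le(2+\epsilon)\,\mathrm{ecc}_G(v)$ this already shows that $\mathrm{ecc}_H(v)$ is a valid $(2+\epsilon)$-approximation. The catch is that recomputing $\mathrm{ecc}_H(v)$ after each update would require an all-pairs shortest-path computation on $H$, which is far too slow; instead I would output the certificate value $\widehat e(v)$ that the spanner algorithm computes internally while building $H$.

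Concretely, that algorithm maintains $H$ as a union of (bounded-depth, and $(1+\epsilon)$-approximate in the weighted case) incremental/decremental in- and out-shortest-path trees rooted at $\O(\sqrt n)$ random pivots $S$, together with an auxiliary structure covering the vertices of small eccentricity; these trees store, for every $v$ and every pivot $s\in S$, the distances $d_G(v,s)$ and $d_G(s,v)$ and the tree depths. The correctness analysis behind Theorem~\ref{theorem:2-ecc-up-lb}(a) exhibits, for each $v$, an estimate $\widehat e(v)$ computed from exactly these stored quantities and proves $\mathrm{ecc}_G(v)\le\widehat e(v)\le(2+\epsilon)\,\mathrm{ecc}_G(v)$ --- the lower bound from the triangle inequality, the upper bound because with high probability some pivot lands on a long shortest path witnessing $\mathrm{ecc}_G(v)$ (for vertices of large eccentricity), while the auxiliary structure takes care of the remaining vertices. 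So the algorithm is: run the spanner-maintenance procedure unchanged, and after each batch of label updates refresh every affected $\widehat e(v)$, charging each change of $\widehat e(v)$ to a change of some label $d_G(v,s)$ or $d_G(s,v)$. The total number of such label changes over the whole update sequence, and the cost of maintaining the aggregates that define $\widehat e(v)$ (maxima/minima over the $\O(\sqrt n)$ pivots and over the $O(\epsilon^{-1}\log n)$ geometric distance scales), are already accounted for in the analysis of the preceding theorem; hence the total work is unchanged, i.e.\ amortized $\O(\sqrt n\,D_0/\epsilon^{3})$ per update incrementally and $\O(\sqrt n\,D_0/\epsilon)$ decrementally.

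The step I expect to be the crux --- though one that is already dealt with in the preceding theorem, so it incurs no new cost here --- is the incremental direction: under insertions distances decrease, the depth-$D_0$ trees must be rebuilt scale by scale rather than peeled layer by layer, and $\widehat e(v)$ is monotone only within a fixed geometric scale; this is precisely what inflates the decremental $\epsilon^{-1}$ to the incremental $\epsilon^{-3}$. Since producing and updating $\widehat e(v)$ uses nothing beyond the tree labels that the spanner structure already refreshes, no new bottleneck appears and the stated update times follow; as a byproduct, $\min_v\widehat e(v)$ is a $(2+\epsilon)$-approximate radius maintained within the same bounds.
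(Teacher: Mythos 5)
Your proposal is correct and follows essentially the paper's own route: the paper likewise derives this as an immediate implication of the dynamic $(2+\epsilon)$-eccentricity spanner, observing that the maintained set $S=S_1\cup S_2$ is $\langle(1+\epsilon)\rad(G),V\rangle$-dominating, so that $\max_{s\in S}d_G(x,s)+\depth(\outbfs(S))$ --- a quantity read directly off the ES-tree labels the spanner algorithm already maintains --- satisfies $\outecc_G(x)\le \max_{s\in S}d_G(x,s)+\depth(\outbfs(S))\le(2+\epsilon)\outecc_G(x)$, and maintaining these values for all $x$ costs $O(m|S|\maxD)$ in total, within the stated bounds. Your ``internal certificate'' $\widehat e(v)$ is exactly this quantity; the only deviations are cosmetic (there are no geometric distance scales in the underlying structure, and the upper bound follows from $\depth(\outbfs(S))\le(1+\epsilon)\rad(G)$ via the dominating property rather than from a pivot hitting a shortest path), so your argument matches the paper's.
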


Additional dynamic algorithms with different approximation factors and runtime bounds for maintaining vertex eccentricities, were very recently given by Ancona {\em et al.}~\cite{arxiv:diam}; for more details, see~\cite{arxiv:diam}.

\subsection{More Related Work}\label{related-work}
The girth conjecture of Erd{\"o}s~\cite{Erdo64} implies that there are undirected graphs on $n$ 
vertices, for which any $(2k-1)$-spanner will require $\Omega(n^{1+1/k})$ edges.
This conjecture has been proved for $k=1,2,3$, and $5$, and is widely believed to be true for any integer $k$.
Thus, assuming the girth conjecture, one can not expected for a better size-stretch trade-offs.

Alth{\"o}fer {\em et al.}~\cite{ADDJS93} were the first to show that any undirected weighted graph 
with $n$ vertices has a $(2k-1)$-spanner of size $O(n^{1+1/k})$.
The lower bound mentioned above implies that the $O(n^{1+1/k})$ size-bound of this spanner is essentially optimal.
Alth{\"o}fer {\em et al.}~\cite{ADDJS93} gave an algorithm to compute such a spanner,
and subsequently, a long line of works have studied the question of how fast can we compute such a spanner,
until Baswana and Sen~\cite{BS03} gave a linear-time algorithm.

A $c$-additive spanner of an undirected graph $G$ is a subgraph $H$ that preserves distances up to an 
additive constant $c$. That is, for any pair of nodes $u,v$ in $G$ it holds that $d_H(v,u) < d_G(v,u) + c$. 
This type of spanners were also extensively studied~\cite{BKMP10, Chechik2013, ACIM99, EP04}.
The sparsest additive spanner known is a 6-additive spanner of size $O(n^{4/3})$ that was given by
Baswana, Kavitha, Mehlhorn, and Pettie~\cite{BKMP10}. It was only recently that Abboud and 
Bodwin~\cite{AB16} proved that the $O(n^{4/3})$ additive spanner bound is tight, for any additive constant $c$.

Since for directed graph distance spanners are impossible, the {\em roundtrip distance} 
metric was proposed. The roundtrip-distance between two vertices $u$ and $v$ is 
the distance from $v$ to $u$ plus the distance from $u$ to $v$. 
Roditty, Thorup, and Zwick~\cite{RTZ02} presented the notion of {\em roundtrip spanners} for directed graphs.
A roundtrip spanner of a directed graph $G$ is a sparse subgraph $H$ that approximately preserve the {\em roundtrip} distance 
between each pair of nodes $v$ and $u$. They showed that any directed graph has roundtrip spanners, and gave efficient algorithms to construct such spanners.

The question of finding the sparsest spanner of a given graph was shown to be NP-Hard by Peleg and Sch{\"a}ffer~\cite{PS89}, in the same work that graph spanner notion was introduced by Peleg and Sch{\"a}ffer~\cite{PS89}.

Diameter spanners were mentioned by Elkin and Peleg~\cite{ElkinP01, ElkinPeleg01},
but in the context of approximation algorithms for finding the sparsest diameter spanner (which is NP-Hard).
To the best of our knowledge, there is no work that showed the existence of sparse diameter spanners with stretch less than $2$,
for directed graphs.

\section{Preliminaries}
\label{section:prelim}
Given a directed graph $G=(V,E)$ on $n=|V|$ vertices and $m=|E|$ edges, 
the following notations will be used throughout the paper.

\begin{itemize}
\itemsep0em
\item $\pi_G(v,u)$:~ the shortest path from vertex $v$ to vertex $u$ in graph $G$.
\item $d_G(v,u)$:~ the length of the shortest path from vertex $v$ to vertex $u$ in graph $G$.
We sometimes denote it by $d(v,u)$, when the context is clear.
\item $d_G(A,u)$:~ $\min_{v\in A}d_G(v,u)$.
\item $\dia(G)$:~ the diameter of graph $G$, that is, $\max_{u,v\in V}{d_G(u,v)}$.
\item $\outbfs(s)$ ($\outbfs(S)$):~ an outgoing breadth-first-search (BFS) tree rooted at $s$ (supernode $S$).
\item $\inbfs(s)$ ($\inbfs(S)$):~ an incoming breadth-first-search (BFS) tree rooted at $s$ (supernode $S$).
\item $\inecc(s)$:~ the depth of tree $\inbfs(s)$.
\item $\outecc(s)$:~ the depth of tree $\outbfs(s)$.
\item $\rad(G)$:~ the radius of graph $G$, that is, $\min_{v\in V}\outecc(v)$.
\item $\outbfs(s, d)$:~ the tree obtained from $\outbfs(s)$ by truncating it at depth $d$.
\item $\inbfs(s, d)$:~ the tree obtained from $\inbfs(s)$ by truncating it at depth $d$.
\item $\nout(x,\ell)$:~ the $\ell$ closest outgoing vertices of $x$, where ties are broken arbitrarily.
\item $\nin(x,\ell)$:~ the $\ell$ closest incoming vertices of $x$, where ties are broken arbitrarily.
\item $\depth(v,T)$:~ the depth of vertex $v$ in the rooted tree $T$.
\item $\depth(T,W)$:~ $\max_{w\in W}\depth(w,T)$.
\item $\P(W)$:~ the power set of $W$.
\end{itemize}

In all the above defined notations, when dealing with dynamic graphs, we use subscript 
$t$ to indicate the time-stamps.

Throughout the paper we assume the graph $G$ is strongly connected, 
as otherwise  the diameter of $G$ is $\infty$, and even an empty subgraph of $G$ preserves its diameter.

We first formally define the notion of the $k$-diameter spanners that is used in the paper.
\begin{definition}[Diameter-spanner]
Given a directed graph $G=(V,E)$, a subgraph $H=(V,E'\subseteq E)$
is said to be $k$-diameter spanner of $G$ if $\dia(H)\leq \lceil k\cdot\dia(G)\rceil$.
\end{definition}

Next we introduce the notion of $\langle h_1, h_2 \rangle$-dominating-set-pair
which are a generalization of traditional $h$-dominating sets~\cite{haynes1998fundamentals,henning2014total}. 

\begin{definition}[Dominating-set-pair]
For a directed graph $G=(V,E)$, and a set-pair $(S_1, S_2)$ satisfying $S_1,S_2\subseteq V$, we say that 
$(S_1,S_2)$ is $\langle h_1,h_2\rangle$-dominating with size-bound  $\langle n_1,n_2\rangle$, 
if $|S_1|=O(n_1)$, $|S_2|=O(n_2)$, and either
(1) for each $x\in V$, $d_G(S_1,x)\leq h_1$, or 
(2) for each $x\in V$, $d_G(x,S_2)\leq h_2$. 

Here, $S_1$ is said to be $h_1$-out-dominating if it satisfies condition~1,
and $S_2$ is said to be $h_2$-in-dominating if it satisfies condition~2.
\end{definition}

We point here that it was shown by Cairo, Grossi and Rizzi~\cite{CairoGR16} that for any $n$-vertex
undirected graph with diameter $D$, there is a $\tiny{\frac{2^k-1}{2^k}}(D+1)$-dominating-set of size
$\O(n^{1+1/k})$. Using the $\frac{2^k-1}{2^k}(D+1)$-dominating-set 
they obtain a hierarchy of diameter and radius approximation algorithms. 
However, the construction extends to directed graph only when $k=1$.
In fact, we can prove that these bounds are unachievable for directed graphs whenever $k>1$.
For completeness, we revisit the construction for $k=1$ in Section~\ref{section:main}.

We below state few results that will useful in our construction.

\begin{lemma}
Let $G = (V, E)$ be an $n$-vertex directed graph. 
Let $n_p,n_q\geq 1$ be integers satisfying $n_pn_q= 8n\log n$, 
and let $S$ be a uniformly random subset of $V$ of size $n_p$.
Then with a high probability, $S$ has non-empty intersection with $\nin(v,n_q)$
and $\nout(v,n_q)$, for each $v\in V$.
\footnote{If the graph $G$ is undergoing either edge insertions, or edge deletions,
then with high probability the relation holds for each of the $O(n^2)$ instances of $G$.}
\label{lemma:probability}
\end{lemma}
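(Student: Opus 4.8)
The plan is to run the standard hitting-set (random sampling) argument: for each vertex $v\in V$ and each of the two directions, isolate the ``bad'' event that $S$ misses the corresponding target set, bound its probability, and finish with a union bound over the $2n$ such events. A preliminary observation makes the bound clean: since $G$ is strongly connected, for every $v$ the set $\nout(v,n_q)$ has exactly $\min(n_q,n)$ vertices, and likewise for $\nin(v,n_q)$; moreover if $n_q\ge n$ then $\nout(v,n_q)=\nin(v,n_q)=V$, so the claimed intersections are nonempty with probability $1$ (as $|S|=n_p\ge 1$). Hence we may assume $n_q<n$, so that each target set has size exactly $n_q$.

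First I would fix $v\in V$ and set $T=\nout(v,n_q)$, with $|T|=n_q$. Because $S$ is a uniformly random $n_p$-element subset of $V$, the probability that $S$ avoids $T$ is the hypergeometric quantity $\binom{n-n_q}{n_p}/\binom{n}{n_p}=\prod_{i=0}^{n_p-1}\frac{n-n_q-i}{n-i}$, and since $\frac{n-n_q-i}{n-i}\le\frac{n-n_q}{n}$ for every $i\ge 0$, this is at most $\bigl(1-\tfrac{n_q}{n}\bigr)^{n_p}$. Using $1-x\le e^{-x}$ together with the hypothesis $n_pn_q=8n\log n$,
\[
\Pr\bigl[\,S\cap\nout(v,n_q)=\emptyset\,\bigr]\ \le\ \Bigl(1-\tfrac{n_q}{n}\Bigr)^{n_p}\ \le\ e^{-n_pn_q/n}\ =\ e^{-8\log n}\ =\ n^{-8}.
\]
The same bound holds verbatim for $\nin(v,n_q)$.

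Next I would take a union bound over the $n$ choices of $v$ and the two directions: the probability that some required intersection is empty is at most $2n\cdot n^{-8}=2n^{-7}=o(1)$, which is the stated high-probability guarantee. For the footnote about a dynamically evolving graph, note that a purely incremental (resp.\ purely decremental) sequence of updates passes through at most $m+1=O(n^2)$ distinct instances of $G$, while $S$ is sampled once at the outset; a further union bound over these $O(n^2)$ instances gives a total failure probability of $O(n^2)\cdot 2n^{-7}=O(n^{-5})=o(1)$ (and replacing the constant $8$ by a larger one drives this below any prescribed inverse polynomial).

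I do not expect a genuine obstacle here: the only points that require a little care are that $S$ is sampled \emph{without} replacement — so one must use the product/hypergeometric estimate rather than treat the membership indicators as independent — and the degenerate regime $n_q\ge n$, which must be set aside (or handled trivially) before invoking the exponential tail bound.
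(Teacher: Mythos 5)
Your proof is correct, and it is precisely the standard hitting-set argument that the paper implicitly relies on (the paper states Lemma~\ref{lemma:probability} without proof, treating it as the well-known random-sampling lemma): the hypergeometric/product bound $\binom{n-n_q}{n_p}/\binom{n}{n_p}\le (1-n_q/n)^{n_p}\le n^{-8}$, a union bound over the $2n$ vertex--direction pairs, and a further union bound over the $O(n^2)$ graph instances for the footnote (which, as in the paper, presumes an oblivious update sequence independent of $S$). Your handling of sampling without replacement and of the degenerate case $n_q\ge n$ is careful and correct.
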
	

In order to dynamically maintain diameter-spanners, we will use the following 
result by Even and Shiloach~\cite{ES:81} on maintaining single-source shortest-path-trees.
Even and Shiloach gave the algorithm for maintaining shortest path tree
in the decremental setting, and their algorithm can be easily adapted to work in the incremental 
setting as well.
 
\begin{theorem}[ES-tree~\cite{ES:81}] There is a decremental (incremental) algorithm for
maintaining the first $k$ levels of a single-source shortest-path-tree, in a directed
or undirected graph, whose total running time, over all deletions (insertions), is
$O(km)$, where $m$ is the initial (final) number of edges in the graph.
\label{theorem:ES}
\end{theorem}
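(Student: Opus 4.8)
The plan is to reproduce the classical Even--Shiloach amortization argument, describing the decremental (edge-deletion) variant in detail; the incremental variant then follows by a symmetric argument with distances decreasing rather than increasing. Throughout, ``levels'' are BFS depths, i.e.\ $\ell(v)$ will track $d_G(s,v)$ for the fixed source $s$ as long as this value is at most $k$, and will be $\infty$ otherwise. The data structure consists of, for every vertex $v$: the label $\ell(v)\in\{0,1,\dots,k,\infty\}$; a parent pointer to an in-neighbour $u$ with $\ell(u)=\ell(v)-1$; and a counter $c(v)$ recording the number of in-neighbours $u$ with $\ell(u)=\ell(v)-1$ (the ``supports'' of $v$'s current label). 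All of this is built once by a BFS from $s$, in $O(m)$ time, and in addition I keep a bucket queue $Q$ with one bucket per level in $\{0,\dots,k\}$.

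On deletion of an edge $(u,v)$ I remove it from the adjacency lists and, if $\ell(u)=\ell(v)-1$, decrement $c(v)$; if this makes $c(v)=0$ while $\ell(v)\le k$, I insert $v$ into $Q$ with key $\ell(v)$. I then repeatedly extract from $Q$ a vertex $w$ with smallest key. If $c(w)$ has meanwhile become positive I discard $w$. Otherwise $d_G(s,w)$ has genuinely grown, so I increment $\ell(w)$ (setting $\ell(w)=\infty$ and dropping it once the value would exceed $k$), rescan $w$'s in-adjacency list to recompute $c(w)$ against the new label and to pick a new parent, re-insert $w$ into $Q$ if $c(w)$ is still $0$, and scan $w$'s out-adjacency list to update the counters $c(x)$ of the out-neighbours $x$ whose support set changed (after the increment, an $x$ with $\ell(x)=\ell(w)$ loses $w$ as a support, while an $x$ with $\ell(x)=\ell(w)+1$ gains it), inserting into $Q$ any $x$ whose counter drops to $0$. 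Processing vertices in nondecreasing order of label gives the usual Dijkstra-style correctness invariant: when $w$ is extracted, every in-neighbour of $w$ that could still lie at level $\ell(w)-1$ has already reached its final label for this update, so a single rescan per label increase suffices.

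For the running time the key fact is that each $\ell(v)$ is monotone nondecreasing over the whole deletion sequence and is bounded by $k+1$, hence it changes at most $k+1$ times in total. Every scan of an incoming or outgoing adjacency list is triggered by a change of some vertex's label, so the total scanning cost is $\sum_v (k+1)\bigl(\deg^{\mathrm{in}}(v)+\deg^{\mathrm{out}}(v)\bigr)=O(km)$. The remaining costs are $O(1)$ per deleted edge and $O(1)$ per queue operation (since $Q$ has integer keys in $[0,k]$ and is realised with buckets), and the number of queue operations is itself $O(km)$: each insertion is charged either to an actual label increase (at most $k+1$ per vertex, hence $O(nk)$ total) or to an increment of some counter $c(x)$, and $c(x)$ is incremented at most $\deg^{\mathrm{in}}(x)$ times per fixed value of $\ell(x)$ because each in-neighbour's label, being monotone, passes through the value $\ell(x)-1$ at most once. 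Summing everything gives total time $O(km)$ over all deletions, with $m$ the initial edge count.

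For the incremental case I would run the dual procedure: build the initial ES-tree by BFS on the starting graph in $O(m)$ time (where $m$ is now the final edge count), and on inserting an edge $(u,v)$, if $\ell(u)+1<\ell(v)$, set $\ell(v)\leftarrow\ell(u)+1$ and propagate the improvement to out-neighbours, again extracting from a bucket queue in nondecreasing order of the new labels; now each $\ell(v)$ is monotone nonincreasing and takes at most $k+1$ distinct finite values, so the identical charging argument yields $O(km)$. I expect the main obstacle to be purely the bookkeeping in the amortized analysis rather than any single idea: one must verify that ``false-alarm'' extractions and the repeated recomputations of the support counters $c(\cdot)$ cannot inflate the bound, which is exactly what the monotonicity of the labels and the per-level bound on counter updates are designed to control, and one must use the bucket-queue implementation (keys are small integers) to avoid an otherwise spurious $\log n$ factor in the $O(km)$ running time.
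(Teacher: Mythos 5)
Your proof is correct: it is the standard Even--Shiloach amortization (monotone levels bounded by $k$, one adjacency-list scan charged per level change, bucket queue for integer keys), including the routine incremental adaptation. Note that the paper itself offers no proof of this statement --- it imports the result as a black box from Even and Shiloach~\cite{ES:81}, remarking only that the decremental algorithm adapts to insertions --- so your write-up simply reconstructs the cited classical argument, and does so accurately.
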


\section{Static and Dynamic Maintenance of 1.5-Diameter Spanners}
\label{section:main}

Our main idea for computing sparse diameter spanner comes
from the recent line of works~\cite{RodittyW:13,Backurs18,ChechikLRSTW:14,HolzerPRW:14,CairoGR16} 
on approximating diameter in directed or undirected graphs.
Let $S_1$ be a uniformly random subset of $V$ of size $\sqrt{n\log n}$.
We take $a$ to be the vertex of the maximum depth in $\outbfs(S_1)$.
Also, $S_2$ is set to $\nin(a,\sqrt{n\log n})$.
By Lemma~\ref{lemma:probability}, with high probability, the set $\nin(a,\sqrt{n\log n})$ contains a vertex of 
$S_1$, if not, we can re-sample $S_1$, and compute $a,S_2$ again.
For convenience, throughout this paper, we refer to this constructed set-pair $(S_1,S_2)$ as a {\em valid set-pair}.

Using the idea of $(D+1)/2$-dominating-set construction of Cairo~et ~al.~\cite{CairoGR16}
for undirected graphs,
we show that a valid-set-pair is $\langle \lfloor{pD}\rfloor,\lceil{qD}\rceil \rangle$-dominating 
for any fractions $p,q>0$ satisfying $p+q= 1$, where $D$ is the diameter of the input graph.
If depth of $\outbfs(S_1)$ is at most $ \lfloor{pD}\rfloor$, then $S_1$ is trivially
$\lfloor{pD}\rfloor$-out-dominating. So let us consider the case that depth of $\outbfs(S_1)$ is greater than 
$\lfloor{pD}\rfloor$. Observe that in such a case $\inbfs(a,\lfloor{pD}\rfloor)$ does not contain
any vertex of $S_1$, and so $|\inbfs(a, \lfloor{pD}\rfloor)|=O(\sqrt{n\log n})$ 
because $S_1$ intersects the set $\nin(a,\sqrt{n\log n})$, whereas, $\inbfs(a, \lfloor{pD}\rfloor)\cap S_1$ 
is empty. Since $\inbfs(a, \lfloor{pD}\rfloor)$ is a strict subset of $S_2:=\nin(a,\sqrt{n\log n})$, 
we have that $\depth(\inbfs(S_2))$ is at most $\lceil qD\rceil$ if $G$ is unweighted,
and at most $W+qD$ is $G$ is edge-weighted with weights in range $[1,W]$.

Let $H$ denote the subgraph of $G$ which is union of $\inbfs(s)$ and $\outbfs(s)$, for $s\in S_1\cup S_2$.
Graph $H$ contains at most $O(n^{3/2}\sqrt{\log n})$ edges since $|S_1\cup S_2|=O(\sqrt{n\log n})$.
Observe that computation of graph $H$ takes $O(m\sqrt{n\log n})$ expected time (recall computation of $(S_1,S_2)$
is randomized).
If $S_1$ is $\lfloor D/2\rfloor$-out-dominating, then $H$ is a $1.5$-diameter spanner.
Indeed, for any $x,y\in V$, there is an $s\in S_1$ satisfying $d_H(s,y)=d_G(s,y)\leq D/2$,
and $d_H(x,s)=d_G(x,s)\leq D$, which implies $d_H(x,y)\leq 1.5 D$. In the case when $S_1$ is not
$\lfloor D/2\rfloor$-out-dominating, then as shown $S_2$ is $\lceil 0.5D\rceil$-in-dominating if $G$ is unweighted,
and $(W+D/2)$-in-dominating is $G$ is edge-weighted, so in this case, respectively, $d_H(x,y)$
is bounded by $\lceil 1.5D\rceil$ or $(W+1.5D)$, for every $x,y\in V$. 
We thus have the following theorem.

\begin{theorem}
For any directed unweighted graph $G$ with $n$ vertices and $m$ edges, we can compute
in $O(m\sqrt{n\log n})$ expected time a $1.5$-diameter spanner $H$ of $G$ with at most $O(n^{3/2}\sqrt{\log n})$ 
edges.

Moreover, if $G$ is edge-weighted, then $H$ satisfies the condition that $\dia(H)\leq 1.5 ~\dia(G)+W$,
where $W$ is an upper bound on the maximum edge-weight in $G$.
\label{theorem:static_diam1}
\end{theorem}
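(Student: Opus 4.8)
The plan is to instantiate the dominating-set-pair framework of the Preliminaries with the balanced split $p=q=\tfrac12$. First I would build a \emph{valid set-pair} exactly as sketched above: sample a uniformly random $S_1\subseteq V$ with $|S_1|=\sqrt{n\log n}$, let $a$ be a vertex of maximum depth in $\outbfs(S_1)$, and put $S_2:=\nin(a,\sqrt{n\log n})$; if $S_1\cap S_2=\emptyset$ I resample $S_1$ and recompute $a$ and $S_2$. Each round touches $O(m)$ edges (one multi-source BFS from $S_1$ and one BFS from $a$), and by Lemma~\ref{lemma:probability} (applied with $n_p,n_q=\Theta(\sqrt{n\log n})$) a single round succeeds with high probability, so the expected number of rounds is $O(1)$ and $(S_1,S_2)$ is produced in $O(m\sqrt{n\log n})$ expected time. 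Note the construction never refers to $\dia(G)$, so no a priori knowledge of the diameter is needed.

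The core of the argument is the claim that every valid set-pair is $\langle\lfloor D/2\rfloor,\lceil D/2\rceil\rangle$-dominating, where $D:=\dia(G)$, which I would prove by a dichotomy on $\depth(\outbfs(S_1))$. If that depth is at most $\lfloor D/2\rfloor$, then $S_1$ is $\lfloor D/2\rfloor$-out-dominating and condition~(1) of the definition holds. Otherwise $a$ is at distance more than $\lfloor D/2\rfloor$ from $S_1$, so the vertex set $B$ of $\inbfs(a,\lfloor D/2\rfloor)$ is disjoint from $S_1$; since $S_1$ meets $\nin(a,\sqrt{n\log n})$ and every element of $\nin(a,\sqrt{n\log n})$ is at least as close to $a$ as any vertex outside it, this forces $B\subseteq S_2$ (here one must handle the arbitrary tie-breaking in the definition of $\nin$ carefully). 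Now for any $x\in V$: if $d_G(x,a)\le\lfloor D/2\rfloor$ then $x\in B\subseteq S_2$; otherwise, along a shortest path from $x$ to $a$ (of length $d_G(x,a)\le D$) pick the vertex $w$ nearest to $a$ with $d_G(w,a)\le\lfloor D/2\rfloor$, so $w\in B\subseteq S_2$ and $d_G(x,w)=d_G(x,a)-d_G(w,a)\le D-\lfloor D/2\rfloor=\lceil D/2\rceil$. Hence $S_2$ is $\lceil D/2\rceil$-in-dominating and condition~(2) holds.

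Finally I would take $H$ to be the union, over all $s\in S_1\cup S_2$, of the trees $\inbfs(s)$ and $\outbfs(s)$ in $G$. Since $|S_1\cup S_2|=O(\sqrt{n\log n})$ and each tree has at most $n-1$ edges, $H$ has $O(n^{3/2}\sqrt{\log n})$ edges, and all the trees are built in $O(m\sqrt{n\log n})$ time. For the stretch, fix $x,y\in V$. In the first case of the dichotomy choose $s\in S_1$ with $d_G(s,y)\le\lfloor D/2\rfloor$; since $\inbfs(s)\subseteq H$ we get $d_H(x,s)=d_G(x,s)\le D$, since $\outbfs(s)\subseteq H$ we get $d_H(s,y)=d_G(s,y)\le\lfloor D/2\rfloor$, and the triangle inequality in $H$ gives $d_H(x,y)\le D+\lfloor D/2\rfloor\le\lceil 1.5D\rceil$. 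In the second case choose $s\in S_2$ with $d_G(x,s)\le\lceil D/2\rceil$; symmetrically $d_H(x,y)\le\lceil D/2\rceil+D\le\lceil 1.5D\rceil$. Thus $\dia(H)\le\lceil 1.5\,\dia(G)\rceil$, which is exactly the defining inequality of a $1.5$-diameter spanner. For the edge-weighted statement the construction is unchanged; only the second case degrades, since a shortest path from $x$ to $a$ need not hit the threshold $\lfloor D/2\rfloor$ exactly, and stepping one edge past it overshoots by less than $W$, so $S_2$ is only $(D/2+W)$-in-dominating and the same two-hop estimate yields $\dia(H)\le 1.5\,\dia(G)+W$. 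I expect the one genuinely delicate step to be the containment $B\subseteq S_2$ in the dichotomy (and, in the weighted case, bounding the overshoot by $W$); everything else is bookkeeping on BFS-tree sizes together with two applications of the triangle inequality.
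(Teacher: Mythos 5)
Your proposal is correct and follows essentially the same route as the paper: the same random $S_1$, the same choice of $a$ as the deepest vertex in $\outbfs(S_1)$, the same $S_2=\nin(a,\sqrt{n\log n})$ with the same resample rule, the same containment argument showing the truncated in-BFS ball around $a$ lies inside $S_2$, and the same two-hop triangle-inequality estimate through a dominating vertex (with the same $O(1)$ expected resamplings and $|S_1\cup S_2|$ BFS computations giving the stated time and size bounds, and the same ``overshoot by one edge'' argument giving the additive $W$ in the weighted case). One small wording slip: where you write ``pick the vertex $w$ nearest to $a$ with $d_G(w,a)\le\lfloor D/2\rfloor$'' you mean the vertex on the path with $d_G(w,a)$ as large as possible subject to that bound (i.e.\ $d_G(w,a)=\lfloor D/2\rfloor$ in the unweighted case), as the equation you write next makes clear.
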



The construction of $1.5$-diameter spanners is {\em quite trivial}, however,
their dynamic maintenance is {\em challenging}.
In order to maintain a $1.5$-diameter spanner dynamically, a naive approach would be
to dynamically maintain the valid-sets. We face two obstacles:
(i) the first being dynamic maintenance of a vertex $a$ having maximum depth in tree $\outbfs(S_1)$,
and (ii) the second is dynamically maintaining  the set $\nin(a,n_q)$. 
We see in the next subsection, how to tackle these issues.

\subsection{Dynamic Maintenance of Dominating Sets}

In this subsection, we provide efficient algorithms for maintaining a dominating-set-pair.
We first observe that the static construction of 
dominating-set-pair can be even further generalized as given in the following lemma
(see the Appendix for its proof).

\begin{lemma}
For any integers $n_p,n_q\geq 1$ satisfying $n_p\cdot n_q= 8n\log n$, and any directed graph 
$G=(V,E)$ with $n$ vertices and $m$ edges, in $O(m)$ expected time
we can compute a set-pair $(S_1,S_2)$ of size bound $\langle n_p,n_q\rangle$ 
which is $\langle \lfloor{p~\inecc(a)}\rfloor,\lceil{q~\inecc(a)}\rceil \rangle$-dominating 
for some vertex $a\in V$ and any arbitrary fractions $p,q$ satisfying $p+q= 1$.
\label{lemma:dom-set-rand}
\end{lemma}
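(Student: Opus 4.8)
The plan is to mimic the static argument already sketched in Section~\ref{section:main}, but to replace the two ``global'' steps---finding a deepest vertex $a$ in $\outbfs(S_1)$ and computing $\nin(a,n_q)$---with purely \emph{local} computations that can be carried out in $O(m)$ expected time and whose correctness only needs Lemma~\ref{lemma:probability}. First I would sample $S_1\subseteq V$ uniformly at random of size $n_p$. By Lemma~\ref{lemma:probability}, with high probability $S_1$ meets $\nin(v,n_q)$ for every $v\in V$; condition on this event (re-sampling if it fails, which costs only a constant factor in expectation, and is detectable because below we will always end up with small in-neighborhoods). Run a single multi-source BFS $\outbfs(S_1)$, which takes $O(m)$ time, and let $a$ be a vertex of maximum depth in it; set $D^\ast:=\inecc(a)$ in the sense that $D^\ast$ is this depth, i.e. $D^\ast=\depth(\outbfs(S_1))$. (The statement phrases the dominating radii in terms of $\inecc(a)$; I would simply take $a$ to be the deepest vertex of $\outbfs(S_1)$, so that its distance \emph{from} $S_1$ equals the depth of the tree, and argue the dominating property with respect to that value; if a literal reading of $\inecc(a)$ is wanted, note that $\inecc(a)\ge d_G(S_1,a)=D^\ast$, so any $\langle\lfloor pD^\ast\rfloor,\lceil qD^\ast\rceil\rangle$-dominating statement is implied.)

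The key dichotomy is exactly the one in the static proof. If $D^\ast=\depth(\outbfs(S_1))\le \lfloor p\,D^\ast\rfloor$---which, since $p<1$, forces $D^\ast=0$, a degenerate case where $S_1=V$ trivially dominates---or more usefully if the deepest any vertex sits below $S_1$ is already at most $\lfloor p\,\inecc(a)\rfloor$ for whatever target we are comparing against, then $S_1$ is $\lfloor p\,\inecc(a)\rfloor$-out-dominating and we are done with condition (1). Otherwise the depth exceeds $\lfloor p\,\inecc(a)\rfloor$, and then I would run $\inbfs(a)$ truncated at depth $\lfloor p\,\inecc(a)\rfloor$: since $S_1$ intersects $\nin(a,n_q)$ but this truncated in-ball contains no vertex of $S_1$ (any $S_1$-vertex within distance $\lfloor p\,\inecc(a)\rfloor$ of $a$ would contradict the depth being large---here I use $p+q=1$ and the choice of $a$ as deepest, so a closer $S_1$-vertex would give $a$ a shorter distance from $S_1$), the truncated in-ball is a strict subset of $\nin(a,n_q)$ and hence has at most $n_q$ vertices. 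Take $S_2:=\inbfs(a,\lfloor p\,\inecc(a)\rfloor)$ as a vertex set; then $|S_2|\le n_q$, and every vertex of $V$ reaches $a$ (strong connectivity) along a shortest path whose last $\lfloor p\,\inecc(a)\rfloor$ hops lie inside $S_2$, so $d_G(x,S_2)\le \inecc(a)-\lfloor p\,\inecc(a)\rfloor\le\lceil q\,\inecc(a)\rceil$ for all $x$, i.e. $S_2$ is $\lceil q\,\inecc(a)\rceil$-in-dominating, which is condition (2). The size bound $\langle n_p,n_q\rangle$ and the $O(m)$ expected running time (one random sample, a constant expected number of re-samples, and $O(1)$ BFS passes) follow immediately.

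I would then assemble this into the lemma statement: the pair $(S_1,S_2)$ with $|S_1|=O(n_p)$, $|S_2|=O(n_q)$ is $\langle\lfloor p\,\inecc(a)\rfloor,\lceil q\,\inecc(a)\rceil\rangle$-dominating for the chosen $a$, for any fixed fractions $p,q>0$ with $p+q=1$. The main obstacle---and the only place that needs care---is verifying that the truncated in-ball around $a$ genuinely misses $S_1$ and is therefore a \emph{strict} subset of $\nin(a,n_q)$; this is what lets us invoke Lemma~\ref{lemma:probability} to bound $|S_2|$ by $n_q$ rather than something larger. This hinges on choosing $a$ to be of maximum out-distance from $S_1$ together with the identity $p+q=1$: any $S_1$-vertex at distance $\le\lfloor p\,\inecc(a)\rfloor$ from $a$ would put $a$ within that same distance of $S_1$ along the reverse direction only if distances were symmetric, which they are not for directed graphs, so one has to phrase it correctly---the right statement is that if such an $S_1$-vertex $s$ existed then the vertex realizing the maximum depth would still be at depth $>\lfloor p\,\inecc(a)\rfloor$ from $S_1$, and the contradiction is obtained by comparing the \emph{definition} of $a$ as the deepest vertex with the assumed existence of a short $a$-to-$S_1$ in-path only after recalling that we are in the branch where $\depth(\outbfs(S_1))>\lfloor p\,\inecc(a)\rfloor$ and $a$ itself is that deepest vertex, so $d_G(S_1,a)=\depth(\outbfs(S_1))>\lfloor p\,\inecc(a)\rfloor$, which is precisely the distance a truncated-at-$\lfloor p\,\inecc(a)\rfloor$ in-BFS from $a$ can reach---so no $S_1$-vertex lies in it. Everything else is routine BFS bookkeeping.
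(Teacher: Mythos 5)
Your proposal is correct and takes essentially the same route as the paper's proof: sample $S_1$, take $a$ to be the deepest vertex in $\outbfs(S_1)$, and use the dichotomy on whether $\depth(\outbfs(S_1))$ exceeds $\lfloor p\,\inecc(a)\rfloor$ to conclude that the truncated in-ball $\inbfs(a,\lfloor p\,\inecc(a)\rfloor)$ misses $S_1$ and is therefore contained in $\nin(a,n_q)$, giving the $\lceil q\,\inecc(a)\rceil$-in-domination. The only immaterial difference is that you take $S_2$ to be the truncated in-ball itself rather than $\nin(a,n_q)$ as the paper does; the size bound, the domination argument, and the $O(m)$ expected time (with $O(1)$ expected re-samplings) are the same.
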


Our main approach for dynamically maintaining a dominating-set-pair
is to use the idea of lazy updates. We formalize
this through the following lemma.
\begin{lemma}
Let $G$ be a dynamic graph whose updates are insertions (or deletions) of edges, and $S_1$ be a 
(non-dynamic) subset of $V$ of size $n_p$. Let $t\geq t_0$ be two time instances, and let
$S_2=\nin_{t_0}(a,n_q)$, for some $a\in V$.  Let $\ell_0=\depth_{t_0}(\outbfs(S_1))$ and $\epsilon\in[0,1/2]$ 
be such that $\depth_t(\outbfs(S_1))$ and  $\depth_t(a,\outbfs(S_1))$ 
lie in the range $[\ell_0(1-\epsilon),\ell_0(1+\epsilon)]$.  Then for any $p,q>0$ satisfying $p+q=1$, set-pair
$(S_1,S_2)$ is $\langle \lfloor{(p+2\epsilon)~\inecc_{t}(a)}\rfloor,\lceil{(q+2\epsilon)~\inecc_{t}(a)}\rceil 
\rangle$ dominating at time $t$ if $S_1\cap S_2$ is non-empty, and\\[1mm]
(i) if~ $\depth_{t_0}(a,\outbfs(S_1))\geq (1-\epsilon)\ell_0$, when restricted to edge deletions case.\\[1mm]
(ii) if~ $\depth_{t}(\inbfs(a))\geq (1-\epsilon)\depth_{t_0}(\inbfs(a))$, when restricted to edge insertions case.
\label{lemma:dynamic-dom-set-1}
\end{lemma}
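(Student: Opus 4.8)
\textbf{Proof plan for Lemma~\ref{lemma:dynamic-dom-set-1}.}
The plan is to mimic the static argument behind Lemma~\ref{lemma:dom-set-rand}, but to absorb the ``drift'' of the BFS depths between time $t_0$ and time $t$ into the additive slack $2\epsilon$. I would set $\ell = \depth_t(\outbfs(S_1))$ and dichotomize on whether $S_1$ is already a good out-dominating set at time $t$. Concretely, if $\ell \le \lfloor (p+2\epsilon)\inecc_t(a)\rfloor$, then every vertex is within distance $(p+2\epsilon)\inecc_t(a)$ of $S_1$, so condition~(1) of the dominating-set-pair definition holds and $S_1$ is $\lfloor(p+2\epsilon)\inecc_t(a)\rfloor$-out-dominating, and we are done. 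So the interesting case is $\ell > \lfloor(p+2\epsilon)\inecc_t(a)\rfloor$, and there I want to show $S_2$ is $\lceil(q+2\epsilon)\inecc_t(a)\rceil$-in-dominating.

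The crux of that case is the same ball-counting trick as in Section~\ref{section:main}: I want to argue that $\inbfs_t(a, r)$ for the appropriate radius $r$ is a strict subset of $S_2 = \nin_{t_0}(a, n_q)$, so that $\depth_t(\inbfs(S_2))$ can be bounded. The subtlety is that $S_2$ was frozen at time $t_0$, so ``$\inbfs_t(a,r) \subsetneq S_2$'' is no longer automatic — I need that the time-$t$ ball around $a$ of radius $r$ contains fewer than $n_q$ vertices \emph{and} that this ball, being small, avoids $S_1$ (which forces it to sit inside $\nin_{t_0}(a,n_q)$ via $S_1 \cap S_2 \neq \emptyset$). To get ``avoids $S_1$'' I use that $\depth_t(a,\outbfs(S_1)) > \ell(1-\epsilon)$-type lower bounds from the hypothesis: if some $s \in S_1$ were within distance $r$ of $a$ (in the right direction), then $\depth_t(\outbfs(S_1))$ would be small, contradicting the case assumption $\ell > \lfloor(p+2\epsilon)\inecc_t(a)\rfloor$ combined with the drift bound. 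Here is where cases (i) and (ii) diverge: in the deletion case distances only grow, so the ball $\nin_t(a,\cdot)$ is contained in $\nin_{t_0}(a,\cdot)$ as a vertex set once we control radii, and hypothesis~(i), $\depth_{t_0}(a,\outbfs(S_1)) \ge (1-\epsilon)\ell_0$, is what lets me transfer the ``no $S_1$ vertex near $a$'' fact backward to time $t_0$; in the insertion case distances shrink, the inclusion goes the other way, and hypothesis~(ii), $\depth_t(\inbfs(a)) \ge (1-\epsilon)\depth_{t_0}(\inbfs(a))$, is what prevents $\inecc_t(a)$ from collapsing so much that the slack $2\epsilon$ is insufficient.

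Once I know $\inbfs_t(a, r) \subsetneq \nin_{t_0}(a, n_q) = S_2$ for $r = \lfloor (p+2\epsilon)\inecc_t(a)\rfloor$ (or the analogous radius), the finish is routine arithmetic: every vertex $x$ reaches $a$ within $\inecc_t(a)$, and from any vertex in the ball $\inbfs_t(a,r)$ — in particular from any $s \in S_2$ on the relevant path — the remaining distance to $a$ is at most $\inecc_t(a) - r \le \lceil(q+2\epsilon)\inecc_t(a)\rceil$ after accounting for the $\ell_0(1\pm\epsilon)$ drift in $\depth_t(a,\outbfs(S_1))$ versus $\depth_t(\outbfs(S_1))$; turning the two-sided drift bound into the clean $+2\epsilon$ additive loss is the one place where I would actually push the inequalities (roughly: one $\epsilon$ from $\ell$ versus $\ell_0$, one more from $\depth_t(a,\outbfs(S_1))$ versus $\ell$). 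I expect the main obstacle to be exactly this bookkeeping — correctly tracking which of the four quantities $\ell_0$, $\ell$, $\depth_t(a,\outbfs(S_1))$, $\depth_{t_0}(\inbfs(a))$, $\depth_t(\inbfs(a))$ is being compared to which, and in which direction monotonicity of distances under the restricted update type (insertions vs.\ deletions) makes the frozen set $S_2$ still serviceable — rather than anything conceptually deep, since the skeleton is identical to the static construction already given.
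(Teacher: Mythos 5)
Your plan follows essentially the same route as the paper's proof: a case split on whether $S_1$ is already out-dominating, then the ball argument ($\inbfs(a,r)\cap S_1=\emptyset$ forces $\inbfs(a,r)\subseteq S_2$ via $S_1\cap S_2\neq\emptyset$), with monotonicity of distances under the restricted update type plus hypothesis (i) (deletions) or (ii) (insertions) bridging the $t_0$-versus-$t$ gap, and the drift absorbed into the $2\epsilon$ slack. The only nit is a phrasing slip in your final step: the quantity $\inecc_t(a)-r$ bounds $d_t(x,S_2)$, i.e.\ $\depth_t(\inbfs(S_2))$, not the residual distance from a ball vertex to $a$, but the intended arithmetic matches the paper's.
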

\begin{proof}
Let $\delta$ and $\delta_0$ respectively denote the values $\inecc_{t}(a)$, and $\inecc_{t_0}(a)$.

We first analyse the edge deletions case. If depth of $\outbfs(S_1)$ at the time $t$ is bounded 
by $\lfloor(p+\epsilon)\delta\rfloor$, then $S_1$ is $\lfloor(p+\epsilon)\delta\rfloor $-out-dominating.
So let us assume that $\depth_t(\outbfs(S_1))$ is strictly greater than $(p+\epsilon)\delta $.
Then $(p+\epsilon)\delta  <\depth_t(\outbfs(S_1))\leq (1+\epsilon)\ell_0$. 
So $\ell_0>\big(\frac{p+\epsilon}{1+\epsilon}\big)\delta >p\delta $. Note that $\depth_{t_0}(a,\outbfs(S_1))$ 
is bounded below by $(1-\epsilon)\ell_0>(1-\epsilon)p\delta>(p-\epsilon)\delta$, so,
at time $t_0$, the truncated tree $\inbfs_{t_0}(a,(p-\epsilon)\delta )$ must have empty intersection 
with $S_1$. Since $S_2=\inv_{t_0}(a,n_q)$ intersects $S_1$, at time $t_0$, 
$\inbfs_{t_0}(a,(p-\epsilon)\delta )$ must be contained in $\inv_{t_0}(a,n_q)=S_2$. 
The crucial point to observe is that in the decremental scenario, the set $\inbfs(a,(p-\epsilon)\delta )$ 
can only reduce in size with time. Thus $\inbfs_{t}(a,(p-\epsilon)\delta )\subseteq
\inbfs_{t_0}(a,(p-\epsilon)\delta )\subseteq S_2$. Since $S_2$ contains 
$\inbfs_{t}(a,(p-\epsilon)\delta )$, we have $\depth_t(\inbfs(S_2))\leq \lceil \depth_t(\inbfs(a))-(p-\epsilon)\delta 
\rceil= \lceil \delta -(p-\epsilon)\delta \rceil= \lceil (q+\epsilon)\delta \rceil$.
Thus, if $S_1$ is not $\lfloor(p+\epsilon)\delta\rfloor$-out-dominating, then $S_2$ is 
$\lceil (q+\epsilon)\delta \rceil $-in-dominating set.

We next analyse the edge insertions case.
If $\depth_{t_0}(a,\outbfs(S_1))\leq p\delta$, then $S_1$ is $(p+2\epsilon)\delta$ out-dominating
at time $t_0$ since $\depth_{t_0}(\outbfs(S_1))= \ell_0\leq \depth_t(a,\outbfs(S_1))/(1-\epsilon)$
$\leq \depth_{t_0}(a,\outbfs(S_1))/(1-\epsilon)\leq p\delta /(1-\epsilon)\leq (1+2\epsilon)p\delta 
\leq (p+2\epsilon)\delta $. If depth of $a$ in $\outbfs(S_1)$ at time $t_0$ is greater than $p\delta $, then the 
truncated tree $\inbfs_{t_0}(a,p\delta )$ must have an empty intersection with set $S_1$, however, 
the set $S_2=\nin_{t_0}(a,n_q)$ has a non-empty intersection with $S_1$, thus $\inbfs_{t_0}(a,p\delta )
\subseteq \nin_{t_0}(a,n_q)=S_2$. So $\depth_{t_0}(\inbfs(S_2))\leq \lceil\depth_{t_0}(\inbfs(a))
-p\delta \rceil\leq \lceil \delta/(1-\epsilon) -p\delta \rceil\leq \lceil (q+2\epsilon) \delta \rceil$.
Thus, $(S_1,S_2)$ is $\langle \lfloor{(p+2\epsilon)\delta}\rfloor,\lceil{(q+2\epsilon)\delta}\rceil \rangle$ 
dominating at time $t_0$. Observe that as edges are added to $G$, the depth of vertices in $\outbfs(S_1)$ 
and $\inbfs(S_2)$ can only decrease with time, so the set-pair $(S_1,S_2)$ must also be 
$\langle \lfloor{(p+2\epsilon)\delta}\rfloor,\lceil{(q+2\epsilon)\delta}\rceil \rangle$ dominating at time $t$.
\end{proof}

We now present algorithms that
for a given $\epsilon\in[0,1/2]$, and integers $n_p,n_q\geq 1$ satisfying $n_pn_q=8n\log n$,
incrementally (and decrementally) maintains for an $n$-vertex graph $G$
a triplet $(S_1,S_2,a)\in \P(V)\times \P(V)\times V$ such that at any time instance $t$,
\begin{enumerate}[{\normalfont (i)}]
\item $|S_1|=n_p$, and $S_2=\nin_{t_0}(a,n_q)$ for some $t_0\leq t$,
\item $\depth_t(\outbfs(S_1))$, $\depth_t(a,\outbfs(S_1))$, $\depth_{t_0}(a,\outbfs(S_1))$
		lies in range $[\ell_0(1-\epsilon),\ell_0(1+\epsilon)]$, where $\ell_0=\depth_{t_0}(\outbfs(S_1))$, and
\item $\depth_{t}(\inbfs(a))\geq (1-\epsilon)\depth_{t_0}(\inbfs(a))$.
\end{enumerate}

\noindent{\em Incremental scenario.}
We first discuss the  incremental scenario. The main obstacle in this setting is 
to dynamically maintain a vertex $a$ having large depth in $\outbfs(S_1)$.
We initialize $S_1$ to a uniformly random subset of $V$ containing $n_p$ vertices, and
store in $\ell_0$ the depth of tree $\outbfs(S_1)$. Next we compute a set $\textsc{far}$ 
that consist of all those vertices whose distance from $S_1$ is at least $(1-\epsilon)\ell_0$,
and set $A$ to be a uniformly random subset of $\textsc{far}$ of size $\min\{8\log n,|\textsc{far}|\}$.
We initialize $a$ to any arbitrary vertex in set $A$, and set $S_2$ to $\nin(a,n_q)$.
Throughout the algorithm whenever $S_1\cap S_2$ is empty, then we 
recompute $S_1,\ell_0,A,a$, and $S_2$. The probability of such an event is inverse polynomial in $n$.

We use Theorem~\ref{theorem:ES} to dynamically maintain $\depth(\outbfs(S_1))$
and depth of individual vertices in $\outbfs(S_1)$. This takes $O(mD)$ time for any fixed $S_1$.
Whenever $\depth(\outbfs(S_1))$ falls below the value $(1-\epsilon)\ell_0$, then we recompute 
$\ell_0, A,a$ and $S_2$. For any fixed $S_1$, this happens at most $O(\epsilon^{-1} \log n)$ times, and
so takes $O(m\epsilon^{-1} \log n)$ time  in total.
Whenever depth of a vertex lying in $A$ falls below the value $(1-\epsilon)\ell_0$, then we remove
that vertex from $A$. This step takes $O(m|A|)=O(m\log n)$ time in total.
If $\depth(a,\outbfs(S_1))$ falls below the value $(1-\epsilon)\ell_0$, then we replace $a$
by an arbitrary vertex in $A$, and recompute $S_2$.

If $A$ becomes empty and $\depth(\outbfs(S_1))$ is still greater than $(1-\epsilon)\ell_0$,
then we recompute $\textsc{far},A,a$, and $S_2$. Observe that for any fixed $\ell_0$
this happens at most $\log n$ times. This is because if $\textsc{far}_1$ and $\textsc{far}_2$
is a partition of $\textsc{far}$ such that the depth of all vertices  in $\textsc{far}_1$
falls below $(1-\epsilon)\ell_0$ earlier than the vertices in  $\textsc{far}_2$, then with high probability
$A$ has a non-empty intersection with $\textsc{far}_2$. This holds true as we assume adversarial model
in which edge insertions are independent of choice of $A$.  Thus with high probability. each time $A$ is
recomputed the size of set $\textsc{far}$ decreases by at least half, assuming $\ell_0$ remains fixed.
Since $\ell_0$ changes at most $\epsilon^{-1} \log n$ times, the set $A$ is recomputed at most
$\epsilon^{-1} \log^2 n$ times, and vertex $a$ can thus change $O(\epsilon^{-1} \log^2 n |A|)=
O(\epsilon^{-1} \log^3 n)$ times. 

Finally, for vertex $a$ we maintain $\depth(\inbfs(a))$ using ES-tree. 
Since, $a$ changes at most $O(\epsilon^{-1} \log^3 n)$ times, total time for maintaining $\depth(\inbfs(a))$ 
is $O(mD\epsilon^{-1} \log^3 n)$. Whenever $\depth(\inbfs(a))$
falls by a factor of $(1-\epsilon)$, then we re-set $S_2$ to $\nin(a,n_q)$. For a fixed $a$, $S_2=\nin(a,n_q)$
is updated at most $O(\epsilon^{-1} \log n)$ times. So in total $S_2$ changes at most $O(\epsilon^{-2} \log^4 n)$
times, and the total time for maintaining set $S_2$, throughout the edge insertions is $O(m\epsilon^{-2} \log^4 n)$.
Thus, the total time taken by the algorithm is $O(\epsilon^{-1}D \log^3 n+\epsilon^{-2} \log^4 n)$, where $D$ denotes
the maximum diameter of $G$ throughout the sequence of edge updates. Also the expected number of times the
triplet $(S_1,S_2,a)$ changes is $O(\epsilon^{-2} \log^4 n)$.\\

\noindent{\em Decremental Scenario.}
We now discuss the simpler scenario of edge deletions.
As before, we initialize $S_1$ to be a uniformly random subset of $V$ containing $n_p$ vertices.
Next we compute $\outbfs(S_1)$ and set $a$ to be an arbitrary vertex having maximum depth in $\outbfs(S_1)$.
Also $S_2$ is set to $\nin(a,n_q)$. We store in $\ell_0$ the depth of tree $\outbfs(S_1)$, and as
in incremental setting use Theorem~\ref{theorem:ES} to dynamically maintain the depth of $\outbfs(S_1)$.
This takes $O(mD)$ time in total.
Whenever $\depth(\outbfs(S_1))$ exceeds the value $(1+\epsilon)\ell_0$, then we recompute 
$\ell_0, a$ and $S_2$. For any fixed $S_1$ such an event happens at most $O(\epsilon^{-1} \log n)$ times, and
takes in total $O(m\epsilon^{-1} \log n)$ time.
Also whenever $S_1\cap S_2$ is non-empty, then we 
recompute $S_1,S_2,a,\ell_0$, and reinitialize the Even and Shiloach data-structure.
The probability of such an event is inverse polynomial in $n$.
So the expected amortized update time for edge deletions is $O(D+\epsilon^{-1} \log n)$.
Also if $t_0$ is the time when $\ell_0,a,S_2$ were last updated and $t$ is the current time then
$\depth_{t_0}(a,\outbfs(S_1))=\ell_0$,
$\depth_{t}(\outbfs(S_1)),\depth_{t}(a,\outbfs(S_1))\in [\ell_0,(1+\epsilon)\ell_0]$. Thus the conditions (i) and (ii) hold.
Also $\depth(\inbfs(a))$ only increases with time, so condition (iii) trivially holds.
So, the amortized update time of the procedure is $O(D+\epsilon^{-1} \log n)$ in the decremental scenario, 
where $D$ denotes the maximum diameter of G throughout the sequence of edge updates;
and the expected number of times when triplet $(S_1,S_2,a)$ changes $O(\epsilon^{-2} \log^4 n)$. 

The following theorem is immediate from the above discussion and Lemma~\ref{lemma:dynamic-dom-set-1}.

\begin{theorem}
For any $\epsilon\in[0,1/2]$, and any integers $n_p,n_q\geq 1$ satisfying $n_pn_q=8n\log n$,
there exists an algorithm that incrementally/decrementally maintains for an $n$-vertex directed graph
a set-pair $(S_1,S_2)$ of size-bound $\langle n_p,n_q \rangle$ which is 
$\langle \lfloor{(p+2\epsilon)~\inecc(a)}\rfloor,\lceil{(q+2\epsilon)~\inecc(a)}\rceil \rangle$-%
dominating, for some $a\in V$, and any arbitrary fractions $p,q>0$ satisfying $p+q=1$.
 
The expected amortized update time of the algorithm is $O(\epsilon^{-1}\maxD \log^3 n+\epsilon^{-2} \log^4 n)$ 
in incremental setting and $O(\maxD+\epsilon^{-1} \log n)$ in decremental setting,
where, $\maxD$ denotes the maximum diameter of the graph throughout the sequence of edge updates.
Also, the algorithm ensures that with high probability the triplet $(S_1,S_2,a)$ changes 
at most $O(\epsilon^{-2} \log^4 n)$ times in the incremental setting, and 
at most $O(\epsilon^{-1} \log n)$ times in the decremental setting.
\label{theorem:dynamic-dom}
\end{theorem}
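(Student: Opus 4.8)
The plan is to derive the theorem as an immediate corollary of the algorithmic scheme sketched above together with Lemma~\ref{lemma:dynamic-dom-set-1}. Concretely, I would maintain a triplet $(S_1,S_2,a)\in\P(V)\times\P(V)\times V$ for which invariants (i)--(iii) stated just before the theorem hold at every time $t$; Lemma~\ref{lemma:dynamic-dom-set-1} then certifies (case (ii) under insertions, case (i) under deletions) that $(S_1,S_2)$ is $\langle\lfloor(p+2\epsilon)\inecc(a)\rfloor,\lceil(q+2\epsilon)\inecc(a)\rceil\rangle$-dominating, while invariant (i) gives the size bound $\langle n_p,n_q\rangle$. Throughout, $S_1$ is a uniformly random subset of $V$ of size $n_p$, and whenever the current $S_2=\nin(a,n_q)$ fails to intersect $S_1$ the whole triplet is resampled/recomputed from scratch; by Lemma~\ref{lemma:probability} this failure event has inverse-polynomial probability over all $O(n^2)$ graph instances, so it adds only $\O(1)$ to the expected amortized cost and makes the algorithm Las Vegas.

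For the decremental direction the argument is short. I would initialize $a$ to a deepest vertex of $\outbfs(S_1)$, store $\ell_0=\depth(\outbfs(S_1))$, and use the ES-tree of Theorem~\ref{theorem:ES}, truncated at depth $\maxD$ (hence total cost $O(m\maxD)$ over all deletions), to track $\depth(\outbfs(S_1))$ and $\depth(a,\outbfs(S_1))$. Since deletions only increase depths, invariant (iii) is automatic and $\depth_{t_0}(a,\outbfs(S_1))=\ell_0$ supplies the lower bound that case (i) of Lemma~\ref{lemma:dynamic-dom-set-1} needs; the only event to react to is $\depth(\outbfs(S_1))$ exceeding $(1+\epsilon)\ell_0$, at which point I recompute $\ell_0,a,S_2$. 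For fixed $S_1$ this happens $O(\epsilon^{-1}\log n)$ times, giving amortized update time $O(\maxD+\epsilon^{-1}\log n)$ and $O(\epsilon^{-1}\log n)$ changes of $(S_1,S_2,a)$.

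The incremental direction is where the real work lies, and I expect the counting of recomputations to be the \emph{main obstacle}: edge insertions only decrease depths, so a vertex currently far from $S_1$ may cease to be far, and recomputing a deepest vertex after each update is too expensive. The remedy is to keep a random sample $A$ of size $\Theta(\log n)$ drawn from $\textsc{far}$, the set of vertices at distance $\ge(1-\epsilon)\ell_0$ from $S_1$, and to let $a$ be any surviving element of $A$; ES-trees on $\outbfs(S_1)$ and on $\inbfs(a)$ (the latter re-initialized whenever $a$ changes) track the relevant depths. The delicate step is bounding how often $A$, hence $a$, is recomputed: under the oblivious-adversary assumption, whenever $A$ is exhausted while $\depth(\outbfs(S_1))$ is still above $(1-\epsilon)\ell_0$, a fresh random $A$ hits, with high probability, the later-dying half of $\textsc{far}$, so the size of $\textsc{far}$ halves between two recomputations sharing the same $\ell_0$; since $\ell_0$ takes only $O(\epsilon^{-1}\log n)$ distinct values, $A$ is recomputed $O(\epsilon^{-1}\log^2 n)$ times and $a$ changes $O(\epsilon^{-1}\log^3 n)$ times. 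Maintaining $\depth(\inbfs(a))$ across all insertions then costs $O(m\maxD\epsilon^{-1}\log^3 n)$, and re-setting $S_2=\nin(a,n_q)$ each time $\depth(\inbfs(a))$ drops by a $(1-\epsilon)$ factor (an extra $O(\epsilon^{-1}\log n)$ factor per fixed $a$) yields $O(\epsilon^{-2}\log^4 n)$ changes of $(S_1,S_2,a)$ and expected amortized update time $O(\epsilon^{-1}\maxD\log^3 n+\epsilon^{-2}\log^4 n)$. Verifying that the maintained triplet satisfies (i)--(iii) at every $t$ and invoking Lemma~\ref{lemma:dynamic-dom-set-1} completes the proof.
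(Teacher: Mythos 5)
Your proposal reproduces the paper's construction essentially verbatim: for both directions you maintain the same triplet $(S_1,S_2,a)$ with invariants (i)--(iii), resample when $S_1\cap S_2$ is empty, and in the incremental case use the same $\Theta(\log n)$-size random sample $A\subseteq\textsc{far}$ with the identical halving/oblivious-adversary argument to bound the number of recomputations of $A$, $a$, and $S_2$, leading to the stated $O(\epsilon^{-2}\log^4 n)$ and $O(\epsilon^{-1}\log n)$ change bounds and the corresponding amortized update times. This is the paper's own proof, correctly recovered.
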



\subsection{Dynamic Algorithms for $1.5$-Diameter Spanners}

We consider two model for maintaining the diameter spanners, namely, the {\em explicit} model and the {\em implicit} model.
The {\em explicit} model maintains at each stage all the edges of a diameter spanner of the current graph.
In the model of {\em implicitly} maintaining the diameter spanner, the goal is to have a 
data-structure that efficiently supports the following operations:
(i) $\textsc{update}(e)$ that adds to or remove from the graph $G$ the edge $e$, and
(ii) $\textsc{query}(e)$ that checks if the diameter spanner contains edge $e$.

We first consider the explicit maintenance of diameter spanners.

Let $\A$ be an algorithm that uses Theorem~\ref{theorem:dynamic-dom} to incrementally (or decrementally)
maintain at any time $t$, a 
$\langle\lfloor(1/2+\epsilon)\inecc_t(a)\rfloor,\lceil (1/2+\epsilon)\inecc_t(a)\rceil\rangle$-dominating 
set-pair $(S_1,S_2)$ of size bound $\langle \sqrt{n\log n},\sqrt{n\log n} \rangle$, where $a\in V$.
We dynamically maintain a subgraph $H$ which is union of $\inbfs(s)$ and $\outbfs(s)$, for $s\in S_1\cup S_2$.
This takes in total $O(m\maxD|S_1\cup S_2|)=O(m\cdot \maxD\sqrt{n\log n})$ time,
where $\maxD$ is the maximum diameter of the graph throughout the sequence of edge updates. 
Observe that similar to Theorem~\ref{theorem:static_diam1}, it can be shown that at any time instance
subgraph $H$ is a $(1/2+\epsilon)$-diameter spanner of $G$, and it contains at most $O(n\sqrt{n\log n})$ edges.
Let $\emph{T}$ be the expected amortized update time of $\A $
for maintaining $(S_1,S_2)$, and let $\emph{C}$ be the total number of times the  
pair $(S_1,S_2)$ changes throughout the algorithm run.
Then the total time for maintaining $H$ is 
$O(\emph{C}\cdot m\cdot \maxD\sqrt{n\log n}+m\cdot \emph{T})$.
On substituting the values of $\emph{C}$ and $\emph{T}$
from Theorem~\ref{theorem:dynamic-dom},
we get that the expected amortized update time of $\A $ is
$O(\epsilon^{-1}\maxD\sqrt{n}\log^{1.5} n)$
for the decremental setting, and $O(\epsilon^{-2}\maxD\sqrt{n}\log^{4.5} n)$
for the incremental setting. 

For the scenario when $\maxD$ is large we alter our algorithm as follows.
Let $\thD$ be some threshold value for diameter. 
We maintain a 2-approximation of $\dia(G)$, say $\delta$, by
dynamically maintaining for an arbitrarily chosen vertex $z$, the value $\depth(\inbfs(z))+\depth(\outbfs(z))$.
This by Theorem~\ref{theorem:ES} takes $O(mn)$ time in total. 
We now explain another algorithm $\cal B$ which will be effective when $\delta\geq 4\thD$. 
We sample a uniformly random subset $W$ of $V$ containing $(8n\log n/\thD)$ 
vertices, and maintain at each stage a subgraph $H_{\B}$
which is union of $\inbfs(w)$ and $\outbfs(w)$, for $w\in W$.
Also we maintain the value $\depth(\outbfs(W))$.
If $\delta\geq 4\thD$, but $\depth(\outbfs(W))\nleq \thD$, we re-sample $W$.
When $\delta\geq 4\thD$, then with high probability at each time instance, set $W$ intersects $\pi(x,y)$ for every 
$x,y\in V$ that satisfy $d_G(x,y)\geq \thD$, and thus $\depth(\outbfs(W))\leq \thD$.
This shows that the expected number of re-samplings for $W$ is $O(1)$, and
the total expected runtime of $\B$ is $O(mn|W|)=O(mn^2\log n/\thD)$.
Since $\depth(\outbfs(W))\leq \thD\leq \delta/4\leq \dia(G)/2$, it follows that in this case
the distance between any two vertices in 
$H_{\B}$ is at most $1.5\dia(G)$.
As long as $\delta\leq 4\thD$, we use algorithm $\A$ to maintain a $(1.5+\epsilon)$-diameter spanner, 
we denote the corresponding subgraph by notation $H_{\A}$.
Thus $\A$ takes in total $O(\epsilon^{-2}\thD m\sqrt{n}\log^{4.5} n)$ time for incremental setting, and 
$O(\epsilon^{-1}\thD m\sqrt{n}\log^{1.5} n)$ time for decremental setting. 
On optimizing over $\thD$, we get that the amortized update time of the combined algorithm is
$O(\epsilon^{-1} n^{1.25}\log^{2.75} n)$ for incremental setting, and 
$O(\epsilon^{-0.5} n^{1.25}\log^{1.25} n)$ for decremental setting.
Thus, we obtain the following theorem.

\begin{theorem}
For any $\epsilon\in[0,1/2]$ and any  incrementally/decrementally changing graph on $n$ vertices, 
there exists an algorithm for maintaining a $(1.5+\epsilon)$-diameter spanner containing at most $O(n^{3/2}\sqrt{\log n})$ edges.
 
The expected amortized update time of the algorithm is $O((1/\epsilon^{2}) \sqrt{n}\maxD\log^{4.5} n)$
for incremental setting and $O((1/\epsilon) \sqrt{n}\maxD\log^{1.5} n)$ for decremental setting,	
where, $\maxD$ denotes the maximum diameter of the graph throughout the run of the algorithm.
Moreover, when $\maxD$ is large, the algorithm can be altered so that the expected amortized update time 
is $O(\epsilon^{-1} n^{1.25}\log^{2.75} n)$ for the incremental setting, and $O(\epsilon^{-0.5} n^{1.25}\log^{1.25} n)$ 
for the decremental setting.
\label{theorem:dynamic-spanner-1}
\end{theorem}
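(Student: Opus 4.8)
The plan is to lift the static $1.5$-diameter spanner of Theorem~\ref{theorem:static_diam1} to the dynamic setting by running the dynamic dominating-set-pair procedure of Theorem~\ref{theorem:dynamic-dom} underneath it. Concretely, I would instantiate Theorem~\ref{theorem:dynamic-dom} with $n_p=n_q=\sqrt{n\log n}$ and the balanced split $p=q=1/2$, obtaining an algorithm $\A$ that at every time $t$ maintains a triple $(S_1,S_2,a)$ for which $(S_1,S_2)$ is $\langle\lfloor(1/2+\epsilon)\inecc(a)\rfloor,\lceil(1/2+\epsilon)\inecc(a)\rceil\rangle$-dominating of size-bound $\langle\sqrt{n\log n},\sqrt{n\log n}\rangle$. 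On top of this I would maintain the subgraph $H=\bigcup_{s\in S_1\cup S_2}\big(\inbfs(s)\cup\outbfs(s)\big)$, each of the $O(\sqrt{n\log n})$ BFS trees maintained by a separate ES-tree (Theorem~\ref{theorem:ES}); the edge bound $O(n^{3/2}\sqrt{\log n})$ is then immediate. For the stretch, since $\inecc(a)\le\dia(G)$, the argument copies that of Theorem~\ref{theorem:static_diam1}: either $S_1$ is $(1/2+\epsilon)\dia(G)$-out-dominating, so for any $x,y$ some $s\in S_1$ has $d_H(x,s)\le\dia(G)$ and $d_H(s,y)\le(1/2+\epsilon)\dia(G)$, whence $d_H(x,y)\le(1.5+\epsilon)\dia(G)$; or $S_2$ is $(1/2+\epsilon)\dia(G)$-in-dominating and the mirror bound applies. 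Thus $H$ is a $(1.5+\epsilon)$-diameter spanner at every step.

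Next I would account for the running time, the crux being the rebuilds of $H$. While $(S_1,S_2)$ is unchanged, maintaining the $O(\sqrt{n\log n})$ ES-trees costs $O(m\maxD\sqrt{n\log n})$ in total; each of the $C$ times the triple changes we rebuild these trees from scratch at the same cost, and we additionally pay $O(m\cdot T)$ for running $\A$ itself, where $T$ and $C$ are the amortized update time and the change-count guaranteed by Theorem~\ref{theorem:dynamic-dom}. So the total work is $O(C\cdot m\maxD\sqrt{n\log n}+m\cdot T)$; substituting the incremental values $C=O(\epsilon^{-2}\log^4 n)$ and $T=O(\epsilon^{-1}\maxD\log^3 n+\epsilon^{-2}\log^4 n)$ (and, respectively, the sharper decremental values $C=O(\epsilon^{-1}\log n)$, $T=O(\maxD+\epsilon^{-1}\log n)$) and amortizing over the $\Theta(m)$ updates yields amortized update time $O(\epsilon^{-2}\sqrt{n}\,\maxD\log^{4.5}n)$ incrementally and $O(\epsilon^{-1}\sqrt{n}\,\maxD\log^{1.5}n)$ decrementally.

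To remove the $\maxD$ dependence when the diameter is large, I would run a second algorithm $\B$ in parallel, gated by a threshold $\thD$ that is optimized at the end. Maintain a $2$-approximation $\delta$ of $\dia(G)$ as $\depth(\inbfs(z))+\depth(\outbfs(z))$ for a fixed vertex $z$, at total cost $O(mn)$ by Theorem~\ref{theorem:ES}. Whenever $\delta\ge 4\thD$, sample a uniformly random $W\subseteq V$ of size $\Theta(n\log n/\thD)$ and maintain $H_\B=\bigcup_{w\in W}\big(\inbfs(w)\cup\outbfs(w)\big)$; a hitting-set argument (a union bound over the $O(n^2)$ graph instances, cf.\ Lemma~\ref{lemma:probability}) shows that with high probability $W$ meets every shortest path of length $\ge\thD$, hence $\depth(\outbfs(W))\le\thD\le\delta/4\le\dia(G)/2$, which forces $d_{H_\B}(x,y)\le\dia(G)+\thD\le 1.5\,\dia(G)$ for all $x,y$ (if the test fails we resample, incurring $O(1)$ resamplings in expectation). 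The cost of $\B$ is $O(mn|W|)=O(mn^2\log n/\thD)$. As long as $\delta\le 4\thD$ we fall back to $\A$, whose total cost is then only $O(\epsilon^{-2}\thD\,m\sqrt{n}\log^{4.5}n)$ incrementally since $\maxD=O(\thD)$ throughout that phase. Balancing $O(mn^2\log n/\thD)$ against $O(\epsilon^{-2}\thD\,m\sqrt{n}\log^{4.5}n)$ over $\thD$ (so $\thD\approx\epsilon\,n^{3/4}/\log^{7/4}n$) and amortizing over $\Theta(m)$ updates yields amortized update time $O(\epsilon^{-1}n^{1.25}\log^{2.75}n)$ incrementally; the identical computation with the decremental cost $O(\epsilon^{-1}\thD\,m\sqrt{n}\log^{1.5}n)$ of $\A$ gives $O(\epsilon^{-0.5}n^{1.25}\log^{1.25}n)$ decrementally. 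In both regimes the maintained subgraph keeps $O(n^{3/2}\sqrt{\log n})$ edges.

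The step I expect to be the main obstacle is dynamically tracking a deep vertex $a$ of $\outbfs(S_1)$ under edge insertions, together with its attached set $\nin(a,n_q)$, since both drift as the graph changes --- but that is precisely the content already packaged by Theorem~\ref{theorem:dynamic-dom} (via its lazy \textsc{far}-set and random-subsampling mechanism). Consequently the residual work here is only the bookkeeping above: bounding how often $H$ (respectively $H_\B$) must be recomputed each time $(S_1,S_2,a)$ changes, and checking that this per-recomputation cost, times the change-count, stays within the claimed budget.
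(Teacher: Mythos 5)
Your proposal is correct and follows essentially the same route as the paper: instantiate Theorem~\ref{theorem:dynamic-dom} with $n_p=n_q=\sqrt{n\log n}$ and $p=q=1/2$, maintain the union of ES-trees rooted at $S_1\cup S_2$, charge the rebuilds via the change-count $C$ and update time $T$, and for large diameter combine with the threshold-$\thD$ algorithm $\B$ (hitting-set $W$ of size $O(n\log n/\thD)$, $2$-approximate diameter test via a fixed vertex $z$) and balance over $\thD$. The bounds and the balancing computations match the paper's proof.
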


We now present the algorithm for implicitly maintaining diameter spanner.
Let $\A$ be a Monte-Carlo variant of Theorem~\ref{theorem:dynamic-dom} to incrementally/decrementally
maintain a $\langle\lfloor(1/2+\epsilon)D\rfloor,\lceil (1/2+\epsilon)D\rceil\rangle$-dominating set-pair 
$(S_1,S_2)$ of size bound $\langle \sqrt{n\log n},\sqrt{n\log n} \rangle$. So $\A$ 
takes in total $O(\epsilon^{-1}mn\log^3 n+m\epsilon^{-2}\log^4 n)$ time for incremental setting, and 
$O(mn+m\epsilon^{-1} \log n)$ time for decremental setting. 
We also maintain a data-structure for dynamic all-pairs shortest-path problem. 
For edge-insetions only case, Ausiello et al.~\cite{AusielloIMN:91} gave an $O(n^3 \log n)$ time algorithm
that answers any distance query in constant time, and for edge-deletions only case, Baswana et al.~\cite{BaswanaHS:07} 
gave an $O(n^3 \log^2 n)$ time Monte-Carlo algorithm that again answers any distance query in constant time.
Now in order to check whether or not an edge $e=(u,v)$ lies in $H$, it suffices to check
whether or not $e$ is present in either $\inbfs(s)$ or $\outbfs(s)$, for some $s\in S$.
We can assume that edge weights are slightly perturbed so that no two distances are identical in $G$.
Therefore $e=(u,v)$ lies in $\outbfs(s)$, for some $s$, if and only if $d_G(s,v)=d_G(s,u)+d_G(u,v)$.
Since the distances queries can be answered in $O(1)$ time, in order to check whether or not $e$ lies in $H$,	
we perform in the worst case $O(|S_1\cup S_2|)=O(\sqrt{n\log n})$ distance queries.
From the above, we obtain the following theorem.
	
\begin{theorem}
There exists a data-structure that for any incrementally/decrementally changing 
$n$-vertex directed graph and any $\epsilon\in \big[\frac{\log n}{n}, \frac{1}{2}\big]$, implicitly 
maintains a $(1.5+\epsilon)$-diameter spanner containing at most $O(n^{3/2}\sqrt{\log n})$ edges.
The total time taken by $\textsc{update}$ operations is $O(\epsilon^{-1} n^3 \log^3(n))$ for incremental
setting, and $O(n^3 \log^2(n))$ for decremental setting. Each $\textsc{query}$
operation takes $O(\sqrt{n\log n})$ time in the worst case, and the answers 
are correct with high probability (i.e., failure probability is inverse polynomial in $n$).
\label{theorem:dynamic-spanner-2}
\end{theorem}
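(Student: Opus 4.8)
\medskip\noindent\emph{Proof plan.}
The guiding idea is to never materialise the spanner $H$: we keep only a small, implicitly maintained dominating set-pair together with a dynamic all-pairs-shortest-path (APSP) oracle, and we answer each membership query by a handful of distance look-ups.

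First I would run a Monte-Carlo version of the algorithm behind Theorem~\ref{theorem:dynamic-dom}, instantiated with $n_p=n_q=\sqrt{n\log n}$ and the split $p=q=1/2$, so that at every time step it maintains a triplet $(S_1,S_2,a)$ with $|S_1\cup S_2|=O(\sqrt{n\log n})$ for which $(S_1,S_2)$ is $\langle\lfloor(1/2+\epsilon)\inecc(a)\rfloor,\lceil(1/2+\epsilon)\inecc(a)\rceil\rangle$-dominating. Since $\inecc(a)\le\dia(G)$, the same argument used in the proof of Theorem~\ref{theorem:static_diam1} shows that the never-built subgraph $H$ given by the union of $\inbfs(s)$ and $\outbfs(s)$ over $s\in S_1\cup S_2$ is a $(1.5+\epsilon)$-diameter spanner with $O(n^{3/2}\sqrt{\log n})$ edges. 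By Theorem~\ref{theorem:dynamic-dom}, maintaining $(S_1,S_2,a)$ alone costs $O(\epsilon^{-1}mn\log^3 n+m\epsilon^{-2}\log^4 n)$ total time incrementally and $O(mn+m\epsilon^{-1}\log n)$ decrementally.

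Second, in parallel I would maintain a dynamic APSP structure that answers any $d_G(\cdot,\cdot)$ query in $O(1)$ time: the incremental structure of Ausiello et al.~\cite{AusielloIMN:91} in $O(n^3\log n)$ total time, or the decremental Monte-Carlo structure of Baswana et al.~\cite{BaswanaHS:07} in $O(n^3\log^2 n)$ total time. After perturbing the edge weights infinitesimally so that shortest paths are unique, $e=(u,v)$ lies in $\outbfs(s)$ iff $d_G(s,v)=d_G(s,u)+d_G(u,v)$, and in $\inbfs(s)$ iff $d_G(u,s)=d_G(u,v)+d_G(v,s)$. Hence $\textsc{query}(e)$ reduces to at most $O(|S_1\cup S_2|)=O(\sqrt{n\log n})$ distance look-ups, while $\textsc{update}(e)$ simply forwards $e$ to the set-pair algorithm and to the APSP structure.

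Finally I would add the two time contributions and simplify using the hypothesis $\epsilon\ge\frac{\log n}{n}$ (hence $\epsilon^{-1}\log n\le n$) together with $m\le n^2$: then $m\epsilon^{-2}\log^4 n$ is absorbed into $\epsilon^{-1}n^3\log^3 n$, and $m\epsilon^{-1}\log n$ into $n^3\log^2 n$, yielding the claimed update bounds $O(\epsilon^{-1}n^3\log^3 n)$ (incremental) and $O(n^3\log^2 n)$ (decremental). The answers are correct with high probability because the only randomised ingredients — the Monte-Carlo dominating-set-pair maintenance and, in the decremental case, the Baswana et al.\ APSP oracle — each err with inverse-polynomial probability. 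Given Theorem~\ref{theorem:dynamic-dom} and the cited APSP results, no single step is a serious obstacle; the point that needs genuine care is the query characterisation, i.e.\ checking that making shortest paths unique renders the distance-equation test for tree membership both sound and complete, and that the implicitly maintained, never-rebuilt pair $(S_1,S_2)$ indeed stays dominating throughout the entire update sequence, which is exactly what Theorem~\ref{theorem:dynamic-dom} provides.
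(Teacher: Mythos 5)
Your proposal matches the paper's proof essentially step for step: a Monte-Carlo variant of Theorem~\ref{theorem:dynamic-dom} maintaining the size-$O(\sqrt{n\log n})$ dominating set-pair, the dynamic APSP oracles of Ausiello et al.\ (incremental) and Baswana et al.\ (decremental) with $O(1)$ distance queries, and the perturbed-weights distance-equation test $d_G(s,v)=d_G(s,u)+d_G(u,v)$ to decide membership of an edge in the implicit union of BFS trees, giving $O(\sqrt{n\log n})$ look-ups per query. Your explicit absorption of the $m\epsilon^{-2}\log^4 n$ and $m\epsilon^{-1}\log n$ terms using $\epsilon\geq \frac{\log n}{n}$ and $m\leq n^2$ is exactly the intended (if unstated) arithmetic behind the claimed update bounds, so the argument is correct and not a different route.
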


\subparagraph*{Implication} 

As a byproduct of our dynamic $(1.5+\epsilon)$-diameter spanner algorithm, we also obtain a dynamic algorithm for maintaining
a $(1.5+\epsilon)$-approximation of the graph's diameter. 
As mentioned in Section~\ref{sec:dynamic-spanners}, the latter was very recently studied also by Ancona {\em et al.}~\cite{arxiv:diam}, obtaining related (slightly different) bounds (see Section~\ref{sec:dynamic-spanners} and~\cite{arxiv:diam} for more details). 
Since our algorithm is implied from our dynamic diameter spanner techniques, we feel it worth describing the derivation below.

Let $\cal A$ be an algorithm that uses Theorem~\ref{theorem:dynamic-dom} to dynamically 
maintain triplet $(S_1,S_2,a)$ such that at any time instance set-pair $(S_1,S_2)$ 
is $\langle \lfloor{(1/2+\epsilon)~\inecc(a)}\rfloor,\lceil{(1/2+\epsilon)~\inecc(a)}\rceil \rangle$%
-dominating and has size bound $\langle \sqrt{n\log n},\sqrt{n\log n}\rangle$. 
Let $\emph{T}({\cal A})$ be the expected amortized update time of $\cal A$
for maintaining $(S_1,S_2,a)$. Also let $\emph{C}({\cal A})$ be the total number of times the triplet 
$(S_1,S_2,a)$ changes throughout the run of the algorithm.

Since $(S_1,S_2)$ is $\langle \lfloor{(1/2+\epsilon)~\dia(G)}\rfloor,\lceil{(1/2+\epsilon)~\dia(G)}\rceil \rangle$-%
dominating, for any pair of vertices $x,y$ in $V$, we have $d_G(x,y)\leq \max_{s\in S_1\cup S_2}
(1.5+\epsilon)\max\{\inecc(s),\outecc(s)\}$, which in turn is bounded by
$\lceil (1.5+\epsilon)\dia(G)\rceil$. Thus, to dynamically maintain a 1.5-approximation of 
diameter it suffices to maintain $\depth(\inbfs(s))$ and $\depth(\outbfs(s))$ for each $s\in S_1\cup S_2$.
This by Theorem~\ref{theorem:ES} takes $O(m\maxD)$ time in total for any $s\in S_1\cup S_2$,
where, $\maxD$ denotes the maximum diameter of graph throughout the sequence of edge updates.
Observe that the pair $(S_1,S_2)$ also alters at most $\emph{C}({\cal A})$ times.
So the total time for maintaining a 1.5-approximation of diameter is 
$O(|\emph{C}({\cal A})|m\maxD\sqrt{n\log n})+m\emph{T}({\cal A}))$.
On substituting the values of $\emph{C}({\cal A})$ and $\emph{T}({\cal A})$
from Theorem~\ref{theorem:dynamic-dom},
we get that the expected amortized update time is
$O(\epsilon^{-1}\maxD\sqrt{n}\log^{1.5} n)$
for the decremental setting, and $O(\epsilon^{-2}\maxD\sqrt{n}\log^{4.5} n)$
for the incremental setting. 

%
 
\section{Additional Sparse Diameter Spanners Constructions}
\label{section:upper-bounds}

In this section, we present additional constructions of diameter spanners, with different size-stretch trade-offs. 

\subsection{$5/3$-Diameter Spanner}
We first present construction of $5/3$-diameter spanners that are sparser than the $1.5$-%
diameter spanners whenever $D=o(\sqrt{n})$.

\begin{theorem}
For any directed graph $G=(V,E)$ with diameter $D$, in 
$\widetilde O(mn^{1/3}(D+n/D)^{1/3})$ expected time\footnote{Though the computation time  
is a function of $D$, 	the algorithm does not 
need to apriori know the value $D$.}  we can compute a subgraph $H=(V, E'\subseteq E)$
satisfying $\dia(H)\leq \lceil 5D/3\rceil$ that contains at most $O(n^{4/3}(\log n)^{2/3}D^{1/3})$ edges,
where $n$ and $m$ respectively denotes the number of vertices and edges in $G$.
\footnote{As in Theorem~\ref{theorem:static_diam1}, all our spanner constructions work also 
for edge-weighted graphs with non-negative weights,
by replacing every use of BFS with Dijkstra's algorithm.
The runtime of our construction is increased by at most a $\log n $ factor, and the stretch factor
of the spanner $H$ only suffers an additive $W$ term, where
$W$ is the maximum edge weight in the graph.}
\label{theorem:static_diam2}
\end{theorem}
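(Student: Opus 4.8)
The plan is to push the two-layer ``random sample, farthest vertex, near-neighbourhood'' construction of Theorem~\ref{theorem:static_diam1} to a \emph{three}-layer scheme whose size parameters scale with $D$, so that a small diameter forces the auxiliary BFS-trees to be sparse. Since the target running time is a function of $D$ but the algorithm should not need $D$ in advance, I would first compute in $\O(m)$ time a $2$-approximation $\widehat D$ of $D=\dia(G)$ — for an arbitrary $z\in V$, $\depth(\inbfs(z))+\depth(\outbfs(z))$ lies in $[D,2D]$ — and set all parameters in terms of $\widehat D$. Up to $\mathrm{polylog}(n)$ factors, put $n_1=(nD)^{1/3}$, $n_2=n/n_1=(n^2/D)^{1/3}$ and $\ell=(nD)^{2/3}$, chosen so that $n_1 n_2\geq 8n\log n$ and $n_2\leq\ell$. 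By Lemma~\ref{lemma:probability}, a uniformly random $S_1\subseteq V$ of size $n_1$ then meets $\nin(v,n_2)$ and $\nout(v,n_2)$ for every $v$ with high probability; in particular it meets every vertex set of size at least $n_2$.

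Build $H$ in three layers. (i)~Draw $S_1$ as above and add, for each $s\in S_1$, the full trees $\inbfs(s)$ and $\outbfs(s)$. (ii)~Let $a$ be a vertex of maximum depth in $\outbfs(S_1)$ and add $\inbfs(a)$ and $\outbfs(a)$. (iii)~Put $S_2=\nin(a,n_2)$ and, for each $s\in S_2$, add the sub-tree of $\outbfs(s)$ spanning its $\ell$ nearest out-vertices $\nout(s,\ell)$; if $S_1\cap S_2=\emptyset$ (or some efficiently-checkable covering condition below fails), re-sample $S_1$. Layer (i) contributes $O(n_1 n)=\O(n^{4/3}D^{1/3})$ edges, layer (ii) contributes $O(n)$, and layer (iii) contributes $O(n_2\ell)=\O(n^{4/3}D^{1/3})$; hence $H$ has $\O(n^{4/3}D^{1/3})$ edges (tracking the polylogs recovers the claimed $(\log n)^{2/3}$ factor). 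For edge-weighted graphs one replaces every BFS by Dijkstra, losing a $\log n$ factor in time and an additive $W$ in the stretch. The construction runs $O(n_1+n_2)$ shortest-path computations of cost $\O(m)$ each (including the checks), so the total time is $\O\!\bigl(m(n_1+n_2)\bigr)=\O\!\bigl(m\,n^{1/3}(D+n/D)^{1/3}\bigr)$, since $(nD)^{1/3}+(n^2/D)^{1/3}=\Theta\!\bigl(n^{1/3}(D+n/D)^{1/3}\bigr)$.

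For the stretch, fix $x,y\in V$. If $\depth(\outbfs(S_1))\leq\lfloor 2D/3\rfloor$, then $S_1$ is $\lfloor 2D/3\rfloor$-out-dominating: some $s\in S_1$ has $d_G(s,y)\leq 2D/3$ while $d_G(x,s)\leq D$, both realised in $H$ by layer (i), so $d_H(x,y)\leq\lceil 5D/3\rceil$. Otherwise $d_G(S_1,a)>\lfloor 2D/3\rfloor$, so $\inbfs(a,\lfloor 2D/3\rfloor)$ misses $S_1$; as $S_1$ meets $\nin(a,n_2)$, the in-ball $B^{\mathrm{in}}_G(a,\lfloor 2D/3\rfloor)$ is then contained in $S_2$. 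Let $b$ be the vertex at distance $\lfloor 2D/3\rfloor$ from $a$ on a shortest $x$-to-$a$ path lying in $\inbfs(a)$ (take $b=x$ when $d_G(x,a)\leq\lfloor 2D/3\rfloor$); then $b\in S_2$, $d_G(x,b)\leq\lceil D/3\rceil$, and this prefix lies in $H$. If $y\in\nout(b,\ell)$, the layer-(iii) tree of $b$ contains a shortest path $b\leadsto y$, so $d_H(x,y)\leq\lceil D/3\rceil+D\leq\lceil 5D/3\rceil$. If $y\notin\nout(b,\ell)$ but $|B^{\mathrm{out}}_G(b,\lfloor D/3\rfloor)|\geq n_2$, then $S_1$ meets $\nout(b,n_2)\subseteq B^{\mathrm{out}}_G(b,\lfloor D/3\rfloor)$ in a vertex $s'$ that also lies in $\nout(b,\ell)\subseteq H$, so routing $x\leadsto b\leadsto s'\leadsto y$ through $\outbfs(s')$ has length $\leq\lceil D/3\rceil+\lfloor D/3\rfloor+D\leq\lceil 5D/3\rceil$.

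The surviving case, $y\notin\nout(b,\ell)$ together with $|B^{\mathrm{out}}_G(b,\lfloor D/3\rfloor)|<n_2$, is the technical heart of the argument and the step I expect to be the main obstacle. Here the \emph{entire} ball $B^{\mathrm{out}}_G(b,\lfloor D/3\rfloor)$ sits inside the layer-(iii) tree of $b$, so one can walk $\lfloor D/3\rfloor$ steps further toward $y$ while staying in $H$; the difficulty is that an arbitrary intermediate vertex so reached has no outgoing edges in $H$, so one must either iterate this ``advance'' along a shortest $b$-to-$y$ path and argue — using the precise balance of $n_1,n_2,\ell$ above — that the accumulated length never exceeds $\lceil 5D/3\rceil$ before reaching a vertex whose $\lfloor D/3\rfloor$-out-ball is large enough to catch an $S_1$-vertex, or augment the construction with a symmetric fourth layer (truncated in-BFS trees rooted at a vertex $a'$ maximizing $d_G(a',S_1)$, together with $\nout(a',n_2)$) so that $y$ is approached by a tree we control. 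Finally, every covering condition used — $\depth(\outbfs(S_1))\leq\lfloor 2D/3\rfloor$, $B^{\mathrm{in}}_G(a,\lfloor 2D/3\rfloor)\subseteq S_2$, and ``$S_1$ meets $\nout(b,n_2)$ for every $b\in S_2$'' — is decidable from the $O(n_1+n_2)$ trees already computed, so re-sampling upon failure turns the procedure into a Las Vegas algorithm with the stated expected running time.
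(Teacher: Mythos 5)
Your construction is genuinely different from the paper's, and you have correctly identified where it breaks: the ``surviving case'' is a real gap, and neither of the two fixes you sketch straightforwardly closes it. The fundamental problem you run into---that after advancing $\lfloor D/3\rfloor$ steps from $b$ along a shortest path to $y$ you land on a vertex with no outgoing trees in $H$, and that augmenting with a symmetric fourth layer still leaves you unable to bridge the middle---is exactly the problem that the paper's approach is designed to avoid. The paper does not use truncated BFS trees rooted at $S_2$ at all. Instead it computes \emph{two} dominating set-pairs with a dual structure via Lemma~\ref{lemma:dom-set-rand}: $(A_1,A_2)$ is $\langle\lceil 2D/3\rceil, D/3\rangle$-dominating with $|A_1|=O(\alpha\log n)$, $|A_2|=O(n/\alpha)$, and $(B_1,B_2)$ is $\langle D/3,\lceil 2D/3\rceil\rangle$-dominating with $|B_1|=O(n/\alpha)$, $|B_2|=O(\alpha\log n)$. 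It adds the full trees $\inbfs(s),\outbfs(s)$ for $s\in A_1\cup B_2$, the supernode trees $\inbfs(A_2)$ and $\outbfs(B_1)$, and---this is the key piece your proposal is missing---the entire shortest path $\pi_G(u,v)$ for every pair $(u,v)\in A_2\times B_1$. These paths contribute $O(|A_2|\cdot|B_1|\cdot D)=O(n^2D/\alpha^2)$ edges, and balancing against the $O(n\alpha\log n)$ tree edges gives $\alpha=\Theta((nD/\log n)^{1/3})$ and the claimed $O(n^{4/3}D^{1/3}\log^{2/3}n)$ bound.

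The reason this sidesteps your obstruction is that when $A_1$ fails to be $\lceil 2D/3\rceil$-out-dominating and $B_2$ fails to be $\lceil 2D/3\rceil$-in-dominating, the dominating set-pair guarantee flips: $A_2$ is then $D/3$-\emph{in}-dominating and $B_1$ is $D/3$-\emph{out}-dominating. So for any $x,y$ one has $d_G(x,s_x)\leq D/3$ for some $s_x\in A_2$ (realised in $\inbfs(A_2)$), $d_G(s_y,y)\leq D/3$ for some $s_y\in B_1$ (realised in $\outbfs(B_1)$), and the entire shortest path $s_x\leadsto s_y$ of length $\leq D$ is stored verbatim in $H$. There is never a need to re-enter a rooted tree at an intermediate vertex, which is precisely what you could not do. Your first two sub-cases (when $y$ is among $b$'s $\ell$ nearest out-vertices, or when $b$'s $D/3$-out-ball catches an $S_1$ vertex) are handled in the paper implicitly by this structure. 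Your observation that the size and time bounds balance at the right exponents is correct as far as it goes, but the argument is incomplete without a mechanism for the middle hop; the paper's $A_2\times B_1$ shortest-path collection is that mechanism, and it is not a cosmetic variant of your layer (iii) but a qualitatively different object.
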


\begin{proof}
Let $\alpha$ be a parameter to be chosen later on. The construction of $H$ is presented in 
Algorithm~\ref{Algorithm:diam-5/3}. Consider any two vertices $x,y\in V$.
If $A_1$ is $\lceil 2D/3\rceil$-out-dominating set, then 
$d_G(s,y)\leq  \lceil 2D/3\rceil$ for some $s\in A_1$. Thus 
$d_H(x,y)\leq d_H(x,s)+d_H(s,y)=d_G(x,s)+d_G(s,y)\leq D + \lceil 2D/3\rceil =  \lceil 5D/3\rceil$.
Similarly, if $B_2$ is $\lceil 2D/3\rceil$-in-dominating set, then it can be shown that $d_H(x,y)$
is bounded by $\lceil 5D/3\rceil $.
Let us next suppose that neither $A_1$ is $\lceil 2D/3\rceil$-out-dominating, nor $B_2$ is 
$\lceil 2D/3\rceil$-in-dominating. Then $A_2$ is $D/3$-in-dominating and $B_1$ is 
$D/3$-out-dominating. So  $d_G(x,A_2),d_G(B_1,y)\leq D/3$. Since $H$ contains 
$\inbfs(A_2)$ and $\outbfs(B_1)$, there must exists $s_x\in A_2$ and $s_y\in B_1$ 
such that $d_H(x,s_x)=d_G(x,s_x)=d_G(x,A_2)$ and $d_H(s_y,y)=d_G(s_y,y)=d_G(B_1,y)$. 
Since $H$ contains the shortest path between each pair of vertices in $A_2\times B_1$, we 
obtain that $d_H(s_x,s_y)=d_G(s_x,s_y)\leq D$. Therefore, 
$d_H(x,y)\leq d_H(x,s_x)+d_H(s_x,s_y)+d_H(s_y,y) = d_G(x,s_x)+d_G(s_x,s_y)+d_G(s_y,y) \leq 5D/3$.
\begin{algorithm}[!ht]
$H\gets $ an empty graph\;
$(A_1,A_2)\gets$ $\langle \lceil 2D/3\rceil,D/3\rangle$-dominating-set-pair of size-bound $\langle \alpha\log n, n/\alpha \rangle$\;
$(B_1,B_2)\gets$ $\langle D/3,\lceil 2D/3\rceil \rangle$-dominating-set-pair of size-bound $\langle n/\alpha,\alpha\log n \rangle$\;
Add to $H$ the trees $\inbfs(A_2)$ and $\outbfs(B_1)$\;
\lForEach {$s\in A_1\cup B_2$} {add to $H$ union of $\inbfs(s)$ and $\outbfs(s)$}
\lForEach {$(u,v)\in A_2\times B_1$} {add the edges of the shortest path $\pi_G(u,v)$ to $H$}
\Return $H$\;
\caption{5/3-Diameter Spanner Construction}
\label{Algorithm:diam-5/3}
\end{algorithm}
 Let us first analyse size of $H$.
We have $O(\alpha \log n)$ shortest-path trees that  require a total of $O(n \alpha \log n)$ edges.
The shortest paths between all pairs in $A_2\times B_1$ use in total $O(n^2D/\alpha^2)$ edges.
Thus, the total number of edges in $H$ is $O(n \alpha \log n + n^2D/\alpha^2)$. This is minimized when
$\alpha = \Theta\big((nD/\log n)^{1/3}\big)$. Therefore, the total number of edges in $H$ is 
$O(n^{4/3}D^{1/3}\log^{2/3} n)$.
Observe that in order to compute $\alpha$, it suffices to have an estimate of $D$.
We can easily compute a $2$-approximation for the diameter $D$ in $O(m)$ time,
since for any arbitrary vertex $w\in V$,  
$D\leq \depth(\inbfs(w))+\depth(\outbfs(w))\leq 2D$, and the depth of $\inbfs(w)$ 
and $\outbfs(w)$ are computable in $O(m)$ time. 
We now analyse the running time of each step in Algorithm \ref{Algorithm:diam-5/3}. 
Steps 2 and 3: By Lemma~\ref{lemma:dom-set-rand}, the time to compute the set-pairs 
$(A_1,A_2)$ and $(B_1,B_2)$ is $O(m)$ on expectation. Step 4: This step just takes 
$O(m)$ time. Step 5 and 6: For each vertex $s \in A_1 \cup B_2 \cup A_2\cup B_1$, 
the BFS trees $\inbfs(s)$ and $\outbfs(s)$ can be computed in $O(m)$ time. So, 
this step can be performed in $O(m\cdot |A_1 \cup B_2 \cup A_2\cup B_1|)$ time. 
Overall, the total expected runtime of the  algorithm is $O(m(|A_1\cup A_2\cup B_1\cup B_2|))
=O(m(\alpha\log n+n/\alpha))=\widetilde O(mn^{1/3}D^{1/3}+mn^{2/3}/D^{1/3})
=\widetilde O(mn^{1/3}(D^{1/3}+(n/D)^{1/3}))=\widetilde O(mn^{1/3}(D+n/D)^{1/3})$.
\end{proof}

\subparagraph{Dynamic maintenance of 5/3-diameter spanner.}
In order to dynamically maintain a $5/3$-diameter spanner we first state a
 lemma which is a generalization of Theorem~\ref{theorem:static_diam2}.

\begin{lemma}
Let $G$ be an $n$-vertex directed graph with diameter $D$, and $(A_1,A_2)$ and $(B_1,B_2)$ be respectively 
$\langle \lceil (2/3+\epsilon)D\rceil,(1/3+\epsilon)D\rangle$ and
$\langle (1/3+\epsilon)D,\lceil (2/3+\epsilon)D\rceil\rangle$ dominating-set-pairs  
of size bounds $\langle \alpha\log n,n/\alpha \rangle$ and $\langle n/\alpha,\alpha \log n \rangle$,
where $\alpha=(nD)^{1/3}$.
Also let $H$ be a subgraph of $G$ consisting of 
\begin{itemize} 
\item $\inbfs(s)$ and $\outbfs(s)$, for $s\in A_1\cup B_2$,
\item the shortest paths $\pi_G(u,v)$, for each $(u,v)\in A_2\times B_1$, 
\item $\inbfs(A_2)$ and $\outbfs(B_1)$. 
\end{itemize}
Then $\dia(H)\leq \lceil (5/3+\epsilon)D\rceil $ and $H$ has
at most $O(n^{4/3}D^{1/3} \log n)$ edges.
\label{lemma:static2_gen}
\end{lemma}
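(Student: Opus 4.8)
The plan is to mirror the static argument in the proof of Theorem~\ref{theorem:static_diam2} almost verbatim, but carry the extra $\epsilon$-slack through every step and replace the exact diameter $D$ by the slightly inflated dominating radii. First I would fix any two vertices $x,y\in V$ and do the case analysis on which of the four sets succeeds as a dominator. If $A_1$ is $\lceil (2/3+\epsilon)D\rceil$-out-dominating, pick $s\in A_1$ with $d_G(s,y)\le \lceil (2/3+\epsilon)D\rceil$; since $H$ contains $\outbfs(s)$ and $\inbfs(s)$ we get $d_H(x,y)\le d_G(x,s)+d_G(s,y)\le D+\lceil (2/3+\epsilon)D\rceil\le \lceil (5/3+\epsilon)D\rceil$. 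The case where $B_2$ is $\lceil (2/3+\epsilon)D\rceil$-in-dominating is symmetric.

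In the remaining case, neither of those holds, so by the definition of dominating-set-pair $A_2$ is $(1/3+\epsilon)D$-in-dominating and $B_1$ is $(1/3+\epsilon)D$-out-dominating. Then there are $s_x\in A_2$ and $s_y\in B_1$ with $d_H(x,s_x)=d_G(x,s_x)\le (1/3+\epsilon)D$ (realized inside $\inbfs(A_2)\subseteq H$) and $d_H(s_y,y)=d_G(s_y,y)\le (1/3+\epsilon)D$ (realized inside $\outbfs(B_1)\subseteq H$). Because $H$ contains the shortest path $\pi_G(s_x,s_y)$ for the pair $(s_x,s_y)\in A_2\times B_1$, we have $d_H(s_x,s_y)=d_G(s_x,s_y)\le D$, so by concatenation $d_H(x,y)\le (1/3+\epsilon)D+D+(1/3+\epsilon)D=(5/3+2\epsilon)D$. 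Here one must be slightly careful: the bound that comes out naturally is $(5/3+2\epsilon)D$, not $(5/3+\epsilon)D$; I would handle this exactly as the dominating-set lemmas in the paper do, namely by observing that $\epsilon$ is a free parameter and rescaling (replace $\epsilon$ by $\epsilon/2$ throughout, or simply state the bound as $\lceil(5/3+\epsilon)D\rceil$ after absorbing the factor of two into the choice of the slack used in constructing $A_1,A_2,B_1,B_2$). Taking the ceiling and using $\dia(H)=\max_{x,y}d_H(x,y)$ gives $\dia(H)\le\lceil(5/3+\epsilon)D\rceil$.

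For the size bound I would just recount edges as in the static proof: the $O(\alpha\log n)$ trees $\inbfs(s),\outbfs(s)$ for $s\in A_1\cup B_2$ contribute $O(n\alpha\log n)$ edges, the trees $\inbfs(A_2)$ and $\outbfs(B_1)$ contribute $O(n)$ edges, and the $|A_2|\cdot|B_1|=O((n/\alpha)^2)$ shortest paths in $A_2\times B_1$ contribute $O(n^2 D/\alpha^2)$ edges. Substituting $\alpha=(nD)^{1/3}$ makes both dominant terms $\Theta(n^{4/3}D^{1/3}\log n)$, so $H$ has $O(n^{4/3}D^{1/3}\log n)$ edges.

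I do not expect a genuine obstacle here — the statement is a routine $\epsilon$-perturbation of Theorem~\ref{theorem:static_diam2}, designed precisely so it can be fed the dynamically maintained dominating-set-pairs of Theorem~\ref{theorem:dynamic-dom}. The only point needing care is the bookkeeping of the additive slack in the triangle-inequality chain (getting $(5/3+2\epsilon)$ versus $(5/3+\epsilon)$) and, in the edge-weighted variant, remembering the extra $+W$ term from truncated Dijkstra trees as flagged in the footnote of Theorem~\ref{theorem:static_diam2}; neither affects the asymptotics.
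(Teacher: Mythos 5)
Your proof is correct and follows the same route the paper implicitly intends: the paper itself does not write out a proof of this lemma but presents it as a generalization of Theorem~\ref{theorem:static_diam2}, and your argument is a faithful $\epsilon$-perturbed transcription of that theorem's proof. The three-case analysis (either $A_1$ out-dominates, or $B_2$ in-dominates, or both $A_2$ and $B_1$ tightly dominate and one routes through a preserved $A_2\times B_1$ shortest path) is exactly the intended argument, and your edge count correctly reproduces the $O(n\alpha\log n + n + n^2D/\alpha^2)$ tally with $\alpha=(nD)^{1/3}$, yielding $O(n^{4/3}D^{1/3}\log n)$. (Minor nit: with that $\alpha$ the two terms are $n^{4/3}D^{1/3}\log n$ and $n^{4/3}D^{1/3}$ respectively, not both $\Theta(n^{4/3}D^{1/3}\log n)$, but the maximum is as you claim.)

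You are also right to flag the constant in front of $\epsilon$: the third case produces $(1/3+\epsilon)D + D + (1/3+\epsilon)D = (5/3+2\epsilon)D$, so the lemma as stated in the paper carries a harmless but genuine off-by-a-factor-of-two imprecision in the slack. The paper glosses over this because $\epsilon$ is a free parameter absorbed into the dynamic results (Theorem~\ref{theorem:dynamic-spanner-5by3-stretch}), and your proposed fix — constructing the dominating-set-pairs with $\epsilon/2$ slack, or equivalently stating the conclusion as $\lceil(5/3+2\epsilon)D\rceil$ — is exactly the right way to tidy it up. No substantive gap remains.
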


The following theorem presents our bounds for dynamic maintenance a $5/3$-diameter spanner
(we omit the proof as it is analogous to that of Theorem~\ref{theorem:dynamic-spanner-1}).

\begin{theorem}
For any $\epsilon\in[0,1/2]$, there exists an algorithm for incrementally/decrementally maintaining
a $(5/3+\epsilon)$-diameter spanner of an $n$-vertex directed graph with at most $O(n^{4/3}\maxD^{1/3}\log n)$ edges,
where, $\maxD$ denotes the maximum diameter of the graph throughout the run of the algorithm.
The expected amortized update time of the algorithm is $O(\epsilon^{-2} n^{1/3}\maxD(\maxD+n/\maxD)^{1/3}\log^5 n)$
for the incremental setting and $O(\epsilon^{-1} n^{1/3}\maxD(\maxD+n/\maxD)^{1/3}\log^2 n)$ for the decremental setting.
\label{theorem:dynamic-spanner-5by3-stretch}
\end{theorem}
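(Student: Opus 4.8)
The plan is to follow the proof of Theorem~\ref{theorem:dynamic-spanner-1} almost verbatim, replacing its single maintained dominating-set-pair by the two dominating-set-pairs required by Lemma~\ref{lemma:static2_gen}, and its union-of-BFS-trees subgraph by the richer subgraph $H$ of that lemma. Fix $\alpha=(n\maxD)^{1/3}$, so that $(\alpha\log n)(n/\alpha)=\Theta(n\log n)$ as needed by Theorem~\ref{theorem:dynamic-dom}; since the $5/3$-spanner is of interest only when $\maxD\le\sqrt n$, truncating the trees below at depth $\maxD$ is affordable. Run, in parallel, two instances of Theorem~\ref{theorem:dynamic-dom}: one with $(p,q)=(2/3,1/3)$ and size-bound $\langle\alpha\log n,\,n/\alpha\rangle$, maintaining a triplet $(A_1,A_2,a)$, and one with $(p,q)=(1/3,2/3)$ and size-bound $\langle n/\alpha,\,\alpha\log n\rangle$, maintaining a triplet $(B_1,B_2,b)$, both invoked with parameter $\epsilon/c$ for a suitable constant $c$. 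Using $\inecc(a),\inecc(b)\le\dia(G_t)$ and the same constant-factor/integer-rounding reconciliation as in the static Theorems~\ref{theorem:static_diam1} and~\ref{theorem:static_diam2}, the pairs $(A_1,A_2)$ and $(B_1,B_2)$ are, at every time $t$, $\langle\lceil(2/3+\epsilon)D_t\rceil,(1/3+\epsilon)D_t\rangle$- and $\langle(1/3+\epsilon)D_t,\lceil(2/3+\epsilon)D_t\rceil\rangle$-dominating, respectively, where $D_t=\dia(G_t)$. By Theorem~\ref{theorem:dynamic-dom}, each of the two triplets changes at most $C=O(\epsilon^{-2}\log^4 n)$ times incrementally and at most $C=O(\epsilon^{-1}\log n)$ times decrementally.

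Alongside, I would maintain the subgraph $H$ of Lemma~\ref{lemma:static2_gen} using ES-trees (Theorem~\ref{theorem:ES}), all truncated at depth $\maxD$: an $\inbfs(s)$-tree and an $\outbfs(s)$-tree for each $s\in A_1\cup B_2$; a supernode tree for $\inbfs(A_2)$ and one for $\outbfs(B_1)$; and one auxiliary tree $\outbfs(u)$ for each $u\in A_2$, from which the shortest paths $\pi_G(u,v)$ to the vertices $v\in B_1$ are read off --- only the edges of those paths, not the whole auxiliary tree, go into $H$. Because $\maxD\ge D_t$, every truncated tree coincides with the corresponding full BFS tree, so $H$ meets the hypotheses of Lemma~\ref{lemma:static2_gen} at all times; hence $\dia(H)\le\lceil(5/3+\epsilon)D_t\rceil$ and $|E(H)|=O(n\alpha\log n+(n/\alpha)^2\maxD)=O(n^{4/3}\maxD^{1/3}\log n)$. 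Whenever one of the two triplets changes, I discard and reinitialize the affected ES-trees and re-materialize $H$.

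For the running time, the number of ES-trees maintained is $O(\alpha\log n+n/\alpha)=O(n^{1/3}(\maxD+n/\maxD)^{1/3}\log n)$, each costing $O(m\maxD)$ over an epoch by Theorem~\ref{theorem:ES}; summing over the $C$ epochs --- and noting that the per-epoch re-materialization of $H$ and the update cost of the two calls to Theorem~\ref{theorem:dynamic-dom} are both dominated by this --- the total time is $O\big(C\,m\,\maxD\,(\alpha\log n+n/\alpha)\big)$. Dividing by the $\Omega(m)$ updates and substituting the two values of $C$ yields the claimed expected amortized update times $O(\epsilon^{-2}n^{1/3}\maxD(\maxD+n/\maxD)^{1/3}\log^5 n)$ in the incremental setting and $O(\epsilon^{-1}n^{1/3}\maxD(\maxD+n/\maxD)^{1/3}\log^2 n)$ in the decremental setting; if $\maxD$ is not known in advance, run the whole procedure under a doubling guess on it, exactly as in the static Theorem~\ref{theorem:static_diam2}. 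The one ingredient with no analogue in Theorem~\ref{theorem:dynamic-spanner-1} --- and the step I would be most careful about --- is the $A_2\times B_1$ block of shortest paths, which is not a single-source object: the point to verify is that it is produced by the $|A_2|=n/\alpha$ auxiliary ES-trees rooted at the vertices of $A_2$, and that these stay within the same per-epoch budget $O(m\maxD(\alpha\log n+n/\alpha))$ as the remainder of $H$ --- which holds precisely because $|A_2|=n/\alpha$.
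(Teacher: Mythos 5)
Your proposal is correct and follows essentially the route the paper intends: the paper omits this proof, saying only that it is analogous to Theorem~\ref{theorem:dynamic-spanner-1} via Lemma~\ref{lemma:static2_gen}, and that is precisely your construction --- two instances of Theorem~\ref{theorem:dynamic-dom} with $(p,q)=(2/3,1/3)$ and $(1/3,2/3)$ and size bounds $\langle \alpha\log n, n/\alpha\rangle$, $\langle n/\alpha, \alpha\log n\rangle$ for $\alpha=(n\maxD)^{1/3}$, ES-tree maintenance of the Lemma~\ref{lemma:static2_gen} subgraph, and the (number of epochs)$\times$(per-epoch ES cost) accounting, which reproduces both the $O(n^{4/3}\maxD^{1/3}\log n)$ size and the stated incremental/decremental update times. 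The only place you go beyond what the paper itself supplies is the $A_2\times B_1$ block, and your observation that the $|A_2|=n/\alpha$ auxiliary ES-trees fit the same per-epoch budget is the right one; the residual bookkeeping of keeping exactly the path edges (rather than whole trees) explicit as those trees evolve is a detail the paper's omitted proof does not address either.
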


\subsection{Diameter Spanner with $\O(n)$ Edges}
In Theorem~\ref{theorem:3/2-diam-up-lb}(b) and Theorem~\ref{theorem:5/3-diam-up-lb}(b),
we show a lower bound on number of edges in diameter spanners of stretch respectively $3/2$ and $5/3$,
for graphs with low diameter. A natural question to ask is if for graphs with large diameter it is possible to
obtain low-stretch diameter spanners that are also sparse. In this subsection, we 
positively address this problem.
 
We first show construction of an $\widetilde O(n)$ size spanner with additive stretch.

\begin{lemma}
For any $d>0$, and any $n$-vertex directed graph $G=(V,E)$, we can compute a subgraph $H=(V,E'\subseteq E)$ 
with $O(n+dn^\frac{2}{3}\cdot\min\{n^\frac{1}{3}\log n,d\log^2 n\})$ edges satisfying $\dia(H)\leq \dia(G)+n/d$.
\label{theorem:additive_1}
\end{lemma}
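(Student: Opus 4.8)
The plan is to combine a sparse hitting set against "long" shortest paths with a bounded-depth BFS forest. First I would fix $d>0$ and build a spanner in two layers. For the first layer, sample a uniformly random set $W\subseteq V$ of size $\Theta((n\log n)/d)$; by a standard hitting-set argument (identical in spirit to Lemma~\ref{lemma:probability}), with high probability $W$ hits the vertex set of $\pi_G(x,y)$ for every pair $x,y$ with $d_G(x,y)\ge n/d$ — indeed any such shortest path has $\ge n/d$ vertices, so the probability it is missed is at most $(1-(n/d)/n)^{|W|}=e^{-\Omega(\log n)}$, and a union bound over the $O(n^2)$ pairs finishes it. For the second layer I would add, for each $w\in W$, the trees $\inbfs(w)$ and $\outbfs(w)$ truncated at depth $n/d$ is not enough; instead add the full $\inbfs(w),\outbfs(w)$. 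Then for any $x,y$: pick $w\in W$ on $\pi_G(x,y)$ (when $d_G(x,y)\ge n/d$); then $d_H(x,y)\le d_H(x,w)+d_H(w,y)=d_G(x,w)+d_G(w,y)=d_G(x,y)$. For pairs with $d_G(x,y)< n/d$ we have not yet guaranteed anything, so we need the extra structure below.

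To handle the "short" pairs I would add a second gadget that certifies, for every pair at distance $<n/d$, a path in $H$ of length at most $d_G(x,y)+n/d$. The cleanest route is: additionally include a $(1,n/d)$-style construction, i.e. take a second random set $W'$ of size $\Theta((n\log n)/d)$ hitting all shortest paths of length $\ge n/d$ (redundant with $W$) together with, for each ordered pair $(u,v)\in W\times W$, the shortest path $\pi_G(u,v)$ — but that costs $|W|^2\cdot n = \Theta(n^3\log^2 n/d^2)$ edges, which is too much unless $d$ is large. So instead I would give two separate constructions and take whichever is sparser, matching the $\min\{n^{1/3}\log n,\,d\log^2 n\}$ in the statement: one construction tuned for small $d$ using a $\Theta(n^{2/3})$-size net and $n^{1/3}$-truncated BFS (edge count $O(n + d n^{2/3}\cdot n^{1/3}\log n) = O(n+dn\log n)$... re-balanced to $O(n + dn^{2/3}\cdot n^{1/3}\log n)$), and one tuned for large $d$ using the hitting set $W$ of size $\Theta((n\log n)/d)$ together with all pairwise shortest paths among $W$, giving $O(n + |W|^2\cdot (n/d)) = O(n + dn^{2/3}\cdot d\log^2 n)$ edges after optimizing the net size, where the $n/d$ factor is the truncation depth. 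In both cases the union of $\inbfs,\outbfs$ over the net plus the (possibly truncated) pairwise shortest paths contributes $O(n)$ per net vertex plus the path edges.

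Concretely, for the general construction I would: (1) let $\alpha$ be a parameter, take $W$ a random subset of size $\alpha$ hitting $\nin(v,n/\alpha)$ and $\nout(v,n/\alpha)$ for all $v$ (Lemma~\ref{lemma:probability} with $n_p=\alpha$, $n_q=n/\alpha$, valid when $\alpha\cdot(n/\alpha)\ge 8n\log n$, i.e. after inflating $\alpha$ by $\log n$); (2) add $\inbfs(w),\outbfs(w)$ truncated at depth $n/d$ for every $w\in W$; (3) add $\pi_G(u,v)$ for every $(u,v)\in W\times W$; (4) for every $v$, add the edges of a shortest path from $v$ into $\nin(v,n/\alpha)\cap$ (reach to $W$) — more simply, add $\inbfs(v, n/d)$ for all $v$? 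That last step is too dense; the right move is to route through the net: every $v$ reaches some $w\in W$ within distance $n/\alpha$ along a path we store, and symmetrically. Then for any $x,y$, go $x\to w_x$ ($\le n/\alpha$), $w_x\to w_y$ (exact, $\le d_G(x,y)+2n/\alpha$), $w_y\to y$ ($\le n/\alpha$), total $\le d_G(x,y)+4n/\alpha$; choosing $\alpha=4d$ yields additive stretch $n/d$. Edge count: $O(n)$ for the routing paths, $O(\alpha\cdot n/d)=O(n)$ for truncated BFS, and $O(\alpha^2\cdot n/d)=O(d^2\log^2 n\cdot n/d)=O(dn\log^2 n)$... which overshoots; balancing the two regimes (small vs large $d$) by using a coarser net of size $n^{2/3}$ in the small-$d$ case gives the stated $O(n+dn^{2/3}\min\{n^{1/3}\log n, d\log^2 n\})$.

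The main obstacle I anticipate is the bookkeeping for short pairs: the hitting set only controls long shortest paths, so the additive term $n/d$ must come entirely from the truncation depth and from the detour to and from the net, and making the edge count land exactly on $O(n+dn^{2/3}\min\{n^{1/3}\log n,d\log^2 n\})$ requires choosing two different net sizes and taking the better bound. I expect the long-pair analysis and the BFS-forest size bound to be routine (exactly as in Theorem~\ref{theorem:static_diam2}), while correctly interleaving the truncation depth $n/d$ with the net radius — so that the triangle inequality yields $d_G(x,y)+n/d$ and not something larger — is the delicate point, together with verifying the $\min$ by case analysis on whether $d\lessgtr n^{1/3}/\log n$.
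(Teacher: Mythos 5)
Your first layer---a small random net that every vertex can reach (and be reached from) within distance roughly $n/d$, plus shortest-path routing to and from it---is indeed the same starting point as the paper: it samples $S$ of size $O(d\log n)$ hitting $\nin(w,n/2d)$ and $\nout(w,n/2d)$ for every $w$, and adds $\inbfs(S)$ and $\outbfs(S)$, which is exactly what buys the additive $n/d$. (Two quibbles there: a set hitting all $(n/d)$-size neighborhoods/paths needs size $\Theta(d\log n)$, not $\Theta((n\log n)/d)$, and a BFS tree truncated at depth $n/d$ still has up to $n-1$ edges, so your count $O(\alpha\cdot n/d)$ for the truncated trees is not valid---though that part is fixable.) The large-$d$ branch is also essentially within reach of what you wrote: with $|S|=O(d\log n)$, simply adding the full $\outbfs(s)$ for every $s\in S$ costs $O(nd\log n)=O(dn^{2/3}\cdot n^{1/3}\log n)$ edges and immediately gives $d_H(x,y)\le d_H(x,S)+\dia(G)\le \dia(G)+n/d$, which is the paper's first case.

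The genuine gap is in the small-$d$ branch, i.e.\ the term $dn^{2/3}\cdot d\log^2 n$. You try to obtain it by storing explicit shortest paths $\pi_G(u,v)$ for all $(u,v)\in W\times W$, you correctly observe this overshoots, and you then appeal to ``optimizing the net size'' or a ``coarser net of size $n^{2/3}$''; but this cannot work. The net radius must stay at most $n/d$ to keep the additive stretch, which forces the net to have $\Omega(d)$ vertices, so explicit pairwise paths cost $\Omega(d^2\cdot\ell)$ edges where $\ell$ is the typical inter-net shortest-path length---and $\ell$ can be as large as $\dia(G)$, which is not bounded by $n^{2/3}$ in any regime. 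The $n^{2/3}$ in the statement is not a net size at all: the paper invokes Bodwin's pairwise distance preserver, which says that for any pair set $P$ one can preserve all distances in $P$ exactly using a subgraph with $O(n+n^{2/3}|P|)$ edges; applying it with $P=S\times S$, $|S|=O(d\log n)$, gives $O(n+n^{2/3}d^2\log^2 n)$ edges, and the spanner is $\inbfs(S)\cup\outbfs(S)$ plus this preserver, yielding $d_H(u,v)\le n/2d+\dia(G)+n/2d$. Without that (or an equivalent) preserver theorem, your construction has no route to the claimed $\min\{\cdot\}$ bound, so the proof as proposed does not go through.
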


\begin{proof}
Let $S$ be random set of $8d\log n$ vertices.
We first check that $S$ has non-empty intersection with $\nout(w,n/2d)$ and $\nin(w,n/2d)$,
for each $w\in V$, if not then re-sample $S$. The expected computation time of $S$ is $O(m)$.
Next we initialize $H$ to union of trees $\inbfs(S)$ and $\outbfs(S)$. 

If $n^\frac{1}{3}\log n\leq d\log^2 n$, then we add to $H$ tree $\outbfs(s)$, for each $s\in S$. 
So the number of edges in $H$ is $O(n|S|)=O(nd\log n)$.
To prove correctness consider any two vertices $x,y\in V$.
There must exists $s\in S$ such that, $d_H(x,s)=d_H(x,S) = d_G(x,S)\leq n/d$
(the last inequality holds since $S$ intersects $\nout(x,n/d)$ and $G$ is unweighted).
Also $d_H(s,y)\leq \dia(G)$. Thus, $d_H(x,y)\leq d_H(x,s)+d_H(s,y)\leq \dia(G)+n/d$.

Let us next consider the case $d\log^2 n\leq n^\frac{1}{3}\log n$. In this case
 we add to $H$ a pair-wise distance preserver for 
each pair of nodes in $S\times S$. Bodwin~\cite{Bodwin17} showed that for any set $S(\subseteq V)$ 
in a directed graph, we can compute a sparse subgraph with at most $O(n+n^{2/3}|S\times S|)$ edges 
that preserves distance between each node pair in $S\times S$. So the total number of edges in 
subgraph $H$ is $O(n+n^{2/3}|S\times S|)=O(n+n^{2/3}d^2\log^2n)$. Now to prove the 
correctness consider any two vertices $u,v\in V$, let $x_u,x_v\in S$ be such that 
$x_u\in \nout(u,n/2d)$ and $x_v\in \nin(v,n/2d)$. Then $d_H(u,s_u)\leq n/2d$, 
$d_H(s_v,v)\leq n/2d$, and $d_H(s_u,s_v)\leq \dia(G)$. Thus, $d_H(u,v)\leq \dia(G)+n/d$.
\end{proof}

On substituting $d=n^{1/6}$ in theabove lemma, we obtain a
construction of diameter spanner with $(1+o(1))$ stretch
that contains $\O(n)$ edges, whenever $D=\omega(n^{5/6})$.



\begin{theorem}
For any $n$-vertex directed graph $G=(V,E)$ satisfying $\dia(G)=\omega(n^{5/6})$, we can compute a 
$(1+o(1))$-diameter spanner of $G$ containing at most $O(n\log^2n)$ edges.
\end{theorem}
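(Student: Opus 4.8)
The plan is to invoke Lemma~\ref{theorem:additive_1} with a suitable choice of $d$ and then argue that the resulting additive error is negligible compared with $\dia(G)$ under the hypothesis $\dia(G)=\omega(n^{5/6})$. First I would set $d=n^{1/6}$. With this choice, the lemma gives a subgraph $H$ with $\dia(H)\leq \dia(G)+n/d = \dia(G)+n^{5/6}$. For the edge bound, plug $d=n^{1/6}$ into the size expression $O(n+dn^{2/3}\cdot\min\{n^{1/3}\log n, d\log^2 n\})$: here $d\log^2 n = n^{1/6}\log^2 n$, which is $o(n^{1/3}\log n)$, so the minimum is the second term, and we get $O(n + n^{1/6}\cdot n^{2/3}\cdot n^{1/6}\log^2 n) = O(n + n\log^2 n) = O(n\log^2 n)$ edges, as claimed.

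Next I would convert the additive guarantee into a multiplicative $(1+o(1))$ guarantee. Since $\dia(G)=\omega(n^{5/6})$, we have $n^{5/6} = o(\dia(G))$, hence
\[
\dia(H) \leq \dia(G) + n^{5/6} = \dia(G)\left(1 + \frac{n^{5/6}}{\dia(G)}\right) = (1+o(1))\,\dia(G).
\]
Finally, I would note that $H$ is a subgraph of $G$ (this is guaranteed by Lemma~\ref{theorem:additive_1}), so $\dia(H)\geq \dia(G)$, and therefore $H$ is genuinely a $(1+o(1))$-diameter spanner in the sense of the paper's definition; the computation time and randomization (Las Vegas, $O(m)$ expected resampling time for $S$) are inherited directly from the proof of Lemma~\ref{theorem:additive_1}.

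I do not anticipate a genuine obstacle here: the statement is essentially a corollary obtained by substituting a single parameter value into the already-proved Lemma~\ref{theorem:additive_1}, as the sentence preceding the theorem already indicates. The only point requiring a moment of care is verifying which branch of the $\min$ is active for $d=n^{1/6}$ (it is the $d\log^2 n$ branch, giving the pair-wise distance preserver construction of Bodwin~\cite{Bodwin17}), and checking that the $O(n\log^2 n)$ term indeed dominates the additive $O(n)$ term in the size bound. Both are immediate. One could also remark that $d=n^{1/6}$ is the optimal balance point: it is where the two candidate size contributions $dn\log n$ (first branch) and $n+n^{2/3}d^2\log^2 n$ (second branch) cross, which is why this exponent is natural, but this observation is not needed for the proof itself.
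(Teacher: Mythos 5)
Your proposal is correct and follows exactly the paper's route: the theorem is obtained by substituting $d=n^{1/6}$ into Lemma~\ref{theorem:additive_1}, noting that the additive term $n/d=n^{5/6}$ is $o(\dia(G))$ under the hypothesis, and checking that the $d\log^2 n$ branch of the $\min$ is active so the edge count is $O(n\log^2 n)$. Your added verifications (which branch of the $\min$ applies, and that $H$ is a subgraph) are accurate and consistent with the paper's argument.
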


\subsection{General (low-stretch or low-size)-Diameter Spanners}
We show that for any directed graph $G$ we can either 
(i) compute a diameter spanner with arbitrarily low stretch, or
(ii) compute a diameter spanner with arbitrarily low size.

\begin{theorem}
Let $n_p,n_r>1$ be integers satisfying $n_pn_r= 8n\log n$, and $p,r>0$ be fractions satisfying $p+r=1$. 
For any directed graph $G=(V,E)$ with $n$ vertices and $m$ edges, in $O(m\max\{n_p,n_r\})$ 
expected time, we can compute a subgraph $H=(V,E_0\subseteq E)$ satisfying at least one of the following conditions
:\\[1mm]
(i) $|E_0|=O(nn_p)$ and $\dia(H)\leq \lceil (1+p) ~\dia(G)\rceil $.\\[1mm]
(ii) $|E_0|=O(nn_q)$ and $\dia(H)\leq \lceil (1+r) ~\dia(G)\rceil $.
\label{theorem:pq_diam}
\end{theorem}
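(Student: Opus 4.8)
The plan is to reduce to the dominating-set-pair machinery of Lemma~\ref{lemma:dom-set-rand}, using the asymmetric split $(p,r)$ in place of the $(1/2,1/2)$ split that underlies the $1.5$-diameter spanner of Theorem~\ref{theorem:static_diam1}. First I would apply Lemma~\ref{lemma:dom-set-rand} with the integers $n_p,n_r$ (which satisfy $n_pn_r=8n\log n$) and the fractions $p$ and $q:=r$ (so $p+q=1$); in $O(m)$ expected time this produces a vertex $a\in V$ and a set-pair $(S_1,S_2)$ with $|S_1|=O(n_p)$, $|S_2|=O(n_r)$ that is $\langle\lfloor p\,\inecc(a)\rfloor,\lceil r\,\inecc(a)\rceil\rangle$-dominating. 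Then, to make the dichotomy algorithmic, I would compute $\inecc(a)=\depth(\inbfs(a))$ together with a single multi-source outgoing BFS from $S_1$, both in $O(m)$ time, and test whether $\depth(\outbfs(S_1))\le\lfloor p\,\inecc(a)\rfloor$, i.e.\ whether $S_1$ is $\lfloor p\,\inecc(a)\rfloor$-out-dominating.

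If the test succeeds, I would return $H:=\bigcup_{s\in S_1}\big(\inbfs(s)\cup\outbfs(s)\big)$, which has $O(n\,|S_1|)=O(n n_p)$ edges and is built in $O(m n_p)$ time. For any $x,y\in V$ pick $s\in S_1$ with $d_G(s,y)\le\lfloor p\,\inecc(a)\rfloor$; since $\inbfs(s),\outbfs(s)\subseteq H$ we get $d_H(x,y)\le d_H(x,s)+d_H(s,y)=d_G(x,s)+d_G(s,y)\le\dia(G)+\lfloor p\,\dia(G)\rfloor\le\lceil(1+p)\dia(G)\rceil$, which is condition~(i). If instead the test fails, then Lemma~\ref{lemma:dom-set-rand} forces $S_2$ to be $\lceil r\,\inecc(a)\rceil$-in-dominating, and I would return $H:=\bigcup_{s\in S_2}\big(\inbfs(s)\cup\outbfs(s)\big)$, with $O(n\,|S_2|)=O(n n_r)$ edges, built in $O(m n_r)$ time; for any $x,y$ pick $s\in S_2$ with $d_G(x,s)\le\lceil r\,\inecc(a)\rceil$ and conclude $d_H(x,y)\le d_G(x,s)+d_G(s,y)\le\lceil r\,\dia(G)\rceil+\dia(G)=\lceil(1+r)\dia(G)\rceil$, which is condition~(ii). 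In both branches the total expected time is $O(m)+O(m\max\{n_p,n_r\})=O(m\max\{n_p,n_r\})$, the only randomness sitting inside Lemma~\ref{lemma:dom-set-rand}.

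I do not expect a genuine obstacle; the argument is a near-verbatim adaptation of the $1.5$-spanner proof with the symmetric split replaced by $(p,r)$. The two places needing a little care are the floor/ceiling bookkeeping --- one uses $\inecc(a)\le\dia(G)$ and monotonicity of $\lfloor\cdot\rfloor,\lceil\cdot\rceil$, together with $b+\lceil c\rceil=\lceil b+c\rceil$ for integer $b$, to absorb the slack --- and turning the ``either/or'' of Lemma~\ref{lemma:dom-set-rand} into an explicit $O(m)$-time test, so that $H$ is built around a single one of $S_1,S_2$ and the edge count is $O(n n_p)$ (resp.\ $O(n n_r)$) rather than $O(n(n_p+n_r))$.
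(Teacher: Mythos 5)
Your proposal is correct and follows essentially the same route as the paper: invoke Lemma~\ref{lemma:dom-set-rand} with the asymmetric $(p,r)$ split and output the union of $\inbfs(s)$ and $\outbfs(s)$ over the relevant dominating side, with the same floor/ceiling bookkeeping via $\inecc(a)\le\dia(G)$. The only cosmetic difference is that you add an explicit $O(m)$ test on $\depth(\outbfs(S_1))$ to build just one of the two subgraphs, whereas the paper describes both $H_1$ and $H_2$ and argues that at least one satisfies its respective bound.
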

\begin{proof}
Let $D$ denote the diameter of $G$. Let $(S_1,S_2)$ 
be a $\langle \lfloor p~\inecc_G(a)\rfloor,\lceil r~\inecc_G(a)\rceil\rangle$-dominating-set-pair of 
size bound $\langle n_p,n_r\rangle$ obtained from Lemma~\ref{lemma:dom-set-rand} for 
some $a\in V$ and some integers $n_p,n_r>1$ satisfying $n_pn_r= 8n\log n$.
Let $H_1$ (respectively $H_2$) be the union of the trees $\inbfs(s)$ and $\outbfs(s)$, 
for each $s\in S_1$ (respectively $S_2$). The time for computing $H_1$ and $H_2$ 
is derived from $|S_1\cup S_2|$ BFS computations, plus the time for finding the dominating 
set-pair $(S_1,S_2)$, which in total is $O(m(n_p+n_r))$ on expectation. Note that the
graph $H_1$ contains $O(nn_p)$ edges and $H_2$ contains $O(nn_r)$ edges. 
Now consider any two vertices $x,y\in V$.
If $S_1$ is $\lfloor p~\inecc_G(a)\rfloor $-out-dominating, then there exists $s\in S_1$ such that 
$d_G(s,y)\leq  p~\inecc_G(a)\leq pD$. Since $d_{H_1}(x,s)=d_G(x,s)\leq  D $, and 
$d_{H_1}(s,y)=d_G(s,y)\leq pD$,  we have $d_{H_1}(x,y)\leq (1+p)D$.
Similarly, if $S_2$ is $\lceil r~\inecc_G(a)\rceil$-in-dominating, then $d_{H_2}(x,y)\leq \lceil (1+r)D\rceil$.
Thus the claim follows.
\end{proof}

As a corollary we obtain the following result.

\begin{corollary}
For any arbitrarily small fractions $\epsilon,\delta>0$, and any given directed graph $G$, 
in~$\widetilde O(mn^{1-\epsilon})$ expected time, at least one of the following subgraphs
can be computed. \\[1mm]
(i) a $(1+\delta)$-diameter spanner of $G$ containing at most $O(n^{2-\epsilon}\sqrt{\log n})$ edges.\\[1mm]
(ii) a $(2-\delta)$-diameter spanner of $G$ containing at most $O(n^{1+\epsilon}\sqrt{\log n})$ edges.
\end{corollary}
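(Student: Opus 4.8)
The plan is to instantiate Theorem~\ref{theorem:pq_diam} once, with a single well-chosen pair of parameters. Take $p=\delta$ and $r=1-\delta$, so that $p+r=1$ and $p,r>0$ (if $\delta\geq 1$ the statement is vacuous, so assume $\delta<1$; and since $\epsilon$ is ``arbitrarily small'' we may assume $\epsilon\leq 1/2$). Choose the integers $n_p,n_r$ so that $n_p=\Theta\big(n^{1-\epsilon}\sqrt{\log n}\big)$ and $n_pn_r=8n\log n$; this forces $n_r=\Theta\big(n^{\epsilon}\sqrt{\log n}\big)$. Concretely one sets $n_p$ to be the integer nearest to $n^{1-\epsilon}\sqrt{\log n}$ that makes the product exactly $8n\log n$ (equivalently, one runs the construction underlying Lemma~\ref{lemma:dom-set-rand} with these sizes); only the asymptotics matter below.

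With this choice, Theorem~\ref{theorem:pq_diam} produces, in $O\big(m\max\{n_p,n_r\}\big)$ expected time, a subgraph $H=(V,E_0)$ satisfying at least one of its two alternatives. In the first alternative, $|E_0|=O(nn_p)=O\big(n^{2-\epsilon}\sqrt{\log n}\big)$ and $\dia(H)\leq \lceil(1+p)\dia(G)\rceil=\lceil(1+\delta)\dia(G)\rceil$, which by the definition of diameter-spanner means $H$ is a $(1+\delta)$-diameter spanner; this is conclusion (i). In the second alternative, $|E_0|=O(nn_r)=O\big(n^{1+\epsilon}\sqrt{\log n}\big)$ and $\dia(H)\leq \lceil(1+r)\dia(G)\rceil=\lceil(2-\delta)\dia(G)\rceil$, so $H$ is a $(2-\delta)$-diameter spanner; this is conclusion (ii).

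Finally I would bound the running time: since $\epsilon\leq 1/2$ we have $n^{1-\epsilon}\geq n^{\epsilon}$, hence $\max\{n_p,n_r\}=O\big(n^{1-\epsilon}\sqrt{\log n}\big)$, and the expected running time is $O\big(m\,n^{1-\epsilon}\sqrt{\log n}\big)=\widetilde O(mn^{1-\epsilon})$, as claimed.

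There is essentially no obstacle here, as the corollary is a direct specialization of Theorem~\ref{theorem:pq_diam}. The only points requiring a line of care are (a) meeting the exact divisibility constraint $n_pn_r=8n\log n$, which is handled by rounding or by passing through Lemma~\ref{lemma:dom-set-rand}, and (b) restricting to $\epsilon\leq 1/2$ so that the dominating term in $\max\{n_p,n_r\}$ is $n_p$; for $\epsilon>1/2$ one would instead get runtime $\widetilde O(mn^{\epsilon})$, which is still acceptable but lies outside the stated regime of interest.
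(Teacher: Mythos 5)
Your proof is correct and is exactly the intended argument: the paper presents this as an immediate corollary of Theorem~\ref{theorem:pq_diam}, and your instantiation with $p=\delta$, $r=1-\delta$, $n_p=\Theta(n^{1-\epsilon}\sqrt{\log n})$, $n_r=\Theta(n^{\epsilon}\sqrt{\log n})$ (subject to $n_pn_r=8n\log n$) is precisely the right specialization. Your side remarks about rounding to meet the exact product constraint and about restricting to $\epsilon\leq 1/2$ to justify the $\widetilde O(mn^{1-\epsilon})$ runtime are appropriate and consistent with the paper's implicit reading of ``arbitrarily small'' $\epsilon$.
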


\section{Eccentricity Spanners and Radius Spanners}

\subsection{An $\O(n)$-Size 2-Eccentricity Spanner Computation in $\O(m)$ Time}
In order to obtain a $2$-eccentricity spanner in near optimal time\footnote{
Observe that a graph $H$ which is union of $\inbfs(c)$ and $\outbfs(c)$,
where $c$ is the centre~(a vertex of minimum out-eccentricity) of $G$, is a $2$-eccentricity spanner
containing just $2n$ edges. 
However, the computation time of $H$ is large since the best known
algorithm for computing a centre of directed weighted graphs takes $O(mn)$ time. 
}, we first show that for any $n$-vertex graph $G$,
we can compute in $\O(m)$ time a set $S$ containing $O(\log^2 n)$ vertices 
such that $\depth(\outbfs(S))$ is at most $\rad(G)$. 
(Our results hold for the general setting of directed weighted graphs).

\begin{algorithm}[!ht]
$B_k\gets V$\;
\For{$i=k-1$ to $1$}{
$A_i\gets $ uniformly random subset of $B_{i+1}$ of size $\min\{8n^{1/k}\log n,|B_{i+1}|\}$\;
$a_i\gets $ vertex of maximum depth in $\outbfs(A_i)$\;
$B_i\gets $ a subset of $B_{i+1}$ containing the $n^{(i/k)}$-closest incoming vertices to $a_i$ that lie in $B_{i+1}$\;
\lIf{$A_i\cap B_i=\emptyset$}{ go to step 3 to re-sample $A_i$, and next recompute $a_i$ and $B_i$}
}
$S\gets B_1\cup A_1\cup A_2\cup\cdots \cup A_{k-1}$\;
$H\gets $ union of $\inbfs(s)$ and $\outbfs(s)$, for $s\in S$\;
\Return $H$\;
\caption{2-Eccentricity Spanner Construction}
\label{Algorithm:ecc-2}
\end{algorithm}

Let $k$ be a parameter to be chosen later on. The construction of set $S$ is very simple and 
presented in Algorithm~\ref{Algorithm:ecc-2}. The expected runtime of the algorithm is $O(mk+m|S|)$. 
To observe this note that $|B_{i+1}|=n^{(i+1)/k}$.
Now we take $A_i$ to be uniformly random subset of $B_{i+1}$ of size at most $8n^{1/k} \log n$,
and $B_i$ contains those $n^{i/k}$ closest incoming vertices to $a_i$ that lie in $B_{i+1}$.
Since $B_i$, $A_i$ are both subsets of $B_{i+1}$, the expected number of re-samplings in step 6 for 
each $i\in[1,k-1]$ is $O(1)$. So the total time taken by steps 2-6 is $O(mk)$ on expectation, 
and the time taken by steps 7-9 is $O(m|S|)$. 

We next prove the correctness of the algorithm through the following lemmas.

\begin{lemma} For any index $i\in[1,k]$, the set-pair $(A_i,B_i)\in \P(B_{i+1})\times \P(B_{i+1})$ 
is of size bound $\langle n^{1/k}\log n,n^{i/k}\rangle$
and satisfy that for each $x\in B_{i+1}$, either $\depth(\outbfs(A_i))\leq \outecc(x)$, or $x\in B_i$.
\label{lemma:ecc-new-1}
\end{lemma}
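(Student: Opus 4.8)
The plan is to verify the two asserted properties of the pair $(A_i,B_i)$ separately; the size and containment claims are immediate from Algorithm~\ref{Algorithm:ecc-2}, so the dichotomy ``$\depth(\outbfs(A_i))\le\outecc(x)$ or $x\in B_i$'' is the only substantive point.

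\textbf{Containment and size bounds.} First I would observe that both $A_i$ and $B_i$ are subsets of $B_{i+1}$ straight from the construction: $A_i$ is sampled inside $B_{i+1}$ in step~3, and $B_i$ is selected inside $B_{i+1}$ in step~5. The cardinalities are then read off directly: $|A_i|=\min\{8n^{1/k}\log n,|B_{i+1}|\}=O(n^{1/k}\log n)$ and $|B_i|\le n^{i/k}$. (A trivial downward induction on $i$, using $|B_k|=|V|=n$ and $n^{(i+1)/k}\ge n^{i/k}$, further gives $|B_{i+1}|=n^{(i+1)/k}$, so step~5 never runs out of candidates; I would only invoke this for the running-time bookkeeping in the surrounding discussion, not for the lemma itself.)

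\textbf{The dichotomy.} Fix $x\in B_{i+1}$ and suppose $\depth(\outbfs(A_i))>\outecc(x)$; the goal is to deduce $x\in B_i$. The key is that, by step~4, $a_i$ is a deepest vertex of the supernode-BFS tree $\outbfs(A_i)$, so that
\[
d_G(A_i,a_i)=\depth(\outbfs(A_i))>\outecc(x)=\max_{z\in V}d_G(x,z)\ge d_G(x,a_i).
\]
Since $d_G(A_i,a_i)=\min_{v\in A_i}d_G(v,a_i)$, this says $x$ is strictly closer to $a_i$ (along paths into $a_i$) than \emph{every} vertex of $A_i$. Now step~6 guarantees $A_i\cap B_i\ne\emptyset$ — otherwise $A_i$ would have been re-sampled — so I may fix some $v\in A_i\cap B_i$; then $d_G(x,a_i)<d_G(v,a_i)$. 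Since $B_i$ consists, by step~5, of the $n^{i/k}$ vertices of $B_{i+1}$ of smallest incoming distance to $a_i$, and $x\in B_{i+1}$ is strictly closer to $a_i$ than the member $v$ of $B_i$, the vertex $x$ must itself lie in $B_i$ — the strictness of the inequality makes the arbitrary tie-breaking in step~5 irrelevant. This is exactly the second alternative, so the dichotomy holds for every $x\in B_{i+1}$.

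\textbf{Where the difficulty lies.} The argument is short, so the only thing that needs care is isolating the two facts that drive it: $a_i$ is chosen as a \emph{deepest} vertex of $\outbfs(A_i)$, and the re-sampling loop enforces $A_i\cap B_i\ne\emptyset$. Together they force any $x$ whose out-eccentricity undershoots $\depth(\outbfs(A_i))$ to be closer to $a_i$ than some element already in $B_i$, and hence to be placed in $B_i$ as well. I would also keep an eye on the degenerate branch $|B_{i+1}|<n^{i/k}$ (there $B_i=B_{i+1}$, so $x\in B_i$ holds trivially), though the induction above shows it never actually occurs.
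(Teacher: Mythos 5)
Your proof is correct and follows essentially the same route as the paper's: both hinge on the facts that $a_i$ is a deepest vertex of $\outbfs(A_i)$ (so $d_G(A_i,a_i)=\depth(\outbfs(A_i))>\outecc(x)\geq d_G(x,a_i)$) and that the re-sampling step forces $A_i\cap B_i\neq\emptyset$, whence the closest-incoming-vertices definition of $B_i$ places $x$ inside $B_i$. The paper phrases this via the truncated tree $\inbfs(a_i,d)$ being disjoint from $A_i$ and hence contained in $B_i$, while you compare $x$ directly against a witness $v\in A_i\cap B_i$; this is the same argument in slightly different packaging.
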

\begin{proof}
Suppose $d$ is an integer satisfying $\depth(\outbfs(A_i))=\depth(a_i,\outbfs(A_i))> d$. 
Then $\inbfs(a_i,d)$ 
must have empty-intersection with $A_i$. This is possible only when $B_i$ contains the set $B_{i+1}\cap \inbfs(a_i,d)$, 
since $A_i$ intersects $B_i$, but not the set $B_{i+1}\cap \inbfs(a,d)$. 
Notice that $B_{i+1}\cap \inbfs(a_i,d)$ contains 
each vertex $x\in B_{i+1}$ that satisfy $\outecc(x)\leq d$. So for any vertex $x\in B_{i+1}$, on 
substituting $d=\outecc(x)$, we get that either $\depth(\outbfs(A_i))\leq \outecc(x)$ 
or $x\in B_i$.
\end{proof}

\begin{lemma}
The size of the set $S$ is at most $O(kn^{1/k}\log n)$ and it satisfies the condition that
$\depth(\outbfs(S))$ is at most $\outecc(x)$, for each $x\in V$.
\end{lemma}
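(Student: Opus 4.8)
The plan is to prove the two assertions in turn; both are short.

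\textbf{Size.} For the cardinality bound I would simply sum: each $A_i$ is chosen in step~3 to have at most $8n^{1/k}\log n$ vertices and there are $k-1$ indices $i\in[1,k-1]$, while $B_1$ has at most $n^{1/k}$ vertices. Hence $|S|\le |B_1|+\sum_{i=1}^{k-1}|A_i|\le n^{1/k}+8(k-1)n^{1/k}\log n=O(kn^{1/k}\log n)$.

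\textbf{Covering.} The key facts I would isolate first are: (i) the sets form a nested chain $V=B_k\supseteq B_{k-1}\supseteq\cdots\supseteq B_1$, since each $B_i$ is built in step~5 as a subset of $B_{i+1}$; (ii) the monotonicity statement that if $S'\subseteq S''\subseteq V$ then $\depth(\outbfs(S''))\le\depth(\outbfs(S'))$, because $d_G(S'',v)\le d_G(S',v)$ for every $v$, i.e.\ a multi-source BFS only reaches vertices faster when more sources are added; and (iii) Lemma~\ref{lemma:ecc-new-1}, applied for each $i\in[1,k-1]$ (with $B_{i+1}$ ranging over $B_2,\dots,B_k=V$). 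Now fix an arbitrary $x\in V=B_k$ and descend the chain. Starting at $i=k-1$, Lemma~\ref{lemma:ecc-new-1} (invoked with $x\in B_{i+1}$) says that either $\depth(\outbfs(A_i))\le\outecc(x)$ or $x\in B_i$. In the first case we stop: since $A_i\subseteq S$, fact~(ii) gives $\depth(\outbfs(S))\le\depth(\outbfs(A_i))\le\outecc(x)$. In the second case $x\in B_i$, and we repeat the argument with $i-1$ in place of $i$. The descent either halts at some $i\in[1,k-1]$ with $\depth(\outbfs(A_i))\le\outecc(x)$ (and we conclude as above), or it runs through all levels and ends with $x\in B_1$. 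But $B_1\subseteq S$, so $x$ is one of the sources of $\outbfs(S)$, whence $\depth(\outbfs(S))\le\depth(\outbfs(x))=\outecc(x)$. In every case $\depth(\outbfs(S))\le\outecc(x)$, and since $x$ was arbitrary this proves the claim; in particular, taking $x$ to be a center of $G$ yields $\depth(\outbfs(S))\le\rad(G)$, and the choice $k=\Theta(\log n)$ makes $|S|=O(\log^2 n)$.

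\textbf{Expected difficulty.} I do not anticipate a real obstacle: the argument is a telescoping application of Lemma~\ref{lemma:ecc-new-1} down the chain of the $B_i$'s, glued together by the trivial monotonicity of multi-source BFS depth. The only place to be careful is the index bookkeeping — ensuring that $A_i,B_i,a_i$ are referenced only for $i\in[1,k-1]$, that $B_k=V$ is the correct top of the chain, and that the $A_i,B_i$ passed to Lemma~\ref{lemma:ecc-new-1} are the values finally stored after the possible re-samplings in step~6, so that the hypothesis $A_i\cap B_i\ne\emptyset$ of that lemma genuinely holds.
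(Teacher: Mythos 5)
Your proposal is correct and follows essentially the same route as the paper: the size bound is the same direct summation over $B_1$ and the $A_i$'s, and the covering claim is the same application of Lemma~\ref{lemma:ecc-new-1} at the first level of the nested chain where $x$ drops out, combined with the monotonicity of multi-source BFS depth (the paper simply jumps to the extremal index $j$ with $x\in B_j$, $x\notin B_{j-1}$ — note its "largest" should read "smallest" — whereas you descend the chain iteratively, which is the same argument unrolled). Your cautionary remarks about restricting indices to $[1,k-1]$ and using the post-resampling $A_i,B_i$ are apt but do not change the substance.
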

\begin{proof}
Consider any vertex $x\in V$. Let $j\in [1,k]$ be the largest index such that $x\in B_j$.
If $j=1$, then $x\in B_1\subseteq S$, and thus $\depth(\outbfs(S))\leq \depth(\outbfs(x))= \outecc(x)$.
If $j>1$, then $x\notin B_{j-1}$, and by Lemma~\ref{lemma:ecc-new-1}, 
$\depth(\outbfs(A_{j-1}))\leq \outecc(x)$, which shows that $\depth(\outbfs(S))\leq \outecc(x)$.
\end{proof}

Since $\depth(\outbfs(S))\leq \outecc(x)$, for each $x\in V$, it follows that 
$\depth(\outbfs(S))$ is bounded by $\rad(G)$. On substituting $k=\log_2 n$, we get that 
$|S|=O(\log^2 n)$, and time for computing $S$ is $O(mk)=O(m\log  n)$.
To compute a $2$-eccentricity spanner $H$, we just take union of $\inbfs(s)$ and $\outbfs(s)$, for $s\in S$.
For any $x,y\in V$, there will exists a vertex $s\in S$ satisfying $d_G(s,y)\leq \rad(G)$,
and so $d_H(x,y)\leq d_H(x,s)+d_H(s,y)\leq \outecc_G(x)+\rad(G)\leq 2\outecc_G(x)$.
Also $H$ is a $2$-radius spanner because if $c\in V$ is the vertex in $G$ with minimum eccentricity, 
then $\rad(H)\leq \outecc_H(c)\leq 2\outecc_G(c)=2\rad(G)$.
 From the above discussion, we obtain the following theorem.

\begin{theorem}
There exists an algorithm that for any directed weighted graph $G=(V,E)$ with $n$ vertices and $m$ edges, 
computes in $O(m\log^2 n)$ expected time a $2$-eccentricity spanner (and a $2$-radius spanner) of $G$
with at most $O(n\log^2n)$ edges.
\label{theorem:static-ecc-best}
\end{theorem}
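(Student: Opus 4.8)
The plan is to run Algorithm~\ref{Algorithm:ecc-2} with the single choice $k=\log_2 n$ and then to read the theorem off from the properties already established for the objects it produces. By the lemma above bounding the size of $S$, the returned set has $|S|=O(kn^{1/k}\log n)$, which for $k=\log_2 n$ is $O(\log^2 n)$ since then $n^{1/k}=2$; the same lemma gives $\depth(\outbfs(S))\le \outecc(x)$ for every $x\in V$, so in particular $\depth(\outbfs(S))\le \rad(G)$. I would then set $H$ to be the union of $\inbfs(s)$ and $\outbfs(s)$ over $s\in S$. Each of these $2|S|$ trees has at most $n-1$ edges, so $H$ has $O(n|S|)=O(n\log^2 n)$ edges, and it is built from $2|S|$ BFS (or Dijkstra) computations in $O(m|S|)$ time; adding the $O(mk)$ expected cost of Algorithm~\ref{Algorithm:ecc-2} itself, the total expected running time is $O(mk+m|S|)=O(m\log^2 n)$.

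For the stretch guarantee, fix an ordered pair $x,y\in V$. The property of $S$ yields a vertex $s\in S$ with $d_G(s,y)\le \depth(\outbfs(S))\le \rad(G)\le \outecc_G(x)$. Since $H$ contains the in-tree $\inbfs(s)$ it records a shortest $x$-to-$s$ path, and since it contains the out-tree $\outbfs(s)$ it records a shortest $s$-to-$y$ path, so $d_H(x,y)\le d_G(x,s)+d_G(s,y)\le \outecc_G(x)+\rad(G)\le 2\,\outecc_G(x)$; taking the maximum over $y$ gives $\outecc_H(x)\le 2\,\outecc_G(x)$, which is the $2$-eccentricity property. The radius bound is immediate: if $c$ is a vertex of minimum out-eccentricity in $G$, then $\rad(H)\le \outecc_H(c)\le 2\,\outecc_G(c)=2\,\rad(G)$, so $H$ is also a $2$-radius spanner.

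Given the lemmas above, the theorem is essentially an assembly step, and the only point that needs care is the probabilistic analysis underlying Algorithm~\ref{Algorithm:ecc-2}: each re-sampling test in step~6 must succeed with high probability. The argument mirrors Lemma~\ref{lemma:probability} --- for any fixed vertex of $B_{i+1}$, a uniformly random subset of $B_{i+1}$ of size $8n^{1/k}\log n$ misses that vertex's $n^{i/k}$ closest incoming vertices within $B_{i+1}$ with probability at most $(1-n^{i/k}/|B_{i+1}|)^{8n^{1/k}\log n}\le e^{-8\log n}$, which is $n^{-\Omega(1)}$, using $|B_{i+1}|=n^{(i+1)/k}$ --- but the subtlety is that one must union-bound over all $|B_{i+1}|\le n$ candidates \emph{before} exposing the data-dependent vertex $a_i$, so that the test in fact covers $v=a_i$ as well; then the expected number of re-samplings per round is $O(1)$ and the runtime claim follows. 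Were Algorithm~\ref{Algorithm:ecc-2} not already available, the genuine work would lie in the recursive thinning invariant of Lemma~\ref{lemma:ecc-new-1} --- that whenever $x\notin B_{i-1}$ the sampled set $A_{i-1}$ already reaches $x$ within $\outecc(x)$ steps --- which, propagated through the $k$ levels, is precisely what lets only $O(\log^2 n)$ centers achieve $\depth(\outbfs(S))\le \outecc(x)$ for all $x$.
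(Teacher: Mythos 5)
Your proposal is correct and follows essentially the same route as the paper: run Algorithm~\ref{Algorithm:ecc-2} with $k=\log_2 n$, invoke Lemma~\ref{lemma:ecc-new-1} and the size bound on $S$ to get $|S|=O(\log^2 n)$ and $\depth(\outbfs(S))\le\rad(G)$, take $H$ as the union of $\inbfs(s)$ and $\outbfs(s)$ over $s\in S$, and derive the $2$-eccentricity and $2$-radius bounds exactly as the paper does. Your added remark about union-bounding over all candidates of $B_{i+1}$ before exposing the data-dependent vertex $a_i$ is a correct (and slightly more careful) spelling-out of the $O(1)$ expected re-sampling claim that the paper states tersely.
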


\subsubsection*{Implication: An $\O(m)$ time 2-approximation of eccentricities.}
The set $S$ constructed above can also help us to obtain a $2$-approximation of graph eccentricities.
For any vertex $x\in V$, we approximate its out-eccentricity by $\outecc'(x)=\max_{s\in S}d_G(x,s)+\depth(\outbfs(S))$.
Observe $\outecc'(x)\leq \outecc_G(x)+\rad(G)\leq 2\outecc_G(x)$.
Now $\outecc'(x)\geq \outecc(x)$, because for any $y\in V$, 
if $s_y\in S$ is the vertex satisfying $d_G(s_y,y)=\depth(y,\outbfs(S))$, 
then $d_G(x,y)\leq d_G(x,s_y)+d_G(s_y,y)$ which is at most $\max_{s\in S}d_G(x,s)+\depth(\outbfs(S))=\outecc'(x)$.
Therefore, $\outecc'(x)$ is $2$-approximation of out-eccentricity of $x$. 
Observe that given the set $S$, in total $O(m\log^2n)$ time we can compute $\outecc'(x)$, for $x\in V$. We thus 
have the following theorem.

\begin{theorem}
For any directed weighted graph $G=(V,E)$ with $n$ vertices and $m$ edges, we can compute
an estimate $\outecc'(x)$, for $x\in V$, satisfying $\outecc_G(x)\leq \outecc'(x)\leq 2\outecc_G(x)$
 in $O(m\log^2 n)$ expected total time .
\label{theorem:static-ecc-estimate}
\end{theorem}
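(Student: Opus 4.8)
The plan is to reuse the vertex set $S$ built inside the construction behind Theorem~\ref{theorem:static-ecc-best}. Recall that running Algorithm~\ref{Algorithm:ecc-2} with $k=\log_2 n$ produces, in $O(m\log^2 n)$ expected time, a set $S\subseteq V$ with $|S|=O(\log^2 n)$ such that the multi-source tree $\outbfs(S)$ satisfies $\depth(\outbfs(S))\le\outecc_G(x)$ for \emph{every} $x\in V$; in particular $\depth(\outbfs(S))\le\rad(G)$. (For weighted $G$ one replaces every BFS by Dijkstra throughout Algorithm~\ref{Algorithm:ecc-2}, exactly as in the footnote to Theorem~\ref{theorem:static_diam2}, which preserves the guarantee at the cost of polylogarithmic factors.) Given $S$, I would define, for each $x\in V$,
\[
  \outecc'(x)\ :=\ \max_{s\in S} d_G(x,s)\ +\ \depth(\outbfs(S)),
\]
and then show that $\outecc'(x)$ is squeezed between $\outecc_G(x)$ and $2\,\outecc_G(x)$.

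The upper bound is immediate: each $s\in S$ is a vertex, so $d_G(x,s)\le\outecc_G(x)$, while $\depth(\outbfs(S))\le\rad(G)\le\outecc_G(x)$ since the radius is the minimum out-eccentricity over all vertices; summing gives $\outecc'(x)\le 2\,\outecc_G(x)$. For the lower bound I would fix an arbitrary $y\in V$ and pick $s_y\in S$ realizing $d_G(s_y,y)=\depth(y,\outbfs(S))$ (such $s_y$ exists by the definition of the multi-source shortest-path tree). The triangle inequality then yields
\[
  d_G(x,y)\ \le\ d_G(x,s_y)+d_G(s_y,y)\ \le\ \max_{s\in S}d_G(x,s)+\depth(\outbfs(S))\ =\ \outecc'(x);
\]
taking the maximum over $y$ gives $\outecc_G(x)\le\outecc'(x)$, as desired.

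For the running time, I would note that the dominant cost, $O(m\log^2 n)$ expected, is already paid when constructing $S$ via Theorem~\ref{theorem:static-ecc-best}, and that $\depth(\outbfs(S))$ falls out of that same computation as the depth of one multi-source tree. To obtain $\max_{s\in S}d_G(x,s)$ for all $x$ at once, I would run one incoming shortest-path search per $s\in S$ (a BFS, or Dijkstra in the weighted case, from $s$ on the reverse graph), each of which reports $d_G(x,s)$ for every $x$; over the $O(\log^2 n)$ sources this is $O(m\cdot\polylog n)$, after which a single coordinate-wise maximum followed by adding $\depth(\outbfs(S))$ yields all the estimates. Honestly there is no real obstacle once Theorem~\ref{theorem:static-ecc-best} is in hand; the one point I would make sure to state carefully is why the constructed $S$ forces $\depth(\outbfs(S))\le\rad(G)$ --- this is the level-by-level consequence of Lemma~\ref{lemma:ecc-new-1} --- and that this property, together with the BFS-to-Dijkstra replacement, carries over verbatim to directed weighted graphs.
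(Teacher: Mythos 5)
Your proposal is correct and follows essentially the same route as the paper: it reuses the set $S$ from the construction behind Theorem~\ref{theorem:static-ecc-best}, defines $\outecc'(x)=\max_{s\in S}d_G(x,s)+\depth(\outbfs(S))$, and proves the two-sided bound via $\depth(\outbfs(S))\le\rad(G)$ and the triangle inequality through the vertex $s_y$ realizing $\depth(y,\outbfs(S))$, with the same $O(m\,\mathrm{polylog}\,n)$ cost for the reverse searches from the $O(\log^2 n)$ sources. No gaps to report.
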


\subsection{Dynamic Maintenance of Eccentricity Spanner and Radius Spanner}

We now present our results on dynamic maintenance of eccentricity spanners.

First consider the incremental scenario.
For any vertex $w\in V$, let $q(w)$ denote the maximum integer such that 
$\inbfs(w)$ truncated to depth $q(w)$, i.e. $\inbfs(w,q(w))$, contains at most $\sqrt{n\log n}$
vertices. Observe that for any $w\in V$, we can incrementally maintaining $\inbfs(w)$, 
$\nin(w,\sqrt{n\log n})$, and $q(w)$ in a total of $O(mn)$ time, or $O(m\maxD)$ time when 
$\maxD$ is an upper bound on the diameter of $G$ throughout the run of algorithm.
Also we can dynamically maintain a set $S_{2,incr}(w)$ whose size is at most $\sqrt{n\log n}$ 
and contains $\inbfs(w,(1-\epsilon)q(w))$ as follows.
In the beginning, say at time $t_0$, we initialize $S_{2,incr}(w)=\inbfs_{t_0}(w,(1-\epsilon)q_{t_0}(w))$,
since $S_{2,incr}(w)\subseteq \inbfs_{t_0}(w,q_{t_0}(w))\subseteq \nin_{t_0}(w,\sqrt{n\log n})$,
we have $|S_{2,incr}(w)|\leq \sqrt{n\log n}$. Now we store in $\ell_0$ the value $(1-\epsilon)q_{t_0}(w)$
and keep adding all those vertices to $S_{2,incr}(w)$ whose depth in $\inbfs(w)$ reaches a value $\leq \ell_0$,
as long as $|S_{2,incr}(w)|\leq \sqrt{n\log n}$. When $|S_{2,incr}(w)|$ exceeds the value $\sqrt{n\log n}$,
then $q(w)$ must have fallen by a ratio of at least $(1-\epsilon)$, and we at that time recompute $\ell_0$
and $S_{2,incr}(w)$. Observe that between re-computations of $\ell_0$, the set $S_{2,incr}(w)$ only
grows with time. Now the total time for maintaining $S_{2,incr}(w)$ is $O(m\maxD)$;
and the number of times it is rebuild from scratch is at most $O(\epsilon^{-1}\log n)$.

Our incremental algorithm maintains a pair $(S_1,a)\in \P(V)\times V$ such that at any time instance $t$,
$\depth_t(a,\outbfs(S_1))\geq (1+\epsilon)^{-1}\depth_t(\outbfs(S_1))$.
Recall, we showed in construction of incremental dominating set-pair that
the total time for maintaining such a pair is $O(m\epsilon^{-1}\maxD\log^3n+ m\epsilon^{-2}\log^4 n)$, 
and the number of times the pair $(S_1,a)$ changes is at most $O(\epsilon^{-2}\log^4 n)$.
For a given pair $(S_1,a)$, we maintain the set $S_{2,incr}(a)$  as described above that takes $O(m\maxD)$ time.
So the total time for maintaining triplet $(S_1,a,S_{2,incr}(a))$ is $O(m\maxD\epsilon^{-2}\log^4 n)$,
and the number of times it is recomputed from scratch is $O(\epsilon^{-3}\log^5 n)$.
The incremental eccentricity spanner is just the subgraph which is union of $\inbfs(s)$ and $\outbfs(s)$
for $s\in S_1\cup S_{2,incr}(a)$. This maintenance takes $O(\epsilon^{-3}\log^5 n \cdot m\maxD\sqrt{n\log n})$
time. To prove the correctness consider any vertex $x\in V$.
Let $d$ be $\outecc_t(x)$, at some time $t$. If 
$\depth_t(\outbfs(S_1))\nleq (1+\epsilon)^2d$, then $\depth_t(a,\outbfs(S_1))\nleq (1+\epsilon)d$,
and so $\inbfs(a,(1+\epsilon)d)$ has empty intersection with $S_1$. Thus 
$\inbfs(a,(1+\epsilon)d)$ contains strictly less then $\sqrt{n\log n}$ vertices,
and so $q(a)\leq (1+\epsilon)d$.
This implies that either $\depth_t(\outbfs(S_1))\leq (1+\epsilon)^2d$, or $x\in  S_{2,incr}(a)$,
where $d=\outecc_t(x)$. On minimizing over $x$, and merging the sets $S_1\cup S_{2,incr}(a)$,
we get that at any time instance $\depth_t(\outbfs(S_1\cup S_{2,incr}(a)))\leq (1+\epsilon)^2\rad_t(G)$.
Using the same arguments as in static case, it can be shown that at any time $t$, for any $x,y\in V$
$d_{H,t}(x,y)\leq (2+3\epsilon)\outecc_t(x)$. On replacing $\epsilon$ with $\epsilon/3$,
we get the following theorem.

\begin{theorem}
For any $\epsilon\in[0,1/2]$ and any  incrementally changing graph on $n$ vertices, 
there exists an algorithm for maintaining a $(2+\epsilon)$-eccentricity spanner (and a $(2+\epsilon)$-radius spanner) 
containing at most $O(n^{3/2}\sqrt{\log n})$ edges,
whose expected amortized update time is $O((1/\epsilon^{3}) \sqrt{n}\maxD\log^{5.5} n)$,
where $\maxD$ denotes an upper bound on the maximum diameter of the graph throughout the run of the algorithm.
\label{theorem:eccentricty-spanner-incr}
\end{theorem}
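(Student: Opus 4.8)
The plan is to mirror the static construction from Theorem~\ref{theorem:static-ecc-best} in the incremental setting, replacing the one-shot computation of the set $S$ by a dynamically maintained substitute that relies on the incremental dominating-set-pair machinery already built in Theorem~\ref{theorem:dynamic-dom} (specifically the triplet $(S_1,S_2,a)$ with condition~(iii)). First I would recall that a $2$-eccentricity spanner is obtained as the union of $\inbfs(s),\outbfs(s)$ over a vertex set $S$ for which $\depth(\outbfs(S))\le\rad(G)$; so the entire task reduces to incrementally maintaining such a set $S$ of size $\widetilde O(\sqrt n)$. I would take $S=S_1\cup S_{2,incr}(a)$, where $(S_1,a)$ is maintained exactly as in the incremental dominating-set construction so that $\depth_t(a,\outbfs(S_1))\ge(1+\epsilon)^{-1}\depth_t(\outbfs(S_1))$ at all times, and $S_{2,incr}(a)$ is a lazily grown superset of $\inbfs(a,(1-\epsilon)q(a))$ of size at most $\sqrt{n\log n}$, with $q(w)$ the largest truncation depth keeping $|\inbfs(w,\cdot)|\le\sqrt{n\log n}$.

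Next I would verify the size bound $|S_{2,incr}(a)|\le\sqrt{n\log n}$ (it starts as a subset of $\nin_{t_0}(a,\sqrt{n\log n})$ and is rebuilt from scratch whenever it overflows, which forces $q(a)$ to have dropped by a $(1-\epsilon)$ factor, so at most $O(\epsilon^{-1}\log n)$ rebuilds per fixed $a$), and that each edge insertion only shrinks BFS depths so between rebuilds $S_{2,incr}(a)$ only grows — this is what makes the ES-tree accounting in Theorem~\ref{theorem:ES} give $O(m\maxD)$ total time per fixed $a$. Combining with the $O(\epsilon^{-2}\log^4 n)$ bound on how often $(S_1,a)$ changes, the triplet $(S_1,a,S_{2,incr}(a))$ is recomputed $O(\epsilon^{-3}\log^5 n)$ times, and maintaining the spanner itself — the union of $2|S|$ shortest-path trees, each costing $O(m\maxD)$ via ES-trees — costs $O(\epsilon^{-3}\log^5 n\cdot m\maxD\sqrt{n\log n})$ total, i.e.\ expected amortized update time $O(\epsilon^{-3}\sqrt n\,\maxD\log^{5.5} n)$, and $|S|=O(\sqrt{n\log n})$ gives at most $O(n^{3/2}\sqrt{\log n})$ spanner edges.

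For correctness I would argue, fixing a time $t$ and a vertex $x$ with $d=\outecc_t(x)$: if $\depth_t(\outbfs(S_1))\le(1+\epsilon)^2 d$ we are done; otherwise $\depth_t(a,\outbfs(S_1))>(1+\epsilon)d$, so $\inbfs_t(a,(1+\epsilon)d)$ misses $S_1$, hence has fewer than $\sqrt{n\log n}$ vertices, forcing $q_t(a)\ge(1+\epsilon)d$ and therefore $x\in\inbfs_t(a,(1+\epsilon)d)\subseteq S_{2,incr}(a)$ (here using that $x$, being at out-eccentricity $d$, lies within in-distance $d\le(1-\epsilon)q_t(a)$-ish of $a$ once $\epsilon$ is absorbed — this is the step needing care with the $(1\pm\epsilon)$ slack). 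Either way $\depth_t(\outbfs(S))\le(1+\epsilon)^2\,\outecc_t(x)$; minimizing over $x$ gives $\depth_t(\outbfs(S))\le(1+\epsilon)^2\rad_t(G)$, and then for any $x,y$ there is $s\in S$ with $d_G(s,y)\le(1+\epsilon)^2\rad_t(G)$, so $d_{H,t}(x,y)\le\outecc_t(x)+(1+\epsilon)^2\rad_t(G)\le(2+3\epsilon+O(\epsilon^2))\outecc_t(x)$; the radius-spanner claim follows by evaluating at the center $c$. Rescaling $\epsilon\mapsto\epsilon/3$ then yields the stated $(2+\epsilon)$ bound.

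The main obstacle I anticipate is not any single inequality but the bookkeeping that ties the three interleaved lazy data structures together: one must make sure that the ``rebuild $S_{2,incr}(a)$'' events, the ``$a$ changed'' events, and the ``$q(a)$ dropped'' events are charged consistently so that no ES-tree is ever reinitialized more often than the $\widetilde O(\epsilon^{-3})$ budget allows, and that the adversary's obliviousness to the random choices (needed for the $O(1)$-expected re-sampling in the dominating-set construction) is still in force here. The delicate slack-chasing point above — reconciling ``$x$ has small out-eccentricity'' with ``$x$ lies in the truncated in-BFS ball that $S_{2,incr}(a)$ captures'' across the $(1+\epsilon)$ versus $(1-\epsilon)$ thresholds used for $q$ — is the place where a sloppy argument would lose a constant factor in the stretch, so I would be careful to state it with explicit inequalities.
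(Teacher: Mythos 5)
Your proposal is essentially the paper's own proof: the same $q(w)$ truncation-threshold, the same lazily grown $S_{2,incr}(a)$ rebuilt when it overflows (forcing a $(1-\epsilon)$ drop in $q(a)$), the same $(S_1,a)$ pair from the incremental dominating-set machinery with the $O(\epsilon^{-2}\log^4 n)$ change bound, the same $O(\epsilon^{-3}\log^5 n\cdot m\maxD\sqrt{n\log n})$ accounting, and the same $(1+\epsilon)^2$-threshold correctness argument followed by rescaling $\epsilon\mapsto\epsilon/3$. The $(1\pm\epsilon)$ slack step you flag is glossed over in the paper as well (it even has a sign typo, writing $q(a)\le(1+\epsilon)d$ where $\ge$ is meant), and it only costs a constant factor in $\epsilon$ that the final rescaling absorbs, so your treatment is if anything more careful.
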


Let us now focus on decremental scenario. 
Consider a time instance $t_0$. Let $S_1$ be a uniformly random subset of $V$ of size $\sqrt{n\log n}$
that intersects $\nin_{\tau}(w,\sqrt{n\log n})$, for each $w\in V$, and each time instance $\tau$.
Let $\ell_0$ be depth of $\outbfs(S_1)$ at time $t_0$. Let $t\geq t_0$ be another time instance such that
$\depth_t(\outbfs(S_1))\leq (1+\epsilon)\ell_0$. Also let $S_2=\nin_{t_0}(a,\sqrt{n\log n})$,
where $a$ is a vertex of maximum depth in tree $\outbfs(S_1)$.
Similar to arguments in Theorem~\ref{theorem:static-ecc-best}, it can be shown that at time $t_0$,
for each $x\in V$, either $\depth_{t_0}(\outbfs(S_1))\leq \outecc_{t_0}(x)$, or $x\in S_2$.
So at time $t$, for each $x\in V$, either $\depth_{t}(\outbfs(S_1))\leq (1+\epsilon)\depth_{t_0}(\outbfs(S_1))
\leq (1+\epsilon)\outecc_{t_0}(x)\leq (1+\epsilon)\outecc_{t}(x)$ (here the last inequality holds
since distances can only increase with time), or $x\in S_2$. This in turn implies that, for each $x\in V$, either 
$\depth_{t}(\outbfs(S_1))\leq  (1+\epsilon)\outecc_{t}(x)$ or $x\in S_2$.
Therefore,  $\depth_{t}(\outbfs(S_1\cup S_2))\leq  (1+\epsilon)\min_{x\in V}\outecc_{t}(x)$.
This shows that $S_1\cup S_2$ is $\langle(1+\epsilon)\rad_{t}(G),V\rangle$-dominating at time $t$.
And so union of $\inbfs(s)$ and $\outbfs(s)$ for $s\in S_1\cup S_2$ is a $(2+\epsilon)$ eccentricity spanner,
because for any $x,y\in V$, there exists an $s\in S_1\cup S_2$ such that $d_{G,t}(s,y)\leq (1+\epsilon)\rad_{t}(G)$,
and so $d_{H,t}(x,y)\leq d_{H,t}(x,s)+d_{H,t}(s,y)=d_{G,t}(x,s)+d_{G,t}(s,y)\leq \outecc_{G,t}(x)+ (1+\epsilon)\rad_{t}(G)
\leq  (2+\epsilon)\outecc_{G,t}(x)$.
Thus to dynamically maintain a $(2+\epsilon)$ eccentricity spanner, we need to recompute
$a$ and $S_2$ each time the depth of $\outbfs(S_1)$ exceeds by a factor of $(1+\epsilon)$.
Also, if $S_1\cap S_2$ in non-empty at an time, then we re-sample $S_1$, and compute $a$ and $S_2$ again.
However, expected number of re-samplings is at most $O(1)$.
As in decremental maintenance of dominating-set-pair and diameter spanner, 
it can be shown that the expected time to maintain graph $H$ is $O(\epsilon^{-1} \sqrt{n\log n}\maxD\log n)$,
so we conclude with following theorem.

\begin{theorem}
For any $\epsilon\in[0,1/2]$ and any decrementally changing graph on $n$ vertices, 
there exists an algorithm for maintaining a $(2+\epsilon)$-eccentricity spanner (and a $(2+\epsilon)$-radius spanner) 
containing at most $O(n^{3/2}\sqrt{\log n})$ edges,
whose expected amortized update time is $O((1/\epsilon) \sqrt{n}\maxD\log^{1.5} n)$,
where $\maxD$ denotes an upper bound on the maximum diameter of the graph throughout the run of the algorithm.
\label{theorem:eccentricty-spanner-decr}
\end{theorem}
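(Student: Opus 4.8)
The plan is to transplant the static construction of Theorem~\ref{theorem:static-ecc-best} into the decremental setting, freezing the random set $S_1$ once and refreshing only the auxiliary vertex $a$ and the set $S_2$ lazily, in exactly the style of the decremental dominating-set-pair / $1.5$-diameter-spanner algorithms. First I would sample a uniformly random $S_1\subseteq V$ with $|S_1|=\sqrt{n\log n}$ and invoke Lemma~\ref{lemma:probability} in the strong form covering all $O(n^2)$ intermediate graphs, so that with high probability $S_1\cap\nin(w,\sqrt{n\log n})\neq\emptyset$ for every $w\in V$ and at every time step. I would maintain $\outbfs(S_1)$ by an ES-tree (Theorem~\ref{theorem:ES}), at total cost $O(m\maxD)$ per fixed $S_1$, and keep $H$ equal to the union of $\inbfs(s)$ and $\outbfs(s)$ over $s\in S_1\cup S_2$, each maintained by an ES-tree so that $H$ always remains a subgraph of the current $G$. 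The lazy bookkeeping: at a reset time $t_0$ I set $a$ to a deepest vertex of $\outbfs(S_1)$, $\ell_0\gets\depth_{t_0}(\outbfs(S_1))$, $S_2\gets\nin_{t_0}(a,\sqrt{n\log n})$, and rebuild $H$. A reset is triggered whenever $\depth(\outbfs(S_1))$ rises above $(1+\epsilon)\ell_0$; since deletions only increase distances, $\depth(\outbfs(S_1))$ is nondecreasing and confined to $[1,\maxD]$, so this occurs $O(\epsilon^{-1}\log\maxD)=O(\epsilon^{-1}\log n)$ times per fixed $S_1$. If ever $S_1\cap S_2=\emptyset$ I resample $S_1$ and restart the whole structure; by Lemma~\ref{lemma:probability} this has probability $n^{-\Omega(1)}$, so the expected number of resamplings is $O(1)$.

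For correctness I would re-run the static argument of Theorem~\ref{theorem:static-ecc-best} at the reset time $t_0$. For any $d<\ell_0$, since $a$ is at depth $\ell_0$ from $S_1$ in $\outbfs(S_1)$, the truncated tree $\inbfs_{t_0}(a,d)$ is disjoint from $S_1$; as $S_1$ meets $\nin_{t_0}(a,\sqrt{n\log n})$, this forces $|\inbfs_{t_0}(a,d)|<\sqrt{n\log n}$ and hence $\inbfs_{t_0}(a,d)\subseteq\nin_{t_0}(a,\sqrt{n\log n})=S_2$. Consequently, for every $x\in V$ either $\outecc_{t_0}(x)\geq\ell_0=\depth_{t_0}(\outbfs(S_1))$ or $x\in S_2$. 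Now at any later time $t$ with the reset invariant $\depth_t(\outbfs(S_1))\leq(1+\epsilon)\ell_0$ still in force: if $x\notin S_2$ then $\outecc_t(x)\geq\outecc_{t_0}(x)\geq\ell_0\geq\depth_t(\outbfs(S_1))/(1+\epsilon)$, and if $x\in S_2$ then $\depth_t(\outbfs(S_1\cup S_2))\leq\outecc_t(x)$; either way $\depth_t(\outbfs(S_1\cup S_2))\leq(1+\epsilon)\outecc_t(x)$ for every $x$, so $\depth_t(\outbfs(S_1\cup S_2))\leq(1+\epsilon)\rad_t(G)$, i.e. $S_1\cup S_2$ is $\langle(1+\epsilon)\rad_t(G),V\rangle$-dominating. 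The triangle-inequality argument from the static proof then gives, for all $x,y$, $d_{H,t}(x,y)\leq\outecc_{G,t}(x)+(1+\epsilon)\rad_t(G)\leq(2+\epsilon)\outecc_{G,t}(x)$, so $H$ is a $(2+\epsilon)$-eccentricity spanner at every time, and (applying this with $x$ a center of the current graph) also a $(2+\epsilon)$-radius spanner. The size bound is immediate: $|S_1\cup S_2|=O(\sqrt{n\log n})$ trees, each with fewer than $n$ edges, give $O(n^{3/2}\sqrt{\log n})$ edges.

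For the running time I would account: the ES-tree on $\outbfs(S_1)$ costs $O(m\maxD)$ for a fixed $S_1$; each of the $O(\epsilon^{-1}\log n)$ resets recomputes $a$ and $S_2=\nin(a,\sqrt{n\log n})$ in $O(m)$ time and (re)builds and maintains the $O(\sqrt{n\log n})$ ES-trees of $H$ over its phase at cost $O(m\maxD\sqrt{n\log n})$; with $O(1)$ expected resamplings of $S_1$, the total expected work is $O(\epsilon^{-1}m\,\maxD\,\sqrt{n}\log^{1.5}n)$, hence $O(\epsilon^{-1}\sqrt{n}\,\maxD\log^{1.5}n)$ amortized per deletion, matching the claimed bound.

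I expect the main obstacle to be the correctness bookkeeping across the deletion sequence: one must check that the domination certificate $\langle\depth_{t_0}(\outbfs(S_1)),V\rangle$ established at the reset time degrades by at most the budgeted factor $(1+\epsilon)$, which rests on two monotonicity facts under deletions — out-eccentricities and $\depth(\outbfs(S_1))$ are nondecreasing, and the ball $\inbfs(a,d)$ can only shrink so it stays inside the frozen $S_2$ — precisely the phenomena already exploited in Lemma~\ref{lemma:dynamic-dom-set-1}(i). Bounding the number of resets and verifying that the resampling failure probability remains inverse-polynomial across all $O(n^2)$ intermediate graphs are the remaining, routine, points.
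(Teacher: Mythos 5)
Your proof is correct and takes essentially the same approach as the paper: freeze $S_1$ (using the strong form of Lemma~\ref{lemma:probability}), lazily recompute $a$ and $S_2 = \nin_{t_0}(a,\sqrt{n\log n})$ whenever $\depth(\outbfs(S_1))$ grows by a $(1+\epsilon)$ factor, exploit monotonicity of out-eccentricities and of $\depth(\outbfs(S_1))$ under deletions to carry the $\langle(1+\epsilon)\rad_t(G),V\rangle$-domination certificate forward, and charge the $O(\epsilon^{-1}\log n)$ rebuilds of the $O(\sqrt{n\log n})$ ES-trees to the deletion sequence. You even correct a typo in the paper's exposition by stating the resampling trigger as $S_1\cap S_2=\emptyset$.
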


\subsubsection*{Implication: Dynamic maintenance of $2$-approximate eccentricities.}
The above discussed dynamic algorithm for $2$-eccentricity spanner also imply 
a same time bound algorithm for maintaining a $2$-approximation of vertex eccentricities, because if $S$ is 
$\langle\rad(G),V\rangle$-dominating-set then for any $x\in V$, $\max_{s\in S}d_G(x,s)+\depth(\outbfs(s))$ 
is a $2$-approximation of $\outecc(x)$. Since the total time for maintaining the values 
$\max_{s\in S}d_G(x,s)+\depth(\outbfs(s))$,
for $x\in V$, is $O(m|S|\maxD)$, we obtain the following result.

\begin{theorem}
For any $\epsilon\in[0,1/2]$, there exists an incremental (and decremental) algorithm that
maintains for an $n$-vertex directed graph a $(2+\epsilon)$-approximation of graph eccentricities.
The expected amortized update time is $O((1/\epsilon^{3}) \sqrt{n}\maxD\log^{5.5} n)$ 
for incremental setting and $O((1/\epsilon) \sqrt{n}\maxD\log^{1.5} n)$ for decremental setting,	
where, $\maxD$ denotes an upper bound on the diameter of the graph throughout the run of the algorithm.
\end{theorem}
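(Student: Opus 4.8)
The plan is to read off the approximation almost for free from the dynamic $(2+\epsilon)$-eccentricity spanner of Theorems~\ref{theorem:eccentricty-spanner-incr} (incremental) and~\ref{theorem:eccentricty-spanner-decr} (decremental). Recall that each of those algorithms maintains a set $S\subseteq V$ with $|S|=O(\sqrt{n\log n})$, together with the ES-trees $\inbfs(s)$ and $\outbfs(s)$ for every $s\in S$, and guarantees that at every time $t$ the set $S$ is $\langle (1+\epsilon)\rad_t(G),V\rangle$-dominating, i.e.\ $\depth_t(\outbfs(S))\leq (1+\epsilon)\rad_t(G)$ (after the $\epsilon\mapsto\epsilon/c$ rescaling already performed inside those proofs). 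Mimicking the static estimate of Theorem~\ref{theorem:static-ecc-estimate}, for each $x\in V$ I would define $\outecc'_t(x):=\max_{s\in S}d_{G,t}(x,s)+\depth_t(\outbfs(S))$ (the estimate for in-eccentricities is entirely symmetric, using the $\inbfs(s)$ trees that are already maintained). The upper bound $\outecc'_t(x)\leq \outecc_{G,t}(x)+(1+\epsilon)\rad_t(G)\leq (2+\epsilon)\outecc_{G,t}(x)$ is immediate since $\max_{s\in S}d_{G,t}(x,s)\leq \outecc_{G,t}(x)$ and $\rad_t(G)\leq \outecc_{G,t}(x)$; the matching lower bound $\outecc'_t(x)\geq \outecc_{G,t}(x)$ follows exactly as in the static argument, because for each $y\in V$ the vertex $s_y\in S$ realising $\depth_t(y,\outbfs(S))$ gives $d_{G,t}(x,y)\leq d_{G,t}(x,s_y)+d_{G,t}(s_y,y)\leq \outecc'_t(x)$.

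The remaining work is to keep the array $\outecc'_t(\cdot)$ current, and the only quantities it depends on, namely $d_{G,t}(x,s)$ (the depth of $x$ in $\inbfs(s)$) and $\depth_t(\outbfs(s))$, are already maintained by the spanner algorithm. The extra bookkeeping is: a per-vertex priority queue holding $\{d_{G,t}(x,s):s\in S\}$ so that $\max_{s}d_{G,t}(x,s)$ is available for every $x$, and a single priority queue over $\{\depth_t(\outbfs(s)):s\in S\}$. Between two consecutive rebuilds of $S$ all of these values change monotonically (only decreasing in the incremental setting, only increasing in the decremental one), so every depth change reported by an ES-tree triggers only $O(\log n)$ amortised queue work. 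Since an ES-tree truncated at depth $\maxD$ undergoes $O(n\maxD)$ level changes over its lifetime and there are $|S|=O(\sqrt{n\log n})$ such trees, this overhead is $\widetilde O(n\sqrt{n}\,\maxD)$ per fixed $S$, which is dominated by the $\widetilde O(m\sqrt{n}\,\maxD)$ the spanner algorithm already pays to maintain those trees (using $m\geq n-1$). Multiplying by the number of rebuilds of $S$ — $O(\epsilon^{-3}\log^5 n)$ in the incremental case and $O(\epsilon^{-1}\log n)$ in the decremental case, exactly as in Theorems~\ref{theorem:eccentricty-spanner-incr} and~\ref{theorem:eccentricty-spanner-decr} — and amortising over the updates, the total cost stays within a polylogarithmic factor of the eccentricity-spanner bounds, i.e.\ $O(\epsilon^{-3}\sqrt{n}\,\maxD\log^{5.5}n)$ and $O(\epsilon^{-1}\sqrt{n}\,\maxD\log^{1.5}n)$ respectively.

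The one point that requires a moment's care is maintaining $\max_{s\in S}d_{G,t}(x,s)$ simultaneously for \emph{all} $x$: an ES-tree delivers single-source distances from one root, so I would keep one tree $\inbfs(s)$ for each $s\in S$ and route each depth change it reports into the priority queue of the affected vertex $x$. Thanks to the monotonicity of distances within a phase, this is a sequence of decrease-key operations (incremental) or of "this value increased, rescan if it was the current maximum" operations (decremental), whose amortised cost is charged to the ES-tree level changes already accounted for above. Everything else is a direct rephrasing of the eccentricity-spanner analysis, which is precisely why the claimed update times coincide with those of Theorems~\ref{theorem:eccentricty-spanner-incr} and~\ref{theorem:eccentricty-spanner-decr}.
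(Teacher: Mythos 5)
Your proposal is correct and follows essentially the same route as the paper: it reuses the dominating set $S$ maintained by the dynamic $(2+\epsilon)$-eccentricity-spanner algorithm and maintains the estimate $\max_{s\in S}d_G(x,s)+\depth(\outbfs(S))$ via the already-maintained ES-trees, exactly as the paper does, with the approximation argument lifted from the static Theorem~\ref{theorem:static-ecc-estimate}. Your write-up is in fact more explicit than the paper's (the per-vertex maximum bookkeeping and the charging to ES-tree level changes are only implicit in the paper's $O(m|S|\maxD)$ claim), but there is no substantive difference in approach.
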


\section{Fault-Tolerant: Diameter, Diameter Spanners, and Eccentricity Spanners}
\label{section:ft}

In order to compute fault-tolerant data-structures, our first step is to compute a set $S_1$
of size $\sqrt{8n\log n}$ that has non-empty intersection with $\nin_{G\setminus x}(w,\sqrt{8n\log n}))$,
for each vertex $w\in V$, and each possible failure $x\in V\cup E$. A trivial way to even verify whether $S_1$ satisfies 
the aforesaid condition requires $O(mn^{2})$ time, since we have $n$ choices for vertex $w$, 
$n$ choices for failures in trees $\inbfs(w)$/$\outbfs(w)$, and finally computing the trees 
$\inbfs_{G\setminus x}(w)$/$\outbfs_{G\setminus x}(w)$ requires $O(m)$ time.

We first show a randomized computation of $S_1$ that takes $\O(\max\{n^{2.5},mn\})$
 time. Throughout this section 
let $r$ denote the value $\sqrt{8n\log n}$. Also let $\cal O$ denote the distance-sensitivity-oracle for 
directed graphs~\cite{DemetrescuTCR:08,BernsteinK:09} that given any $u,v\in V$ 
and $x\in V\cup E$ can output the last edge on $\pi_{G\setminus x}(u,v)$ in constant time.
This data structure can be computed in $\O(mn)$ time and takes $O(n^2 \log n)$ space.
We initialize $S_1$ to be a uniformly random subset of $V$ of size $r$.
For each $w\in V$, we compute $\inbfs(w)$ and check if $S_1$ intersects $\nin(w,r)$, if it doesn't even for a single 
vertex $w$, then we re-sample $S_1$. Next for each possible vertex failure $x\in \nin(w,r)$ (or edge failure $x\in \inbfs(w)$ with both 
end-points in $\nin(w,r)$), we compute the tree $\inbfs_{G\setminus x}(w)$.
Observe that $x$ has at most $O(r)=O(\sqrt{n})$ relevant choices, as for any other remaining option from $E\cup V$, 
the set $\nin(w,r)$ remains unaltered. Also computation of tree $\inbfs_{G\setminus x}(w)$
can be performed  in $O(n)$ time using $\cal O$. Once we have tree $\inbfs_{G\setminus x}(w)$,
we check again if $S_1$ intersects $\nin_{G\setminus x}(w,r)$, if it doesn't then we re-sample $S_1$. 
The expected number of re-samplings to compute the desired $S_1$ is $O(1)$.
Thus, the total expected time to compute $S_1$ is $\O(\max\{n^{2.5},mn\})$.

The following theorem shows construction of diameter spanner oracle that after any edge or vertex failure 
reports  a $1.5$-diameter spanner, containing at most $\O(n^{1.5})$ edges in $\O(n^{1.5})$ time.

\begin{theorem}
Any $n$-vertex directed graph $G=(V,E)$, can be preprocessed in $\O(\max\{n^{2.5},mn\})$ expected time to obtain an 
$\O(\max\{n^{2.5},mn\})$ size data structure $\cal D$ that after any any edge or vertex failure $x$, reports 
a $1.5$-diameter spanner of graph $G\setminus x$ containing at most $O(n\sqrt{n\log n})$ edges in $O(n\sqrt{n\log n})$ time. 

Moreover, given any edge $e$ and any failure $x$, the data-structure can answer the query of whether or not $e$ lies in 
$1.5$-diameter spanner of graph $G\setminus x$ in $O(\sqrt{n\log n})$ time. 
\label{theorem:ft-diam-spanner}
\end{theorem}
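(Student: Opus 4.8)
The plan is to lift the static construction behind Theorem~\ref{theorem:static_diam1} to the fault-tolerant setting, keeping as the only substantial preprocessed objects the robust sample $S_1$ built above and a distance-sensitivity oracle $\cal O$. Thus the preprocessing computes $S_1$ (of size $r=\sqrt{8n\log n}$, intersecting $\nin_{G\setminus x}(w,r)$ for every $w\in V$ and every failure $x\in V\cup E$, in $\O(\max\{n^{2.5},mn\})$ expected time as shown above) together with $\cal O$ (in $\O(mn)$ time and $O(n^{2}\log n)$ space, built also on the reverse graph), which returns $d_{G\setminus x}(u,v)$ and the last edge of $\pi_{G\setminus x}(u,v)$ in $O(1)$ time for any $u,v\in V$ and $x\in V\cup E$. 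After perturbing the edge weights so that shortest paths are unique, one call of $\cal O$ per vertex reconstructs any tree $\inbfs_{G\setminus x}(s)$ or $\outbfs_{G\setminus x}(s)$ in $O(n)$ time, exactly as in the computation of $S_1$.

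To report a spanner after a failure $x$, we re-run the static algorithm inside $G\setminus x$ using only $\cal O$: querying $d_{G\setminus x}(s,v)$ over all $(s,v)\in S_1\times V$ yields $\depth(\outbfs_{G\setminus x}(S_1))$ and a deepest vertex $a_x$ (cost $O(rn)$); querying $d_{G\setminus x}(v,a_x)$ over all $v$ and selecting the $r$ closest yields $S_{2,x}:=\nin_{G\setminus x}(a_x,r)$ (cost $O(n)$); reconstructing $\inbfs_{G\setminus x}(s)$ and $\outbfs_{G\setminus x}(s)$ for all $s\in S_1\cup S_{2,x}$ via $\cal O$ (cost $O(rn)$) and returning their union $H_x$ completes the procedure. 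Correctness is the argument of Theorem~\ref{theorem:static_diam1} applied verbatim to $G\setminus x$: robustness of $S_1$ gives $S_1\cap\nin_{G\setminus x}(a_x,r)\neq\emptyset$, so $(S_1,S_{2,x})$ is $\langle\lfloor D_x/2\rfloor,\lceil D_x/2\rceil\rangle$-dominating for $D_x=\dia(G\setminus x)$, whence $\dia(H_x)\le\lceil 1.5\,D_x\rceil$; also $|E(H_x)|=O(rn)=O(n\sqrt{n\log n})$ and the reporting time is $O(rn)=O(n\sqrt{n\log n})$.

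For the edge-membership query we cannot afford to recompute $a_x,S_{2,x}$ per query, so we precompute them for the failures that actually change them. Since the perturbation makes $\outbfs(S_1)$ the union of all shortest paths out of $S_1$, if $x$ is neither a vertex nor an edge of $\outbfs(S_1)$ and $x$ is not ``relevant'' to $\nin_G(a,r)$ in the sense of the paragraph preceding this theorem, then all distances $d(S_1,\cdot)$ and the set $\nin_G(a,r)$ are unchanged by the failure, so one may take $a_x=a$ and $S_{2,x}=S_2$. There are $O(n)$ vertices and edges of $\outbfs(S_1)$ and $O(r)$ further relevant failures; for each of these $O(n)$ failures we precompute $(a_x,S_{2,x})$ via $\cal O$ in $O(rn)$ time, adding $O(rn^2)=\O(n^{2.5})$ to the preprocessing and $O(nr)=\O(n^{1.5})$ to the space, and for every other $x$ we store the single pair $(a,S_2)$. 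A query for $e=(u,v)$ under failure $x$ then looks up $S_{2,x}$ in $O(1)$ time and, for each $s\in S_1\cup S_{2,x}$, tests with one call of $\cal O$ whether $d_{G\setminus x}(s,v)=d_{G\setminus x}(s,u)+w(u,v)$ (equivalently, $e\in\outbfs_{G\setminus x}(s)$) or the symmetric identity for $\inbfs_{G\setminus x}(s)$; uniqueness of shortest paths makes each test exact. This costs $O(|S_1\cup S_{2,x}|)=O(\sqrt{n\log n})$, and the total time and space remain $\O(\max\{n^{2.5},mn\})$.

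The step I expect to be the main obstacle is the query part --- pinning down exactly which failures can force $a_x$ or $S_{2,x}$ away from their fault-free values $a,S_2$, and checking that there are few enough of them ($O(n)$, each affordable at $O(rn)$ preprocessing) for the per-query cost to stay at $O(\sqrt{n\log n})$ within the stated preprocessing budget. The reporting procedure, its correctness, and the space bookkeeping are then straightforward adaptations of Theorem~\ref{theorem:static_diam1} and of the construction of $S_1$ preceding the statement.
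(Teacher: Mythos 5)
Your proposal is correct and follows essentially the same route as the paper: the robust set $S_1$ plus a distance-sensitivity oracle, precomputed pairs $(a_x,S_{2,x})$ for the $O(n)$ failures lying in $\outbfs(S_1)$ or relevant to $\nin(a,r)$ (with $(a,S_2)$ reused otherwise), trees rebuilt through $\cal O$ for reporting, and per-edge membership decided by $O(|S_1\cup S_{2,x}|)$ constant-time oracle checks under perturbed weights. The only (immaterial) deviation is that you recompute $a_x,S_{2,x}$ on the fly inside the $O(n\sqrt{n\log n})$ reporting budget, whereas the paper stores them during preprocessing and uses them for both operations.
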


\begin{proof}
We compute the set $S_1$ stated in beginning of the section, tree $\outbfs(S_1)$, and a vertex $a$
having maximum depth in $\outbfs(S_1)$. 
For each edge or vertex $x$ lying in $\outbfs(S_1)$ we compute and store
(i)~the vertex $a_x$ of maximum depth in $\outbfs_{G\setminus x}(S_1)$, (ii) the set 
$\nin_{G\setminus x}(a_x,r)$.
Also for each vertex failure $x\in \nin(a,r)$ (or edge failure $x\in \inbfs(a)$ with both 
end-points in $\nin(a,r)$), we compute and store $\nin_{G\setminus x}(a,r)$.
This takes $O(nr+r^2)=O(n\sqrt{n\log n})$ space.
Next, we compute the $O(n^2 \log n)$ spaced distance sensitivity oracle $\cal O$ from \cite{DemetrescuTCR:08,BernsteinK:09}.
We assume that the edge weights in $G$ are slightly perturbed so that all distances in $G$ are distinct even after an edge/vertex 
failure. Therefore, 
(i) for any $w\in V$ and $x\in V\cup E$, in linear time oracle $\cal O$ 
can output  $\inbfs_{G\setminus x}(w)$ and $\outbfs_{G\setminus x}(w)$;
(ii) given any $w\in V$, $x\in V\cup E$, and $e\in E$, in constant time $\cal O$ 
can output  whether or not $e$ lies in $\inbfs_{G\setminus x}(w)$ and $\outbfs_{G\setminus x}(w)$.
Observe that the total pre-processing time is $\O(\max\{n^{2.5},mn\})$. 

We now explain the query process. Given a failing edge/vertex $x$, we first extract a vertex $a_0$ having maximum depth
in  $\outbfs_{G\setminus x}(S_1)$ and a set $S_2$ consisting of vertices $\nin_{G\setminus x}(a_0,r)$.
Extracting this information from $\cal D$ takes $O(r)=O(\sqrt{n\log n})$ time.
To output a $1.5$-diameter spanner we just output union of $\inbfs(s)$ and $\outbfs(s)$ for $s\in S_1\cup S_2$,
recall that these trees are computable from $\cal O$ in linear time.
Using the same arguments as in Theorem~\ref{theorem:static_diam1}, it can be shown the 
outputted graph will be a $1.5$-diameter spanner.
This takes $O(nr)=O(n\sqrt{n\log n})$ time. To verify whether or not a given edge $e$ lies in 
the $1.5$-diameter spanner, we iterate over each $s\in S_1\cup S_2$, and check whether or not $e$ lies in
$\inbfs(s)$/$\outbfs(s)$. This takes $O(\sqrt{n\log n})$ time, for any edge $e$.
Also observe that, using the same arguments as in Theorem~\ref{theorem:dynamic-diam-1}, it can be shown that 
the value $1.5(\max_{s\in (S_1\cup S_2)} \max\{\inecc(s),\outecc(s)\})$ is a 1.5-approximation of the diameter
of graph $G\setminus x$.  
\end{proof}

Next we present our diameter-sensitivity-oracle. Observe that a trivial diameter-sensitivity-oracle 
would be to compute a $(D/2,\lfloor D/2\rfloor)$ dominating set-pair $(S_1,S_2)$ of size bound $\langle r,r\rangle$, 
and a 1.5-diameter spanner $H$ which is union of $\inbfs(s)$ and $\outbfs(s)$, for $s\in S_1\cup S_2$. 
If a failure $x$ is not in $H$, then $1.5(\max_{s\in (S_1\cup S_2)} \max\{\inecc(s),\outecc(s)\})$ would still be 
a 1.5-diameter approximation of graph $G\setminus x$. If a failure $x$ lies in $H$, then it has at most $O(n\sqrt{n\log n})$
choices, and for each of possible choice we can compute and store again a 1.5-diameter-approximation in
$O(m\sqrt{n\log n})$ time. Thus total time for this procedure is $O(mn^2 \log n)$.
Now from Theorem~\ref{theorem:ft-diam-spanner}, $G$ can can be preprocessed in $\O(\max\{n^{2.5},mn\})$ 
expected time to compute a data-structure $\cal D$ that given any edge or vertex failure $x\in H$, 
computes in $O(n\sqrt{n\log n})$ time 
a 1.5-diameter spanner of $G\setminus x$. Moreover, we also showed that in the same time it can compute a 1.5-approximation 
of the diameter of graph $G\setminus x$. Since there are $O(n\sqrt{n \log n})$ choices for $x$, 
and for each such choice it takes $O(n\sqrt{n\log n})$ time to compute a 1.5-diameter-approximation,
the total time of this process is $\O(n^3)$. We thus conclude with following theorem.

\begin{theorem}
Any $n$-vertex directed graph $G=(V,E)$, can be preprocessed in $\O(n^3)$ expected time to 
obtain an $O(n\sqrt{n\log n})$ size data structure $\cal D$ that after any any edge or vertex failure $x$, reports 
a $1.5$-approximation of diameter of graph $G\setminus x$ in constant time.
\label{theorem:ft-diam}
\end{theorem}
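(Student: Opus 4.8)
The plan is to bootstrap the diameter‑sensitivity oracle out of the $1.5$‑diameter‑spanner oracle of Theorem~\ref{theorem:ft-diam-spanner}. Fix the base spanner $H=\bigcup_{s\in S_1\cup S_2}\bigl(\inbfs(s)\cup\outbfs(s)\bigr)$ built from a $\langle\lfloor D/2\rfloor,\lceil D/2\rceil\rangle$‑dominating‑set‑pair $(S_1,S_2)$ of size bound $\langle r,r\rangle$ for the no‑failure graph, where $r=\sqrt{8n\log n}$ and $D=\dia(G)$, and let $F\subseteq V\cup E$ be the set of failures that touch $H$, i.e., the edges of $H$ together with their endpoints. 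Since $H$ has $O(n\sqrt{n\log n})$ edges, $|F|=O(n\sqrt{n\log n})$. The oracle answers a query on a failure $x$ by splitting into the cases $x\notin F$ and $x\in F$.

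For the generic case $x\notin F$, I would precompute once the single scalar $\Delta=\lceil 1.5\,\mathrm{val}\rceil$ with $\mathrm{val}=\max_{s\in S_1\cup S_2}\max\{\inecc_G(s),\outecc_G(s)\}$, and simply return it. Correctness needs one observation: if $x\notin F$, then $H$ is still a subgraph of $G\setminus x$, so for every $s\in S_1\cup S_2$ the trees $\outbfs(s)$ and $\inbfs(s)$ remain shortest‑path trees in $G\setminus x$; hence the in‑ and out‑eccentricities of these roots are unchanged, and rerunning the static argument of Theorem~\ref{theorem:static_diam1} inside $G\setminus x$ shows that $\mathrm{val}$ lies between $\dia(G\setminus x)/2$ and $\dia(G)\le\dia(G\setminus x)$, whence $\Delta$ is a $1.5$‑approximation of $\dia(G\setminus x)$.

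For $x\in F$, I would enumerate all such $x$ and, for each, invoke the query procedure of the data structure $\mathcal{D}$ of Theorem~\ref{theorem:ft-diam-spanner} — which internally uses the distance‑sensitivity oracle $\mathcal{O}$ — obtaining in $O(n\sqrt{n\log n})$ time a $1.5$‑approximation $\Delta_x$ of $\dia(G\setminus x)$, and store $\Delta_x$ in a hash table keyed by $x$. Preprocessing then costs $\O(\max\{n^{2.5},mn\})$ to build $\mathcal{D}$ and $\mathcal{O}$, plus $|F|=O(n\sqrt{n\log n})$ query calls at $O(n\sqrt{n\log n})$ each, i.e., $O(n^3\log n)=\O(n^3)$; since $m\le n^2$, the whole preprocessing runs in $\O(n^3)$ expected time. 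After preprocessing, $\mathcal{D}$, $\mathcal{O}$ and all auxiliary trees are discarded, and only a hash set encoding $F$, the scalar $\Delta$, and the table $\{\Delta_x:x\in F\}$ are retained, of total size $O(n\sqrt{n\log n})$. A query on a failure $x$ tests $x\in F$ in $O(1)$ time and returns $\Delta_x$ if $x\in F$, and $\Delta$ otherwise, in constant time.

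The main obstacle I expect is not a deep one but the correctness bookkeeping for the case $x\notin F$: one must check that deleting a vertex or edge outside $H$ leaves every $\inbfs(s)$ and $\outbfs(s)$ with $s\in S_1\cup S_2$ intact \emph{and} that the $\langle\lfloor D/2\rfloor,\lceil D/2\rceil\rangle$‑dominating inequalities carry over verbatim to $G\setminus x$, so that the single precomputed number $\Delta$ is simultaneously a valid over‑ and under‑estimate of $\dia(G\setminus x)$ up to the factor $1.5$. Everything else is a routine summation over the $O(n\sqrt{n\log n})$ failures in $F$ together with an appeal to the already‑established properties of $\mathcal{D}$, $\mathcal{O}$, and Theorem~\ref{theorem:ft-diam-spanner}.
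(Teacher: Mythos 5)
Your proposal is correct and follows essentially the same route as the paper: precompute the base spanner $H$, store a single answer $\Delta$ for all failures missing $H$, precompute and table answers for the $O(n\sqrt{n\log n})$ failures that touch $H$ via Theorem~\ref{theorem:ft-diam-spanner}, and answer by an $O(1)$ membership test plus a lookup. Your writeup is somewhat more careful than the paper's (explicitly discarding $\mathcal{D}$, $\mathcal{O}$ to get the $O(n\sqrt{n\log n})$ space bound, and noting that $H\subseteq G\setminus x$ preserves root-eccentricities for $x\notin F$), but the underlying construction and running-time analysis are the same.
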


The data-structure for fault-tolerant-eccentricity spanner is exactly similar to diameter spanner data structure 
from Theorem~\ref{theorem:ft-diam-spanner}. The proof of correctness follows from the fact that in 
Theorem~\ref{theorem:ft-diam-spanner}, we essentially after any failure $x$, first compute a valid-set-pair 
$(S_1,S_2)$ for $G\setminus x$, and next output union of shortest-path-trees in $G\setminus x$ rooted at 
vertices in $S_1\cup S_2$. Using the arguments in Theorem~, it can be shown that 
$S_1\cup S_2$ is $\langle \rad(G\setminus x),V\rangle$-dominating, and therefore, the outputted graph is also a 
$2$-eccentricity spanner.

\begin{theorem}
Any $n$-vertex directed graph $G=(V,E)$, can be preprocessed in $\O(\max\{n^{2.5},mn\})$ expected time to obtain an 
$\O(\max\{n^{2.5},mn\})$ size data structure $\cal D$ that after any any edge or vertex failure $x$, reports 
a $2$-eccentricity spanner of graph $G\setminus x$ containing at most $O(n\sqrt{n\log n})$ edges in $O(n\sqrt{n\log n})$ time. 

Moreover, given any edge $e$ and any failure $x$, the data-structure can answer the query of whether or not $e$ lies in 
$2$-eccentricity spanner of graph $G\setminus x$ in $O(\sqrt{n\log n})$ time. 
\label{theorem:ft-ecc-spanner}
\end{theorem}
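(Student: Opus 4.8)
The plan is to mirror the construction of the fault-tolerant $1.5$-diameter spanner oracle of Theorem~\ref{theorem:ft-diam-spanner}, replacing its diameter analysis by the radius/eccentricity analysis of the static $2$-eccentricity spanner (Theorem~\ref{theorem:static-ecc-best}). First I would reuse the randomized preprocessing described at the start of this section: compute a set $S_1\subseteq V$ of size $r=\sqrt{8n\log n}$ that has nonempty intersection with $\nin_{G\setminus x}(w,r)$ for every vertex $w\in V$ and every failure $x\in V\cup E$, in $\O(\max\{n^{2.5},mn\})$ expected time. Exactly as in Theorem~\ref{theorem:ft-diam-spanner}, I would then build the distance-sensitivity oracle $\mathcal{O}$ of~\cite{DemetrescuTCR:08,BernsteinK:09} in $\O(mn)$ time and $O(n^2\log n)$ space, compute $\outbfs(S_1)$ and a maximum-depth vertex $a$, and for each edge/vertex $x$ lying in $\outbfs(S_1)$ precompute and store the vertex $a_x$ of maximum depth in $\outbfs_{G\setminus x}(S_1)$ together with the set $\nin_{G\setminus x}(a_x,r)$; also for each relevant failure $x\in\nin(a,r)$ (or edge of $\inbfs(a)$ with both endpoints in $\nin(a,r)$), the set $\nin_{G\setminus x}(a,r)$. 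This auxiliary data occupies $O(nr+r^2)=O(n\sqrt{n\log n})$ space, so the total preprocessing time and space stay $\O(\max\{n^{2.5},mn\})$.

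On a query with failing edge/vertex $x$, I would extract from $\D$ a vertex $a_0$ of maximum depth in $\outbfs_{G\setminus x}(S_1)$ and the set $S_2=\nin_{G\setminus x}(a_0,r)$ in $O(r)$ time, and output the union of $\inbfs_{G\setminus x}(s)$ and $\outbfs_{G\setminus x}(s)$ over $s\in S_1\cup S_2$, each tree being computable from $\mathcal{O}$ in $O(n)$ time, for a total of $O(nr)=O(n\sqrt{n\log n})$ time and $O(n\sqrt{n\log n})$ edges. The crucial correctness claim is that $S_1\cup S_2$ is $\langle\rad(G\setminus x),V\rangle$-dominating: by the preprocessing guarantee, $S_1$ intersects $\nin_{G\setminus x}(a_0,r)$, so for any $y\in V$, if $\depth(\outbfs_{G\setminus x}(S_1))=\depth(a_0,\outbfs_{G\setminus x}(S_1))$ exceeds $\outecc_{G\setminus x}(y)$, then $\inbfs_{G\setminus x}\bigl(a_0,\outecc_{G\setminus x}(y)\bigr)$ misses $S_1$ and is therefore a strict subset of $\nin_{G\setminus x}(a_0,r)=S_2$; since $d_{G\setminus x}(y,a_0)\le\outecc_{G\setminus x}(y)$, this forces $y\in S_2$. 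Hence $\depth(\outbfs_{G\setminus x}(S_1\cup S_2))\le\min_{y}\outecc_{G\setminus x}(y)=\rad(G\setminus x)$, and the triangle-inequality argument of Theorem~\ref{theorem:static-ecc-best} gives, for all $x',y'\in V$, that the reported subgraph $H$ satisfies $d_H(x',y')\le\outecc_{G\setminus x}(x')+\rad(G\setminus x)\le 2\,\outecc_{G\setminus x}(x')$, so $H$ is a $2$-eccentricity spanner of $G\setminus x$. For the membership query I would iterate over $s\in S_1\cup S_2$ and use $\mathcal{O}$ to test in $O(1)$ time whether the queried edge $e$ lies in $\inbfs_{G\setminus x}(s)$ or $\outbfs_{G\setminus x}(s)$, for $O(\sqrt{n\log n})$ total time, using the standard perturbation so that all distances in $G$ remain distinct after a single failure.

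The step I expect to require the most care is ensuring that the static dominating-set argument survives a single failure: one must confirm that the queried $a_0$ is indeed a maximum-depth vertex of $\outbfs_{G\setminus x}(S_1)$ (so that $\depth(\outbfs_{G\setminus x}(S_1))=\depth(a_0,\outbfs_{G\setminus x}(S_1))$ holds in $G\setminus x$), and that the intersection guarantee for $S_1$ is quantified over all vertices and all failures, so that it applies to this particular $a_0$; everything else is a direct reuse of the bookkeeping, time bounds, and space bounds already established in Theorem~\ref{theorem:ft-diam-spanner}.
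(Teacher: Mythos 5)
Your proposal is correct and follows essentially the same route as the paper: reuse the fault-tolerant data structure of Theorem~\ref{theorem:ft-diam-spanner} verbatim, then replace the diameter analysis by verifying (as in the static $2$-eccentricity spanner argument) that the extracted pair $(S_1,S_2)$ is $\langle\rad(G\setminus x),V\rangle$-dominating, from which the $2$-eccentricity bound follows by the triangle inequality. The paper states this only in outline; your write-up fills in the dominating-set details consistently with Lemma~\ref{lemma:ecc-new-1} and Theorem~\ref{theorem:static-ecc-best}.
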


\section{Lower Bounds for Diameter Spanners and Eccentricity Spanners}
\label{section:lower-bounds}

In this section, we prove lower bounds on the number of edges in diameter spanners and eccentricity spanners.
In particular, we will prove that our constructions for $1.5$-diameter spanner, $5/3$-diameter spanner, and
$2$-eccentricity spanner are tight, for graphs with low diameter.

\subsection{Lower bound for $1.5$-Diameter Spanner}

\begin{theorem}
For every $n$ and every $t$, there exists an $n$-vertex directed graph $G=(V,E)$ with diameter $2(t+1)$ 
such that any subgraph $H=(V,E'\subseteq E)$ of $G$ with 
$\dia(H)\leq 3t+1$ contains $\Omega(n^2/t^2)$ edges.
\end{theorem}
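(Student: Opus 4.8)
First I would set up the reduction that makes the whole statement follow from two clean properties of a single constructed graph. Concretely: I will build an $n$-vertex directed graph $G=(V,E)$ and designate a set $F\subseteq E$ with $|F|=\Omega(n^2/t^2)$ so that \textbf{(P1)} $\dia(G)=2(t+1)$ and \textbf{(P2)} $\dia(G-e)\ge 3t+2$ for every $e\in F$. Then the theorem is immediate: if $H=(V,E'\subseteq E)$ has $\dia(H)\le 3t+1$, then for each $e\in F$ we must have $e\in E'$, since otherwise $H$ is a subgraph of $G-e$ and hence $\dia(H)\ge\dia(G-e)\ge 3t+2>3t+1$; therefore $|E'|\ge|F|=\Omega(n^2/t^2)$. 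So the real work is entirely in the construction and in verifying (P1)--(P2).

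For the construction I would take $k=\Theta(n/t)$ and two ``hub'' sets $A=\{a_1,\dots,a_k\}$, $B=\{b_1,\dots,b_k\}$; the set $F$ will be (a constant fraction of) the edges directed from $A$ to $B$, giving $|F|=\Theta(k^2)=\Theta(n^2/t^2)$. The remaining $\Theta(n)$ vertices form a \emph{return gadget} $R$ --- a bounded number of internally-disjoint directed paths of length $\Theta(t)$ --- wired to $A$ and $B$ so that $G$ is strongly connected, the $B\!\to\!A$ distance is $\approx 2t$, and every vertex of $R$ is within distance $2(t+1)$ both to and from every hub. To prove (P1) I would exhibit, for each ordered pair of vertices, a walk of length $\le 2t+2$; the extremal pairs are the hub pairs $(a_i,a_{i'})$ and $(b_j,b_{j'})$, which realize distance exactly $1+(\text{return length})=2t+2$, and I would have to check that no internal vertex of $R$ is ``reachable only from behind'' at distance $>2t+2$. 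To prove (P2) I would fix $e=(a_i,b_j)\in F$ and argue $d_{G-e}(a_i,b_j)\ge 3t+2$ by showing every $a_i\rightsquigarrow b_j$ walk other than $e$ is long: it must leave $a_i$, traverse the return gadget once to get ``behind'' $B$, and pass through a $\Theta(t)$-length piece of $G$ a second time before it can reach $b_j$.

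The hard part will be the design of $R$ and its interface with $A,B$, together with (P2). There is a genuine tension: (P1) forces the $B\!\to\!A$ return to be \emph{short} (length $\le 2t+1$, or else the hub pairs are already more than $2t+2$ apart), whereas forcing an edge of $F$ requires every detour around it to be \emph{long}, of length $\ge 3t+2\approx\tfrac32\,\dia(G)$. A naive wiring in which $B$ feeds into $R$ and $R$ feeds into $A$ via complete bipartite connections does \emph{not} work: after deleting $(a_i,b_j)$ one can reroute $a_i\to b_{j'}\to(\text{through }R)\to a_{i'}\to b_j$ at total cost only $\approx 2t+3$, which is $\le 3t+1$ once $t\ge 2$, so the edge is not forced. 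Getting around this should require the interfaces to be built from \emph{sparse, high directed-girth incidence relations}, so that the one particular ``return to $A$'' needed to close a short detour around $e$ is simply not available and the walk is driven roughly $\tfrac32$ times around the $\Theta(t)$-length core (total $\ge 3t+2$), while the hubs themselves stay densely enough connected that $|F|=\Theta(n^2/t^2)$. Reconciling these two demands, and then carefully computing all pairwise distances in $G$ and in each $G-e$, is the crux.
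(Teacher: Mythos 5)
Your reduction skeleton (build one graph $G$ with a designated edge set $F$, $|F|=\Omega(n^2/t^2)$, such that $\dia(G)=2(t+1)$ while $\dia(G-e)\geq 3t+2$ for every $e\in F$) is exactly the paper's strategy, but the proof stops precisely where the theorem lives: no construction is actually given, and you yourself defer the design of the gadget and the verification of (P1)--(P2) as ``the crux.'' Worse, the specific shape you commit to cannot be realized. You take $F$ to be $\Omega(k^2)$ edges between two hubs $A,B$ of size $k=\Theta(n/t)$ and propose to prove $d_{G-e}(a_i,b_j)\geq 3t+2$ for the \emph{endpoints of the deleted edge itself}. Since $|F|=\Omega(k^2)\gg k^{3/2}$, by K\H{o}v\'ari--S\'os--Tur\'an $F$ contains a rectangle $\{a_{i_1},a_{i_2}\}\times\{b_j,b_{j'}\}$. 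Delete $e_1=(a_{i_1},b_j)$: the one-hop detour through the surviving edge $(a_{i_1},b_{j'})$ gives $d_{G-e_1}(a_{i_1},b_j)\leq 1+d_{G-e_1}(b_{j'},b_j)$, so your requirement forces $d_{G-e_1}(b_{j'},b_j)\geq 3t+1>2t+2\geq d_G(b_{j'},b_j)$ for $t\geq 2$; hence every shortest $b_{j'}\to b_j$ path in $G$ must use $e_1$, and since the head of $e_1$ is the path's terminal vertex, $e_1$ is its last edge. The same argument for $e_2=(a_{i_2},b_j)$ says the last edge is $e_2$ --- a contradiction. So no return gadget, however sparse or high-girth, can make (P2) hold in the form you state it; the ``sparse incidence relation'' direction you gesture at is not just unfinished, it is aimed at an impossible target.

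The paper resolves the tension you correctly identify in a different, and simpler, way: it decouples the forced edges from the hard pairs by attaching pendant directed paths. The middle is a \emph{complete} bipartite layer $B\times C$ with $|B|=|C|=N$ (these are the forced edges); before each $b_i$ hangs a directed in-path $a_{1,i}\to a_{2,i}\to\cdots\to a_{t,i}\to b_i$ of length $t$, after each $c_j$ hangs an out-path $c_j\to d_{1,j}\to\cdots\to d_{t,j}$, and every vertex of $B\cup C\cup D$ has a return edge to \emph{every path start} $a_{1,i}$. The hard pair for the edge $(b_i,c_j)$ is $(a_{1,i},d_{t,j})$, not $(b_i,c_j)$: with the edge its distance is $2t+1\leq\dia(G)=2(t+1)$, and without it any route must take a return edge to some $a_{1,i'}$ and re-walk an entire in-path before it can enter $C$, giving length exactly $3t+2$. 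Because the interior path vertices have in-degree $1$ and returns land only at the starts $a_{1,i}$, keeping everything else maximally dense does no harm to (P2), while density is what makes (P1) easy to check; with $n=N(2t+2)$ this yields $N^2=\Omega(n^2/t^2)$ forced edges. This pendant-path idea is the missing ingredient in your outline.
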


\begin{proof}
Let $N$ be such that $n=N(2t+2)$. The construction of $G$ is as follows.
The vertex set $V(G)$ comprises of four sets $A,B,C,D$ respectively of size $tN$, $N$, $N$, and $tN$. 
The vertices in $A$ are denoted by $a_{k,i}$ where $k\in[1,t]$ and $i\in[1,N]$.
The vertices in $B$ are denoted by $b_i$ where $i\in[1,N]$.
The vertices in $C$ are denoted by $c_j$ where $j\in[1,N]$.
The vertices in $D$ are denoted by $d_{k,j}$ where $k\in[1,t]$ and $j\in[1,N]$.
The edges in $G$ are as follows:
(i) each vertex $a_{k,i} \in A$ has one out-going edge, namely $(a_{k,i},a_{k+1,i})$ if $k<t$
and $(a_{k,i},b_i)$ when $k=t$;
(ii) between sets $B$ and $C$ there is a complete bipartite graph, that is, each $(b_i,c_j)$
is an edge;
(iii) each vertex $d_{k,j} \in D$ has one incoming edge, namely $(d_{k-1,j},d_{k,j})$ if $k>1$
and $(c_j,d_{k,j})$ when $k=1$;
(iv) for each $x\in B\cup C\cup D$ and each $a_{1,i}\in A$, there is an edge $(x,a_{1,i})$  in $G$.
See Figure~\ref{Figure:lb_3by2}.\\[-3mm]

\begin{figure}[bth]
\centering
\includegraphics[scale=.4]{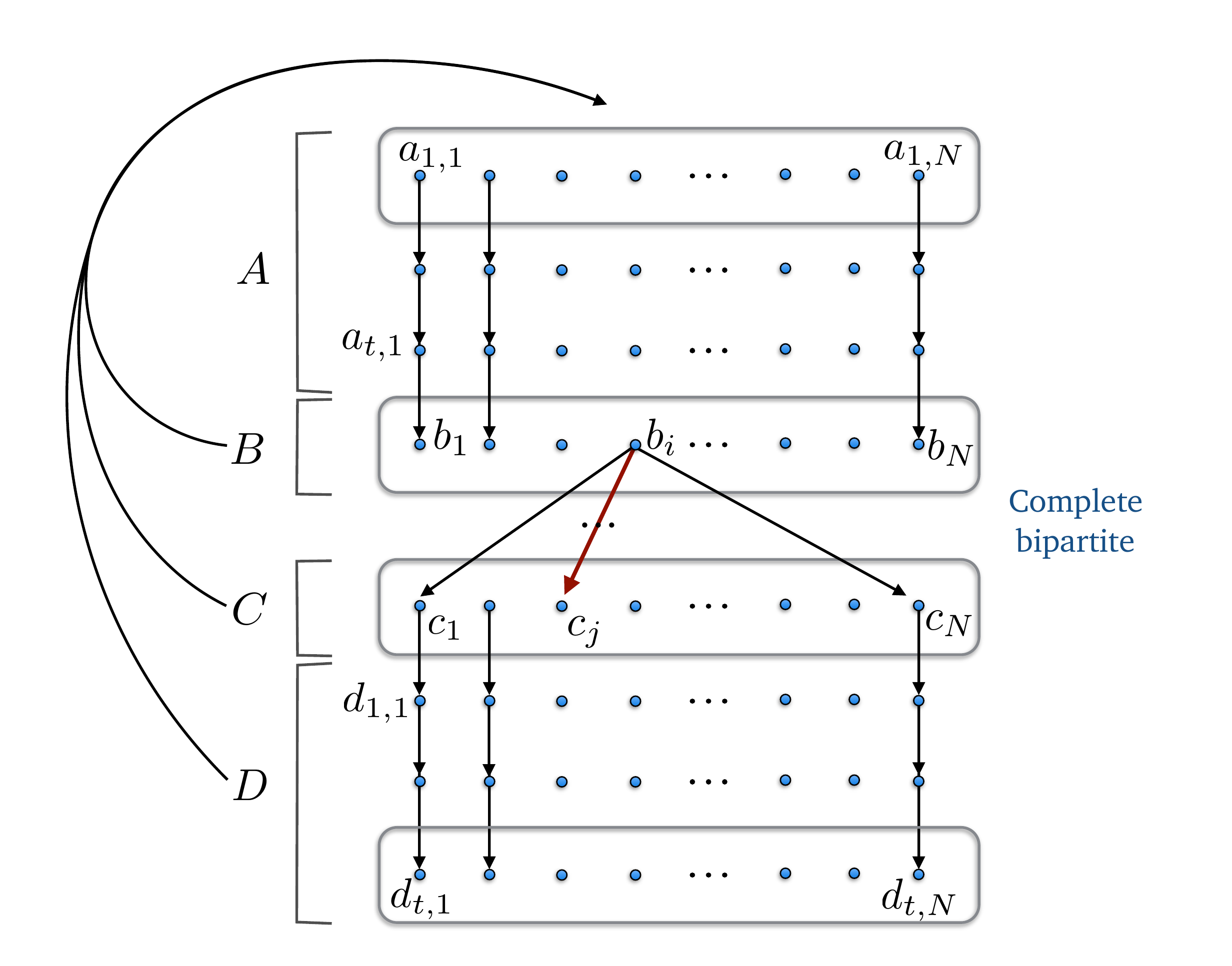}
\caption{
Illustration of lower bound for diameter spanners with 1.5 stretch.
}
\label{Figure:lb_3by2}
\end{figure}

\noindent
We will show that the diameter of $G$ is at most $2(t+1)$.
\begin{itemize}
\item In order to focus on vertex pairs in product $(B\cup C\cup D)\times V(G)$, consider any vertex $x\in B\cup C\cup D$. 
\begin{itemize}
\item For $a_{k,i}\in A$, $(x,a_{1,i},a_{2,i},\ldots,a_{k,i})$ is a path  of length at most $2t+2$.
\item For $b_{i}\in B$, $(x,a_{1,i},a_{2,i},\ldots,a_{t,i},b_i)$ is a path of length at most $2t+2$.
\item For $c_{j}\in C$, $(x,a_{1,j},a_{2,j},\ldots,a_{t,j},b_j,c_j)$ is a path of length at most $2t+2$.
\item For $d_{k,j}\in D$, $(x,a_{1,j},a_{2,j},\ldots,a_{t,j},b_j,c_j,d_{1,j},d_{2,j},\ldots,d_{k,j})$ is a path of length at most $2t+2$.
\end{itemize}
\item Next consider any vertex $a_{k,i}\in A$.
\begin{itemize}
\item For $a_{k',i'}\in A$, $(a_{k,i},a_{k+1,i},\ldots,a_{t,i},b_i,a_{1,i'},a_{2,i'},\ldots,a_{k',i'})$ is a path of length at most $2t+2$.
\item For $b_{i'}\in B$, $(a_{k,i},a_{k+1,i},\ldots,a_{t,i},b_i,a_{1,i'},a_{2,i'},\ldots,a_{t,i'},b_{i'})$ is a path of length at most $2t+2$.
\item For $c_{j}\in C$, $(a_{k,i},a_{k+1,i},\ldots,a_{t,i},b_i,c_j)$ is a path of length at most $2t+2$.
\item For $d_{k,j}\in D$, $(a_{k,i},a_{k+1,i},\ldots,a_{t,i},b_i,c_j,d_{1,j},d_{2,j},\ldots,d_{k,j})$ is a path of length at most $2t+2$.
\end{itemize}

\end{itemize}

To verify that the  diameter of $G$ is exactly $2t+2$, observe that the distance between vertices
$d_{t,1}$ and $d_{t,N}$ in $G$ is equal to $2t+2$.

Observe that on removal of any edge $(b_i,c_j)\in (B\times C)$ from $G$, 
the distance between $a_{1,i}$ and $d_{t,j}$ becomes $3t+2$; this is because
any shortest path from $a_{1,i}$ to $d_{t,j}$ in $G\setminus (b_i,c_j)$ has form
$(a_{1,i},a_{2,i},\ldots,a_{t,i},b_i,a_{1,i'},a_{2,i'},\ldots,a_{t,i'},b_{i'},c_j,$ $d_{1,j},d_{2,j},\ldots,d_{t,j})$,
were $i'\neq i$.
Therefore, any subgraph $H$ of $G$ whose diameter is at most $3t+1$ must contain all the edges lying in 
the set $B\times C$,
that is, $H$ should have at least $N^2=n^2/(2t+2)^2= \Omega(n^2/t^2)$ edges.
This completes our proof.
\end{proof}

\begin{corollary}[\normalfont{Reminder of Theorem~\ref{theorem:3/2-diam-up-lb}(b)}]
For every $n$ and every $D\leq (n^{1/4})$, there exists a unweighted directed graph $G$ 
with $\Theta(n)$ vertices and diameter $D$, such that any subgraph $H$ of $G$ that satisfies
$\dia(H)$ is strictly less than $1.5\hspace{1mm}\dia(G)-1$ contains at least $\Omega(n^{1.5})$ edges.
\end{corollary}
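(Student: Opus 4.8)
The plan is to read the statement off the theorem just established by a suitable choice of its free parameter $t$, rescaled so that the construction has exactly $\Theta(n)$ vertices. First I would handle the case of even $D$: set $t=D/2-1$. The proof above in fact produces, for every integer $N\ge 1$, a graph on exactly $N\cdot 2(t+1)=ND$ vertices, of diameter $D=2(t+1)$, in which any subgraph $H$ with $\dia(H)\le 3t+1$ has $\Omega\big((ND)^2/t^2\big)$ edges. Taking $N=\ceil{n/D}$ (legitimate since $D\le n^{1/4}\le n$) gives a graph on $ND=\Theta(n)$ vertices. Since $1.5\,\dia(G)-1=3t+2$ is an integer and distances are integral, the hypothesis ``$\dia(H)$ strictly less than $1.5\,\dia(G)-1$'' is exactly $\dia(H)\le 3t+1$, so such an $H$ has $\Omega\big((ND)^2/t^2\big)=\Omega(n^2/D^2)$ edges (as $t=\Theta(D)$). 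Finally, $n^2/D^2=\Omega(n^{3/2})$ precisely when $D^2\le n^{1/2}$, i.e.\ $D\le n^{1/4}$, which is exactly the hypothesis.

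For odd $D$ the only change I would make is to let the two ``chains'' $A$ and $D$ of the construction be unequal: give $A$ exactly $\ceil{D/2}-1$ layers and $D$ exactly $\floor{D/2}-1$ layers, keeping the complete bipartite graph between $B$ and $C$ and all back-edges into the first layer of $A$. A short check shows this graph is strongly connected, has $\Theta(n)$ vertices once $N=\ceil{n/D}$ columns are used, and has diameter exactly $D$; moreover deleting any edge $(b_i,c_j)\in B\times C$ forces the shortest path from $a_{1,i}$ to the deepest $D$-vertex of column $j$ to take a two-column detour of total length $(3D+1)/2$, which strictly exceeds $1.5\,D-1$. Hence once more every subgraph $H$ with $\dia(H)<1.5\,\dia(G)-1$ must contain all of $B\times C$, giving $\Omega(N^2)=\Omega(n^2/D^2)=\Omega(n^{3/2})$ edges. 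The finitely many tiny cases $D\le 4$ are disposed of by hand (for instance, for small even $D$ one picks a non-complete graph of diameter $D$, which admits no subgraph of the required diameter at all).

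I do not expect a real obstacle here: the substance lives in the preceding theorem, and what remains is parameter bookkeeping. The one point that needs care is the parity matching — confirming that the integer threshold $\dia(H)<1.5\,\dia(G)-1$ lines up with the length of the forced detour for both parities of $D$ (which is precisely what dictates the asymmetric chain lengths in the odd case), together with the elementary inequality $n^2/D^2=\Omega(n^{3/2})\iff D\le n^{1/4}$ and the routine verification that the chosen chain lengths stay at least $1$.
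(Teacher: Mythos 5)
Your derivation is correct and is essentially the paper's own: the corollary is read off the preceding lower-bound theorem by the same parameter bookkeeping (take $t=D/2-1$, note $1.5\,\dia(G)-1=3t+2$ so the strict inequality becomes $\dia(H)\le 3t+1$, and $n^2/D^2=\Omega(n^{1.5})$ exactly when $D\le n^{1/4}$), with the paper stating the corollary without further argument, implicitly for the even diameters its construction produces, while you additionally patch odd and tiny $D$. One nitpick: for odd $D=2s+1$ your forced detour has length $3s+1=(3D-1)/2$, not $(3D+1)/2$, but since this still strictly exceeds $1.5D-1=3s+0.5$, the conclusion is unaffected.
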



\subsection{Lower Bound for $2$-Eccentricity Spanner}

\begin{theorem}
For every $n$ and every $t$, there exists an $n$-vertex directed graph $G=(V,E)$ and a
subset $S$ of $V$ of size $\Theta(n/t)$ having eccentricities $(t+1)$ in $G$, such that
any subgraph $H=(V,E'\subseteq E)$ of $G$ satisfying $\outecc_H(s)< 2\outecc_G(s)$, for each $s\in S$,
contains $\Omega(n^2/t^2)$ edges.
\end{theorem}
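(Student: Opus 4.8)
The plan is to reuse the graph $G$ from the previous theorem (the $1.5$-diameter lower bound) essentially unchanged, and to take the critical subset to be $S:=B=\{b_1,\dots,b_N\}$, so that $|S|=N=n/(2t+2)=\Theta(n/t)$. Thus no new construction is needed; the work lies entirely in two short claims about distances in $G$ and in $G$ with one edge removed.

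First I would compute the out-eccentricities of the vertices in $B$. Fixing $b_i$: the path $b_i\to a_{1,i'}\to\cdots\to a_{k,i'}$ gives $d_G(b_i,a_{k,i'})\le k\le t$; trivially $d_G(b_i,c_j)=1$; the path $b_i\to c_j\to d_{1,j}\to\cdots\to d_{k,j}$ gives $d_G(b_i,d_{k,j})\le k+1\le t+1$; and for $i'\neq i$ one has $d_G(b_i,b_{i'})=t+1$, because the unique in-edge of $b_{i'}$ is $(a_{t,i'},b_{i'})$, the unique in-edge of $a_{k,i'}$ (for $k\ge 2$) is $(a_{k-1,i'},a_{k,i'})$, and $b_i$ reaches $a_{1,i'}$ in one step. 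Hence $\outecc_G(b_i)=t+1$ for all $i$, so $S$ has the eccentricity stated.

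The heart of the argument is a robustness claim parallel to the diameter lower bound: for every edge $(b_i,c_j)\in B\times C$, deleting it makes $\outecc_{G\setminus(b_i,c_j)}(b_i)=2t+2=2\,\outecc_G(b_i)$, witnessed by the vertex $d_{t,j}$. I would lower-bound $d_{G\setminus(b_i,c_j)}(b_i,d_{t,j})$ via the forced-bottleneck structure exactly as before: any walk from $b_i$ to $d_{t,j}$ must use the unique entry edge $(c_j,d_{1,j})$ and hence reach $c_j$ first; with $(b_i,c_j)$ gone it must reach $c_j$ through an edge $(b_\ell,c_j)$ with $\ell\neq i$, hence reach some $b_\ell$ first; reaching $b_\ell$ is forced through $a_{1,\ell}\to\cdots\to a_{t,\ell}\to b_\ell$, costing at least $t+1$ steps from $b_i$; and from $c_j$ to $d_{t,j}$ costs a further $t$ steps. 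Hence every such walk has length at least $(t+1)+1+t=2t+2$, and this is attained, so the distance equals $2t+2$. I expect this step to be the main (and essentially only) obstacle, although it is a direct adaptation of the diameter-case reasoning.

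Finally I would conclude the edge bound: if $H=(V,E'\subseteq E)$ satisfies $\outecc_H(s)<2\,\outecc_G(s)=2t+2$ for every $s\in S$, i.e.\ $\outecc_H(b_i)\le 2t+1$, then $H$ cannot omit any edge $(b_i,c_j)\in B\times C$, since omitting it would force $\outecc_H(b_i)\ge d_{G\setminus(b_i,c_j)}(b_i,d_{t,j})=2t+2$. Therefore $E'\supseteq B\times C$, giving $|E'|\ge N^2=\big(n/(2t+2)\big)^2=\Omega(n^2/t^2)$.
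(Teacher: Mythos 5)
Your proposal is correct and follows essentially the same route as the paper: the paper also reuses the graph from the $1.5$-diameter lower bound, takes $S=B$, notes $\outecc_G(b_i)=t+1$, and observes that deleting any edge $(b_i,c_j)$ forces $d(b_i,d_{t,j})=2(t+1)$, so every subgraph with $\outecc_H(b_i)<2\,\outecc_G(b_i)$ must keep all $N^2=\Omega(n^2/t^2)$ edges of $B\times C$. Your write-up merely fills in the eccentricity computation and the bottleneck argument that the paper states without detail.
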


\begin{proof}
The construction of graph $G$ remains same as that in Figure~\ref{Figure:lb_3by2}.
So $N$ satisfies $n=N(2t+2)$ vertices. Now the set $S$ is the set $B$ in the graph $G$.
Observe that eccentricity of each $b_i\in B$ is exactly $t+1$. Now on removing
any edge $e=(B_i,c_j)\in (B \times C)$ from $G$, the distance between $b_i$
and $d_{t,j}$ becomes $2(t+1)$. This shows that any subgraph $H$ of $G$
satisfying $\outecc_H(s)< 2\outecc_G(s)$, for each $s\in S$, must contain 
all the edges of the set $B\times C$, and hence must have  $\Omega(n^2/t^2)$ edges.
\end{proof}

\begin{corollary}[\normalfont{Reminder of Theorem~\ref{theorem:2-ecc-up-lb}(b)}]
For every $n$ and every $R$, there exists a unweighted directed graph $G$ with $\Theta(n)$ vertices and radius $R$, 
such that any subgraph $H$ of $G$ that is a $(2-\epsilon)$-eccentricity spanner
of $H$, for any $\epsilon>0$, contains at least $\Omega(n^2/R^2)$ edges.
\end{corollary}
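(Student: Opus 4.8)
The plan is to obtain the corollary as an immediate instantiation of the preceding theorem. Set $t := R-1$ (we may assume $R \ge 2$; for $R = 1$ the statement is easily verified directly, and for $R = \Omega(\sqrt n)$ the claimed bound $\Omega(n^2/R^2)$ is $O(1)$ and holds vacuously). Take the graph $G$ of Figure~\ref{Figure:lb_3by2} built with this value of $t$, and choose the parameter $N = \Theta(n/R)$ so that $G$ has $N(2t+2) = 2NR = \Theta(n)$ vertices; the set $S$ in the statement will be the set $B$, which has size $N = \Theta(n/t) = \Theta(n/R)$.

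First I would check that $\rad(G) = t+1 = R$. The eccentricity computation in the proof of the preceding theorem already gives $\outecc_G(b_i) = t+1$ for every $b_i \in B$, so it remains to verify that no vertex of $A \cup C \cup D$ has out-eccentricity below $t+1$. This is a short case check: from any $a_{k,i}$ the vertex $d_{t,j}$ lies at distance $> t$ (one must walk the $A$-chain, cross an edge $b_i \to c_j$, and then descend the $D$-chain); from any $c_j$ the vertex $b_i$ lies at distance $t+1$; and from any $d_{k,j}$ reaching some $d_{\cdot,j'}$ with $j' \ne j$ requires a full loop through $A$, $B$, $C$. Hence the minimum out-eccentricity is attained on $B$ and equals $t+1 = R$.

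Now let $H$ be any $(2-\epsilon)$-eccentricity spanner of $G$ with $\epsilon > 0$. For each $b_i \in S = B$ we have $\outecc_H(b_i) \le (2-\epsilon)\,\outecc_G(b_i) = (2-\epsilon)R$, and since eccentricities are integers and $R \ge 1$ this forces $\outecc_H(b_i) < 2R = 2\,\outecc_G(b_i)$. I would then invoke the structural fact from the proof of the preceding theorem: deleting any single edge $(b_i,c_j) \in B \times C$ raises the distance from $b_i$ to $d_{t,j}$ to $2(t+1) = 2R$, because in $G \setminus (b_i,c_j)$ the only $b_i$-to-$d_{t,j}$ path must loop through some $b_{i'}$ with $i' \ne i$ before re-entering $c_j$ and the $D$-chain. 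Consequently $H$ must contain every edge of $B \times C$, i.e.\ at least $N^2 = \Omega((n/R)^2) = \Omega(n^2/R^2)$ edges. There is no real obstacle here beyond the radius verification of the second paragraph and the bookkeeping of constants: the corollary is essentially the re-parametrization $t = R-1$ of the theorem just proved.
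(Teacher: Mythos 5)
Your proposal is correct and follows the same route as the paper: instantiate the preceding lower-bound theorem on the graph of Figure~\ref{Figure:lb_3by2} with $S=B$ and $t=R-1$, using the fact that deleting any edge $(b_i,c_j)$ doubles $\outecc(b_i)$, so every $(2-\epsilon)$-eccentricity spanner must keep all $N^2=\Omega(n^2/R^2)$ edges of $B\times C$. Your extra verification that the radius is exactly $t+1$ (and the treatment of degenerate $R$) merely fills in details the paper leaves implicit, so the arguments are essentially identical.
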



\subsection{Lower Bound for $5/3$-Diameter Spanner}

We present below our lower-bound construction for $5/3$-diameter spanner.

\begin{theorem}
For every $n$ and every $t$, there exists an $n$-vertex directed graph $G=(V,E)$ with diameter $3(t+1)$ 
such that any subgraph $H=(V,E'\subseteq E)$ of $G$ with 
$\dia(H)\leq 5t+1$ contains $\Omega(n^{1.5}/t^{1.5})$ edges.
\end{theorem}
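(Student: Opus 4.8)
The plan is to mimic the $1.5$-stretch lower bound construction, but to introduce a second ``bottleneck layer'' so that a $5/3$-stretch spanner is forced to contain a bipartite-graph-like dense structure of size $\Omega(n^{1.5}/t^{1.5})$. In the $1.5$-case, a single complete bipartite graph $B\times C$ sat between two directed ``paths of length $t$'' ($A$ feeding into $B$, and $C$ feeding into $D$); removing a single edge of $B\times C$ created a long detour (through another $A$-path) that pushed the diameter to essentially $3t$, forcing all of $B\times C$ into any good spanner. For the $5/3$ bound the key is that a $5/3$-stretch spanner on a graph of diameter $3(t+1)$ may have diameter as large as $5t+1$, i.e. it tolerates a detour of length roughly $2t$ rather than $t$. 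So I would build a graph in which the critical edges form a structure whose \emph{sparsest} detour-free certificate must still be dense, but where single-edge removal only costs an extra $\sim 2t$, not $\sim 3t$.

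Concretely, first I would set $n = N(\text{const}\cdot t)$ and split $V$ into: two ``in-fan'' families of short directed paths of length $t$ (analogues of $A$), a middle region built from three groups $B$, $C_1$, $C_2$ of sizes $\Theta(\sqrt{N})$ each — wait, I need the dense part to have $\Omega(n^{1.5}/t^{1.5})$ edges, so more precisely I would take the dense middle to be a tripartite ``routing gadget'' on $\Theta(N^{?})$ vertices — and then two ``out-fan'' families of length-$t$ paths (analogue of $D$). The second step is to arrange the middle so that for each target $d$ in the out-region there is a \emph{unique} short route through a particular critical edge (or short edge-path) of length $O(1)$, while any route avoiding that edge must re-enter an $A$-path, traverse it ($t$ steps), exit, and re-traverse a short path, costing $+2t$ in total — landing at exactly $5t+O(1)$. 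The third step is the counting: show that the set of critical edges has size $N^{3/2}$, and since $N = \Theta(n/t)$, this is $\Theta(n^{3/2}/t^{3/2})$, giving the bound. Finally I would verify $\dia(G)=3(t+1)$ by exhibiting explicit paths of length $\le 3t+3$ between every ordered pair (routing: source $\to$ nearest $A$-path-start $\to$ through middle $\to$ down an out-path), and exhibiting one pair at distance exactly $3t+3$.

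The \textbf{main obstacle} I expect is designing the middle gadget so that it simultaneously (a) has $\Omega(N^{3/2})$ ``irreplaceable'' edges — this is where a bipartite-like incidence structure of a point set and a line set, or a union-of-matchings construction à la the $n^{4/3}$ additive-spanner lower bounds, is needed, since a single complete bipartite graph only yields $\Theta(N^2)$ edges on $\Theta(N)$ vertices, i.e. $\Theta(n^2/t^2)$, not the exponent $3/2$ we want; and (b) keeps \emph{all} short distances within the gadget at $O(1)$ so that the base diameter stays $3(t+1)$ and the detour penalty is controlled. Getting the exponent $3/2$ rather than $2$ forces the gadget to be ``locally sparse but globally rigid'': I would look to replace $B\times C$ by a structure where each critical edge lies on a unique short path between a designated (source-side, sink-side) pair, so that the number of critical edges equals the number of such pairs, and tune the gadget sizes (three layers of size $\Theta(N^{1/2})$, $\Theta(N)$, $\Theta(N^{1/2})$ giving $\Theta(N^{3/2})$ middle edges, say) so the arithmetic closes. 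The rest — verifying the diameter and the $+2t$ detour lower bound for each removed critical edge — should be a routine case analysis analogous to the $1.5$-stretch proof, only with one extra layer of ``bounce'' through the $A$-paths.
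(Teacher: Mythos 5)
Your plan correctly diagnoses why the $1.5$-stretch construction cannot be reused verbatim (a complete bipartite middle forces only $\Theta(n^2/t^2)$ edges) and correctly predicts the shape of the argument (detour cost of roughly $2t$ on top of the $3t$ diameter, landing just above $5t$). But the proposal stops exactly at the point where the proof actually lives: the ``middle gadget'' is never constructed, and you yourself flag it as the main obstacle. The guess you do offer --- three layers of sizes $\Theta(\sqrt N), \Theta(N), \Theta(\sqrt N)$ with each critical edge lying on a unique short path between a designated source--sink pair --- comes with no argument that such an incidence structure exists while keeping the base diameter at $3(t+1)$ and making every one of the $N^{3/2}$ edges irreplaceable; as written this is a statement of the desired conclusion, not a construction. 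So there is a genuine gap: the key combinatorial idea is missing.

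For comparison, the paper's route is different in two ways you may find instructive. First, instead of layers of unequal sizes, it uses three equal groups $A,B,C$, each consisting of $N^2$ directed paths of length $t$ indexed by a two-dimensional grid coordinate $(i,j)$; the transition edges from $A$ to $B$ preserve the $j$-coordinate (and let $i$ change freely), while those from $B$ to $C$ preserve the $i$-coordinate, with additional ``return'' edges back to the starts of $A$-paths to keep the diameter at $3(t+1)$. Second, the counting is not ``one critical edge per designated pair'': there are $\Theta(N^4)$ source--sink pairs $(a_{0,i_x,j_x},\,c_{t,i_y,j_y})$ with $i_x\neq i_y$, $j_x\neq j_y$, and for each such pair one shows that deleting three specific edges (the two transition edges out of $a_{t,i_x,j_x}$ toward column $i_y$, and the edge $(b_{t,i_y,j_x},c_{0,i_y,j_y})$) pushes the distance to $5t+4$; hence any subgraph of diameter at most $5t+3$ must keep one of these three edges per pair. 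Since each such edge is determined by three of the four indices, it can serve at most $N$ pairs, and a covering argument gives $\Omega(N^4/N)=\Omega(N^3)=\Omega\bigl((n/t)^{1.5}\bigr)$ edges. If you want to salvage your unique-short-path framework, you would still need to invent an incidence structure with the rigidity property you describe; the grid-coordinate trick above is precisely such a structure, and the relaxation from ``unique edge per pair'' to ``one of $O(1)$ edges per pair, each edge covering few pairs'' is what makes the verification a routine case analysis rather than a delicate uniqueness proof.
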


\begin{proof}
Let $N$ be such that $n=N^2(3t+3)$. The construction of $G$ is as follows.
The vertex set $V(G)$ comprises of four sets $A,B,C$ respectively of size $N^2(t+1)$. 
The vertices in $A$ are denoted by $a_{k,i,j}$ where $k\in[0,t]$ and $i,j\in[1,N]$.
The vertices in $B$ are denoted by $b_{k,i,j}$ where $k\in[0,t]$ and $i,j\in[1,N]$.
The vertices in $C$ are denoted by $c_{k,i,j}$ where $k\in[0,t]$ and $i,j\in[1,N]$.
The edges in $G$ are as follows:
(i) each vertex $a_{k,i,j} \in A$ has an out-going edge, namely $(a_{k,i,j},a_{k+1,i,j})$ if $k<t$;
(ii) each vertex $b_{k,i,j} \in B$ has an out-going edge, namely $(b_{k,i,j},b_{k+1,i,j})$ if $k<t$;
(iii) each vertex $c_{k,i,j} \in C$ has an out-going edge, namely $(c_{k,i,j},c_{k+1,i,j})$ if $k<t$;
(iv) each vertex $a_{t,i,j} \in A$ has an out-going edge to $b_{0,i',j'}\in B$ whenever $j=j'$;
(v) each vertex $b_{t,i,j} \in B$ has an out-going edge to $c_{0,i',j'}\in C$ whenever  $i=i'$;
(vi) each vertex $a_{t,i,j} \in A$ has an out-going edge to $a_{0,i',j'}$ whenever $j=j'$;
(vii) each vertex $b_{t,i,j} \in B$ has an out-going edge to $b_{0,i',j'}$ whenever $i=i'$ or $j=j'$;
(viii) each vertex $x \in (\cup_{i,j\in [1,N]} b_{t,i,j}) \cup C$ has an out-going edge to $a_{0,i',j'}$ whenever $i=i'$ or $j=j'$.\\

\noindent
We will show that the diameter of $G$ is at most $3(t+1)$ by doing a case by case analysis.
\begin{itemize}

\item Consider pair $(a_{k,i,j},a_{k',i',j'})\in A\times A$. 
\begin{itemize}
\item $(a_{k,i,j},a_{k+1,i,j},\ldots,a_{t,i,j},b_{0,i',j},\ldots,b_{t,i',j},a_{0,i',j'},\ldots,a_{k',i',j'})$ is a path  of length $\leq 3t+3$.
\end{itemize}

\item Consider pair $(a_{k,i,j},b_{k',i',j'})\in A\times B$. 
\begin{itemize}
\item $(a_{k,i,j},a_{k+1,i,j},\ldots,a_{t,i,j},b_{0,i',j},\ldots,b_{t,i',j},b_{0,i',j'},\ldots,b_{k',i',j'})$ is a path  of length $\leq 3t+3$.
\end{itemize}

\item Consider pair $(a_{k,i,j},c_{k',i',j'})\in A\times C$. 
\begin{itemize}
\item $(a_{k,i,j},a_{k+1,i,j},\ldots,a_{t,i,j},b_{0,i',j},\ldots,b_{t,i',j},c_{0,i',j'},\ldots,c_{k',i',j'})$ is a path  of length $\leq 3t+3$.
\end{itemize}\vspace{3mm}

\item Consider pair $(b_{k,i,j},a_{k',i',j'})\in B\times A$. 
\begin{itemize}
\item $(b_{k,i,j},\ldots,b_{t,i,j},c_{0,i,j'},a_{0,i',j'},\ldots,a_{k',i',j'})$ is a path  of length $\leq 2t+2$.
\end{itemize}

\item Consider pair $(b_{k,i,j},b_{k',i',j'})\in B\times B$. 
\begin{itemize}
\item $(b_{k,i,j},\ldots,b_{t,i,j},b_{0,i',j},\ldots,b_{t,i',j},b_{0,i',j'},\ldots,b_{k',i',j'})$ is a path  of length $\leq 3t+3$.
\end{itemize}

\item Consider pair $(b_{k,i,j},c_{k',i',j'})\in B\times C$. 
\begin{itemize}
\item $(b_{k,i,j},\ldots,b_{t,i,j},b_{0,i',j},\ldots,b_{t,i',j},c_{0,i',j'},\ldots,c_{k',i',j'})$ is a path  of length $\leq 3t+3$.
\end{itemize}\vspace{3mm}

\item Consider pair $(c_{k,i,j},a_{k',i',j'})\in C\times A$. 
\begin{itemize}
\item $(c_{k,i,j},a_{0,i,j'},\ldots,a_{t,i,j'},a_{0,i',j'},\ldots,a_{k',i',j'})$ is a path  of length $\leq 2t+2$.
\end{itemize}

\item Consider pair $(c_{k,i,j},b_{k',i',j'})\in C\times B$. 
\begin{itemize}
\item $(c_{k,i,j},a_{0,i,j'},\ldots,a_{t,i,j'},b_{0,i',j'},\ldots,b_{k',i',j'})$ is a path  of length $\leq 2t+2$.
\end{itemize}

\item Consider pair $(c_{k,i,j},c_{k',i',j'})\in C\times C$. 
\begin{itemize}
\item $(c_{k,i,j},a_{0,i,j'},\ldots,a_{t,i,j'},b_{0,i',j'},\ldots,b_{t,i',j'},c_{0,i',j'},\ldots,c_{k',i',j'})$ is a path  of length $\leq 3t+3$.
\end{itemize}
\end{itemize}

Consider any six indices $i_x,j_x,i_y,j_y\in [1,N]$ such that $i_x\neq i_y$ and $j_x\neq j_y$.
Let $x$ denote the vertex $a_{0,i_x,j_x}$ and $y$ denote the vertex $c_{t,i_y,j_y}$.
Let $G_0$ be the graph obtained from $G$ by removing following edges: 
$(a_{t,i_x,j_x},b_{0,i_y,j_x})$, $(a_{t,i_x,j_x},a_{0,i_y,j_x})$, 
and $(b_{t,i_y,j_x},c_{0,i_y,j_y})$.
One can verify that distance between vertices $x$ and $y$ in $G_0$ is exactly $5t+4$.

Therefore, any subgraph $H$ of $G$ whose diameter is at most $5t+3$ must contain  
for each $i_x,j_x,i_y,j_y\in [1,N]~(i_x\neq i_y,~j_x\neq j_y)$ either of the three edges: 
$(a_{t,i_x,j_x},b_{0,i_y,j_x})$, $(a_{t,i_x,j_x},a_{0,i_y,j_x})$, or $(b_{t,i_y,j_x},c_{0,i_y,j_y})$.
This shows that $H$ must contain $\Omega(N\sqrt{N})=\Omega(n^{1.5}/t^{1.5})$ edges.
\end{proof}

\begin{corollary}
 For every $n$ and every $D\leq (n^{1/11})$, there exists a unweighted directed graph $G$ 
with $\Theta(n)$ vertices and diameter $D$, such that any subgraph $H$ of $G$ for which
$\dia(H)$ is strictly less than $(5/3\hspace{1mm}\dia(G)-1)$ contains at least $\Omega(n^{4/3}D^{1/3})$ edges.
\end{corollary}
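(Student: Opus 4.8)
The plan is to read off this corollary from the preceding theorem by choosing its two free parameters appropriately and then carrying out one exponent comparison; the hypothesis $D\le n^{1/11}$ will turn out to be exactly the range in which the resulting edge bound is $\Omega(n^{4/3}D^{1/3})$.

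Recall that the preceding theorem, with chain length $t$ and block size $N$, produces a graph on $n_0=N^2\cdot 3(t+1)$ vertices of diameter $3(t+1)$ in which every subgraph $H$ meeting the diameter threshold has $\Omega(n_0^{3/2}/t^{3/2})$ edges. I would instantiate it with $t=\Theta(D)$ — say $t=\lceil D/3\rceil-1$, so the diameter is $3(t+1)=\Theta(D)$ — and with the integer $N=\Theta(\sqrt{n/D})$ so that $n_0=N^2\cdot\Theta(D)=\Theta(n)$; since $D\le n^{1/11}$ we have $n/D\ge n^{10/11}$, so $N\ge 2$ whenever $n$ exceeds an absolute constant, which is all the construction needs. (The theorem's diameter is always a multiple of $3$; to realise a target value that is $1$ or $2$ larger one pads the lower-bound gadget, which changes neither the vertex count asymptotically nor the $\Omega$-bound, so we may as well take the diameter to be $\Theta(D)$ — this is all the final statement requires, since both its edge bound and its stretch threshold are expressed in terms of $\dia(G)$.)

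For the chosen $t$, the condition ``$\dia(H)$ strictly less than $\tfrac53\dia(G)-1$'' is, with $\dia(G)=3(t+1)$ and $\dia(H)$ an integer, precisely ``$\dia(H)\le 5t+3$'' — which is exactly the threshold the proof of the preceding theorem establishes (marginally stronger than the ``$5t+1$'' appearing in the theorem statement, since the gadget-distance computation there gives $5t+4$). Hence the theorem applies to $G$ and yields that every such $H$ has $\Omega(n_0^{3/2}/t^{3/2})=\Omega(n^{3/2}/D^{3/2})$ edges.

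Finally, $n^{3/2}/D^{3/2}=\Omega(n^{4/3}D^{1/3})$ if and only if $n^{1/6}=\Omega(D^{11/6})$, i.e.\ $D=O(n^{1/11})$; in the hypothesised regime the bound $\Omega(n^{3/2}/D^{3/2})$ is therefore itself $\Omega(n^{4/3}D^{1/3})$, which completes the derivation. I expect the only part requiring any care to be this bookkeeping around the diameter: aligning the corollary's stretch threshold $\tfrac53\dia(G)-1$ with the integer quantity that the theorem's proof genuinely controls, and realising a diameter of the prescribed order while keeping the three-edge lower-bound gadget intact. Everything else is substitution of parameters and the single exponent comparison above.
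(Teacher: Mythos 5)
Your derivation is correct and is exactly how the paper intends the corollary to follow from the preceding theorem: set $t=\Theta(D)$ and $N=\Theta(\sqrt{n/D})$ so that the construction has $\Theta(n)$ vertices and diameter $\Theta(D)$, observe that the threshold $\frac53\dia(G)-1=5t+4$ corresponds to $\dia(H)\le 5t+3$, which is precisely what the gadget-distance computation ($5t+4$ after deleting the three edges) controls, and note that $n^{3/2}/D^{3/2}=\Omega(n^{4/3}D^{1/3})$ exactly when $D=O(n^{1/11})$. Your remarks about the $5t+1$ versus $5t+3$ threshold and about padding to hit diameters not divisible by $3$ are sensible bookkeeping points that the paper glosses over, and they do not change the argument.
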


\subsection*{Acknowledgements}
The authors are thankful to Liam Roditty for helpful discussions.

\bibliography{references}
\appendix
\section{Missing Proofs}
\subparagraph*{Proof of Lemma~\ref{lemma:dom-set-rand}}
Let $S_1$ be a uniformly random subset of $V$ of size $n_p$. We take $a$ to be the vertex 
of the maximum depth in $\outbfs(S_1)$. Also, $S_2$ is set to $\nin(a,n_q)$, which is computable 
in just  $O(m)$ time. By Lemma~\ref{lemma:probability}, with high probability, the set $\nin(a,n_q)$ 
contains a vertex of $S_1$, if not, then we re-sample $S_1$, and compute $a$ and $S_2$ again. The 
number of times we do re-sampling is $O(1)$ on expectation, thus the 
runtime for computing $(S_1,S_2)$ is $O(m)$ on expectation. 
Now for any positive integer $d$, if $S_1$ is not $d$-out-dominating (that is 
$\depth(\outbfs(S_1))\nleq d$), then $\inbfs(a,d)$ must have empty-intersection with $S_1$. 
This is possible only when $\inbfs(a,d)$ is a strict subset of $S_2=\nin(a,n_q)$, since the later 
set intersects with $S_1$. In such a case for any $v\in V$, 
$d_G(v,S_2)\leq d_G(v,\inbfs(a,d))\leq \max\{0,d_G(v,a)-d\}$.
On substituting $d=\lfloor{p~\inecc(a)}\rfloor$, we have that either $S_1$ is 
$\lfloor{p~\inecc(a)}\rfloor$-out-dominating or $S_2$ is  $\lceil q~\inecc(a)\rceil$-in%
-dominating.

\end{document}